\newif\iffullversion
\fullversionfalse

\documentclass[submission,copyright,creativecommons]{eptcs}

\providecommand{\event}{GandALF 2026}

\usepackage{mathtools}
\usepackage{hyperref}
\usepackage{etoolbox}
\usepackage{amssymb}
\usepackage{amsthm}
\usepackage{thmtools} 
\usepackage{thm-restate}
\newtheorem{definition}{Definition}
\newtheorem{lemma}{Lemma}

\newtheorem{remark}{Remark}
\newtheorem{proposition}{Proposition}
\newtheorem{example}{Example}
\newtheorem{corollary}{Corollary}

\makeatletter
\patchcmd{\@bibitem}{\ignorespaces}{\label{bib-#1}\ignorespaces}{}{}
\makeatother

\DeclareMathOperator{\ltlX}{\mathsf{X}}
\DeclareMathOperator{\ltlU}{\mathsf{U}}
\DeclareMathOperator{\ltlW}{\mathsf{W}}
\DeclareMathOperator{\ltlR}{\mathsf{R}}
\DeclareMathOperator{\ltlM}{\mathsf{M}}
\DeclareMathOperator{\ltlS}{\mathsf{S}}
\DeclareMathOperator{\ltlWS}{\mathsf{\tilde{S}}}
\DeclareMathOperator{\ltlWY}{\mathsf{\tilde{Y}}}
\DeclareMathOperator{\ltlNS}{\mathsf{\tilde{B}}}
\DeclareMathOperator{\ltlG}{\mathsf{G}}
\DeclareMathOperator{\ltlF}{\mathsf{F}}
\DeclareMathOperator{\ltlY}{\mathsf{Y}}
\DeclareMathOperator{\ltlO}{\mathsf{O}}
\DeclareMathOperator{\ltlH}{\mathsf{H}}
\DeclareMathOperator{\ltlNWS}{\mathsf{B}}
\DeclareMathOperator{\op}{\mathsf{op}}

\newcommand{\rw}[1]{\langle #1 \rangle}
\newcommand{\wc}[0]{\mathsf{wc}}
\newcommand{\cwc}[0]{\delta_\mathcal{H}}
\newcommand{\sff}[0]{\mathsf{sf}}
\newcommand{\psf}[0]{\mathsf{psf}}
\newcommand{\af}[1][\sigma]{\mathsf{af}_{#1}}
\newcommand{\afl}[1][\sigma]{\mathsf{afl}_{#1}}
\newcommand{\afc}[1][\sigma]{\mathsf{afl}_{#1}}

\newcommand{\G}[3]{\mathcal{G}_{#2, #3}^{#1}}
\newcommand{\F}[3]{\mathcal{F}_{#2, #3}^{#1}}
\newcommand{\GF}[3]{\mathcal{GF}_{#2, #3}^{#1}}
\newcommand{\FG}[3]{\mathcal{FG}_{#2, #3}^{#1}}
\newcommand{\Var}{\mathsf{Var}}
\newcommand{\placeholder}{\makebox[1ex]{\textbf{$\cdot$}}}
\newcommand{\entcon}[2]{\mathcal{C}_{#1, #2}}
\newcommand{\entseq}[2]{\vec{\mathcal{C}}_{#1, #2}}
\newcommand{\compseq}[2]{\circ \, \entseq{#1}{#2}}
\newcommand{\entconw}[3]{\mathcal{C}^{#3}_{#1, #2}}
\newcommand{\entseqw}[3]{\vec{\mathcal{C}}^{#3}_{#1, #2}}
\newcommand{\compseqw}[3]{\circ \, \entseqw{#1}{#2}{#3}}

\def\titlerunning{New Translations From LTL with Past to LDBA and DPA}
\def\authorrunning{D. Lidell, N. Piterman}

\begin{document}
    
\title{New Translations From LTL with Past to LDBA and DPA%
\thanks{Supported in part by the Swedish research council (VR) project no. 2025-05419 and the Wallenberg AI, Autonomous Systems and Software Program (WASP) funded by the Knut and Alice Wallenberg Foundation.}
}
\author{David Lidell \qquad \qquad Nir Piterman\institute{University of Gothenburg and Chalmers University of Technology}}
%\author{David Lidell\orcidID{0000-0003-2760-339X}
%\and
%Nir Piterman\orcidID{0000-0002-8242-5357}
%}
%\institute{University of Somewhere}
%\institute{University of Gothenburg and Chalmers University of Technology\\
%\email{\{lidelld,piterman\}@chalmers.se}}

\maketitle

\begin{abstract}
We present asymptotically optimal translations from linear temporal logic with past initially to limit-deterministic Büchi automata and then onwards to deterministic parity automata. In contrast to previously known translations involving past, the ones presented here are incremental: states and transitions can be constructed as needed on-the-fly, and they do not involve backtracking in order to correct for temporal inconsistencies. The past is naturally encoded in the formula by alternating between weak and strong versions of past operators. To handle the difficulties that arise from the alternation of least and greatest fix-point operators (such as between the ``until'' operator and its weak counterpart), we rewrite the given formula based on the infinitary behavior of a given word with respect to subformulae rooted with such operators.

%\keywords{Linear temporal logic, Limit-deterministic automata, Non-deterministic automata}
\end{abstract}

\section{Introduction}
Linear temporal logic (\textsf{LTL}), introduced in the 1970s by Pnueli~\cite{pnueli_ltl}, is a logic of particular interest to the computer science community. This is both because it is suitable for expressing how the state of a system evolves over time and because of the perceived balance between its expressivity and complexity. Translations from \textsf{LTL} to $\omega$-automata have been extensively studied, not only because they provide a deeper understanding of the semantics of \textsf{LTL}, but also for practical reasons. Common use cases of $\omega$-automata include model checking and reactive synthesis, with the desired type of automaton depending on the situation.

In the context of model checking, non-deterministic $\omega$-automata combined with program models in the form of transition systems allow for decidable and efficient algorithms for verifying properties with respect to the model. For models of systems that exhibit probabilistic behavior, more determinism is required from the automaton, where limit-deterministic automata become useful. 
The latter make only finitely many non-deterministic choices and along every accepting run become deterministic.
In both cases an ``on-the-fly'' construction is desirable, where states of the automaton are incrementally built as needed for verification, potentially eliminating the need to construct the full state space.

One of the most successful approaches for both the non-deterministic and deterministic construction mentioned above has been \emph{semantic} \cite{Duret-LutzRCRAS22,strix}.
In this approach, each state of the respective automaton is a formula giving semantic information about the requirements for the future in a given state.
Recently, a semi-symbolic construction of automata for \textsf{LTLf} \cite{DuretLutzZPGV25} used this approach to win the \textsf{LTLf} track in the last synthesis competition.
It is also used in the synthesis from \textsf{LTL} to combine machine learning into the strategy search \cite{KretinskyMPZ25}, which won the \textsf{LTL} track of the same competition. 

Most existing translations are only capable of handling future temporal operators. This is perhaps based on the fact that linear temporal logic with past operators (\textsf{pLTL}) is no more expressive than \textsf{LTL}, and that Gabbay~\cite{Gabbay87} provided an algorithm for translating a given \textsf{pLTL} formula into an initially equivalent formula without past operators. Moreover, many properties are more naturally and elegantly expressed using past operators, and Markey~\cite{markey_succinct} showed that \textsf{pLTL} is exponentially more succinct than \textsf{LTL}.
Interestingly, the usage of \textsf{LTLf} in planning also led to interest in the incorporation of pure-past \textsf{LTL} \cite{AminofGRV25,BonassiGFFGS25} and considering pure-past \textsf{LTL} also for MDP verification \cite{Giacomo0SWY25}.

In Azzopardi et al.~\cite{azzopardi_et_al:LIPIcs.MFCS.2024.13} an asymptotically optimal semantic translation from \textsf{pLTL} to deterministic Rabin automata is presented. The translation is an extension of a similar result for \textsf{LTL} by Esparza et al.~\cite{esparza_unified_translation}, who also present a variation thereof for the construction of limit-deterministic Büchi automata. In this paper we show how to adapt this variation to handle past operators, combining the frameworks developed in both papers. As with its \textsf{LTL} counterpart, the translation is incremental in the sense that states can be created as needed, and its number of states is asymptotically optimal relative to the size of the input formula. We also show that the resulting automaton satisfies conditions that allow for a subsequent translation to a deterministic parity automaton without further blowups in size, as described by Esparza et al.~\cite{esparza2022dpa}.

%A unique advantage of our approach, where the history is incorporated into the formula, is that past subformulae that become irrelevant for the future evaluation simply disappear from the states, leading to more efficient constructions. 
Prior to our work, obtaining a limit-deterministic Büchi automaton or deterministic parity automaton for a given \textsf{pLTL} formula involved first constructing an intermediate non-deterministic Büchi automaton and from that deriving an equivalent automaton with the desired acceptance condition.

%\subsubsection*{Paper Overview}
The paper is structured as follows. Section~\ref{section_preliminaries} introduces the basic concepts behind $\omega$-automata and the relevant types of automata, as well as the syntax and semantics of \textsf{pLTL}. Section~\ref{section_previous_work} includes definitions and properties from previous work that are required to understand the translations. We omit repeating proofs and refer to the original papers. The translation is based on the after-function, which reads an input symbol and rewrites the input formula based on the information contained therein; the result is a formula to be verified by further input.  

As we allow for the arbitrary alternation of past and future operators, the current input may contain information relevant for the proper evaluation of past subformulae in future steps. In Section~\ref{section_encoding_past} we describe how to encode this information in the formula, while Section~\ref{section_after} presents a simplified version of the after-function for \textsf{pLTL}. We present the translation from \textsf{pLTL} to limit-deterministic Büchi automata in Section~\ref{section_pltl_to_ldba}, and show that the resulting automata are amenable to further asymptotically optimal translations to deterministic parity automata in Section~\ref{section:dpa}. We finish with concluding remarks in Section~\ref{section_discussion}.

\section{Preliminaries}
\label{section_preliminaries}
For readability, we enclose tuples either in parentheses or in angled brackets depending on context. For example, an ordered pair is either $(a, b)$ or $\langle a, b \rangle$. Given a set $S$, we denote by $\mathcal{P}(S)$ its power set. Given a natural number $n$, we write $[n]$ to represent the set $\{ 1, 2, \dots, n \}$.
\subsection{Infinite Words and \texorpdfstring{$\omega$}{Omega}-automata}
\label{section_preliminaries_automata}
Given a non-empty finite set of letters $\Sigma$, a \textit{word} over $\Sigma$ is an infinite sequence $\sigma_0, \sigma_1, \dots$, where $\sigma_t \in \Sigma$ for all $t \geq 0$. We call $\Sigma$ an \textit{alphabet} and write $\Sigma^\omega$ to denote the set of all words over $\Sigma$. Given a word $w = \sigma_0, \sigma_1, \dots$ we write $w_t = \sigma_t, \sigma_{t+1}, \dots$ to denote the suffix of $w$ starting at $t$. Given two indices $t$ and $s$, where $s \geq t$, the infix $\sigma_t, \sigma_{t+1}, \ldots, \sigma_{s-1}$ of $w$ length $s - t$ is denoted $w_{ts}$. If $t = s$ then $w_{ts} = \epsilon$, where $\epsilon$ is the empty word. We use parentheses to avoid ambiguity when necessary. For example, to denote $\sigma_{t+1}, \sigma_{t+2}, \dots, \sigma_s$ we write $w_{(t+1)(s+1)}$. The same notation is used throughout for suffixes and infixes of general infinite and finite sequences.

An $\omega$-\textit{automaton} $\mathcal{A}$ over an alphabet $\Sigma$ is a quadruple $(Q, Q_0, \delta, \alpha)$ where $Q$ is a finite set of states, $Q_0 \subseteq Q$ a set of initial states, and $\delta : Q \times \Sigma \to \mathcal{P}(Q)$ is a transition function. Finally, $\alpha$ is the automaton's acceptance condition. An $\omega$-automaton is \textit{deterministic} if $\vert Q_0 \vert = 1$ and $\vert \delta(q, \sigma) \vert = 1$ for all $q \in Q$ and $\sigma \in \Sigma$. For deterministic $\omega$-automata we treat $Q_0$ and $\delta(q, \sigma)$ as single states, omitting set notation. Moreover, we overload the notation to finite words $w = \sigma_0, \dots, \sigma_n$, writing $\delta(q, w)$ to denote the composition $\delta(\delta(\dots(\delta(q, \sigma_0)), \sigma_{n-1}), \sigma_n)$, with the convention that $\delta(q, \epsilon) = q$. We restrict our attention to $\omega$-automata and use ``automaton'' and ``$\omega$-automaton'' interchangeably.

A \textit{run} of an $\omega$-automaton $\mathcal{A}$ on a word $w = \sigma_0, \sigma_1, \dots$ is an infinite sequence $\eta = q_0, q_1, \dots$ such that $q_0 \in Q_0$ and $q_{t + 1} \in \delta(q_t, \sigma_t)$ for all $t \geq 0$. \textit{Büchi automata} are $\omega$-automata whose acceptance condition $\alpha$ is a set of accepting states. A run $\eta$ of a Büchi automaton $\mathcal{A}$ on a word $w$ is an \textit{accepting run} if $\mathit{Inf}(\eta) \cap \alpha \neq \varnothing$, where $\mathit{Inf}(\eta) \subseteq Q$ denotes the states of $Q$ that appear infinitely often in $\eta$. A word $w$ is \textit{accepted} by $\mathcal{A}$ if there exists an accepting run of $\mathcal{A}$ on $w$.
A Büchi automaton is \textit{limit-deterministic} if there exists a subset $Q_d \subseteq Q$ such that $\alpha \subseteq Q_d$ and, for all $q \in Q_d$ and all $\sigma \in \Sigma$, it holds that $\delta(q, \sigma) \subseteq Q_d$ and $\vert \delta(q, \sigma) \vert = 1$. We call the set $Q_d$ the \textit{accepting part} of the automaton, and its complement $Q_{nd}=Q \setminus Q_d$ the \textit{non-deterministic part}.
%A Büchi automaton is \textit{limit-deterministic} if, for every accepting state $q \in \alpha$, there exists a subset $Q_q \subseteq Q$ such that $q \in Q_q$ and for all $q' \in Q_q$ and $\sigma \in \Sigma$ it holds that $\delta(q', \sigma) \subseteq Q_q$ and $\vert \delta(q', \sigma) \vert = 1$. We call the set $Q_d$ the \textit{accepting part} of the automaton, and its complement $Q \setminus Q_d$ the \textit{non-deterministic part}.
\textit{Parity automata} are $\omega$-automata whose acceptance condition $\alpha$ is a priority function $\alpha: Q \rightarrow \mathbb{N}$ associating a ranking to each state.
A run $\eta$ of a parity automaton is accepting if $\min\{\alpha(q) \mid q\in \mathit{Inf}(\eta)\}$ is even. 

A \textit{multiple-entry $\omega$-automaton} is an automaton without distinguished initial states. These are used as intermediate automata as part of our construction in Section~\ref{section_pltl_to_ldba}. Accepting runs on multiple-entry automata are defined with respect to a given starting state.

A \textit{bed automaton} is an automaton without an acceptance condition: it is a triple $\mathcal{A} = (S, S_0, \delta_\mathcal{A})$ with set of states $S$, set of initial states $S_0 \subseteq S$, and transition function $\delta_\mathcal{A} : S \times \Sigma \to \mathcal{P}(S)$. A \textit{runner automaton}, defined with respect to a bed automaton $\mathcal{A}$, is an automaton $\mathcal{B} = (Q, Q_0, \delta, \alpha)$ with a specialized transition function $\delta : Q \times S \times \Sigma \to \mathcal{P}(Q)$ and a B\"uchi acceptance condition $\alpha$. Finally, the composition $\mathcal{A} \ltimes \mathcal{B}$ of a bed automaton $\mathcal{A}$ and runner automaton $\mathcal{B}$ defined with respect to $\mathcal{A}$ is the B\"uchi automaton $(Q \times S, Q_0 \times S_0, \overline{\delta}, \overline{\alpha})$, where $\overline{\delta}(q, s, \sigma) = \{ \langle q', s' \rangle \mid q' \in \delta(q, s', \sigma), s' \in \delta_\mathcal{A}(s, \sigma) \}$ and $\overline{\alpha} = \alpha \times S$.
This is a specialized form of cascade composition of automata.\footnote{See explanation regarding the parts of a \href{https://en.wikipedia.org/wiki/Millstone}{millstone} for the intuition regarding these names.}

The \textit{language} of an automaton $\mathcal{A}$ over an alphabet $\Sigma$ is the set $\mathcal{L}(\mathcal{A}) \subseteq \Sigma^\omega$ of words over $\Sigma$ that are accepted by $\mathcal{A}$. Given a state $q$ of $\mathcal{A}$, we similarly write $\mathcal{L}(q)$ to denote the set of words accepted by $\mathcal{A}$ when $q$ is taken to be the single initial state.
We use the acronyms NBA and LDBA to denote non-deterministic Büchi automata and limit-deterministic Büchi automata, respectively. We write DPA to denote deterministic parity automata. When the type of automaton is clear from the context, we simply write ``automaton''.

\subsection{Linear Temporal Logic with Past}
Given a finite set of propositional variables $\mathit{AP}$, a well-formed formula $\varphi$ of \textsf{pLTL} is generated by the grammar
\begin{equation*}
\begin{aligned}
    \varphi &\Coloneqq \top \mid \bot \mid p \mid \neg \varphi \mid \varphi \land \varphi \mid \varphi \lor \varphi \mid \ltlX \varphi \mid \varphi \ltlU \varphi \mid \ltlY \varphi \mid \varphi \ltlS \varphi,
\end{aligned}
\end{equation*}
where $p \in \mathit{AP}$. When a word $w = \sigma_0, \sigma_1, \dots\in (\mathcal{P}(\mathit{AP}))^\omega$ \textit{models} or \textit{satisfies} a \textsf{pLTL} formula $\varphi$ at time $t \in \mathbb{N}$, we write $(w, t) \models \varphi$. The semantics is defined in Figure~\ref{fig:ltl semantics}.
\begin{figure}[bt]
\begin{equation*}
\begin{aligned}
    &(w, t) \models \top \\
    &(w, t) \not\models \bot \\
    &\begin{aligned}
    (w, t) &\models p &&\text{ iff } \quad p \in \sigma_{t} \\
    (w, t) &\models \neg \varphi &&\text{ iff } \quad (w, t) \not\models \varphi \\
    (w, t) &\models \psi \land \xi &&\text{ iff } \quad (w, t) \models \psi \text{ and } (w, t) \models \xi \\
    (w, t) &\models \psi \lor \xi &&\text{ iff } \quad (w, t) \models \psi \text{ or } (w, t) \models \xi \\
    (w, t) &\models \ltlX \psi  &&\text{ iff } \quad (w, t + 1) \models \psi \\
    (w, t) &\models \psi \ltlU \xi &&\text{ iff } \quad \exists r \geq t \, . \, ((w, r) \models \xi \text{ and } \forall s \in [t, r) \, . \, (w, s) \models \psi) \\
    (w, t) &\models \ltlY \psi  &&\text{ iff } \quad t > 0 \text{ and } (w, t - 1) \models \psi \\
    (w, t) &\models \psi \ltlS \xi &&\text{ iff } \quad \exists r \in [0..t] \, . \, ((w, r) \models \xi \text{ and } \forall s \in (r, t] \, . \, (w, s) \models \psi).
    \end{aligned}
\end{aligned}
\end{equation*}
%\vspace*{-18pt}
\caption{Semantics of \textsf{pLTL}.\label{fig:ltl semantics}}
%\vspace*{-6pt}
\end{figure}
%\vspace*{-5mm}
%\caption{\label{figure_pltl_semantics}Semantics of \textsf{pLTL}.}
%\vspace*{-7mm}
When $t = 0$ we write $w \models \varphi$. Using the same notation as for the language of automata in Section~\ref{section_preliminaries_automata}, we define the \textit{language} of a \textsf{pLTL} formula $\varphi$ as $\mathcal{L}(\varphi) \coloneqq \{ w \in (\mathcal{P}(AP))^\omega \mid w \models \varphi \}$. Two \textsf{pLTL} formulae $\varphi$ and $\psi$ with the same language are (semantically) \textit{initially equivalent}. In such cases we write $\varphi \equiv_i \psi$.

Additional operators are in Figure~\ref{fig:additional operators}.
\begin{figure}[bt]
\begin{equation*}
\begin{aligned}
    \ltlF \varphi &\coloneqq \top \ltlU \varphi 
    & 
    \ltlG \varphi &\coloneqq \neg \ltlF \neg \varphi 
    & 
    \varphi \ltlW \psi &\coloneqq \varphi \ltlU \psi \lor \ltlG \varphi 
    \\
    \ltlO \varphi &\coloneqq \top \ltlS \varphi 
    &
    \ltlH \varphi &\coloneqq \neg \ltlO \neg \varphi 
    &
    \varphi \ltlWS \psi &\coloneqq \varphi \ltlS \psi \lor \ltlH \varphi 
    \\
    \varphi \ltlM \psi &\coloneqq \psi \ltlU (\varphi \land \psi) 
    & 
    \varphi \ltlR \psi &\coloneqq \psi \ltlW (\varphi \land \psi) 
    &
    \ltlWY \varphi &\coloneqq \ltlY \varphi \lor \neg \ltlY \top
    \\ 
    \varphi \ltlNWS \psi &\coloneqq \psi \ltlS (\varphi \land \psi)
    &
    \varphi \ltlNS \psi &\coloneqq \psi \ltlWS (\varphi \land \psi).
\end{aligned}
\end{equation*}
%\vspace*{-18pt}
\caption{Additional \textsf{pLTL} operators.\label{fig:additional operators}}
%\vspace*{-18pt}
\end{figure}
Notice that $\ltlW$ ($\ltlR$) is the weak version of $\ltlU$ ($\ltlM$).
A formula consisting only of a (possibly negated) proposition is \textit{atomic}. 
A formula whose syntax tree is rooted with $\ltlX, \ltlU, \ltlF, \ltlG, \ltlW, \ltlM$, or $\ltlR$ is a \textit{future} formula, while one rooted with $\ltlY, \ltlS, \ltlO, \ltlH, \ltlWS, \ltlNWS, \ltlNS$, or $\ltlWY$ is a \textit{past} formula.

A \textit{temporal} formula is a formula that is either a future or past formula. We use corresponding terminology for the operators. A subformula of a formula $\varphi$ is \textit{maximal} if it does not appear under temporal operators in the syntax tree of $\varphi$. We write $\Var(\varphi)$ to denote the set of propositional variables appearing in a formula $\varphi$, $\sff(\varphi)$ to denote the set of atomic and temporal subformulae of $\varphi$, and $\psf(\varphi)\subseteq \sff(\varphi)$ the set of past subformulae of $\varphi$.
The \textit{size} $\vert \varphi \vert$ of a formula $\varphi$ is defined as the number of distinct subformulae of $\varphi$ that are either atomic or temporal. That is, we define $\vert \varphi \vert \coloneqq \vert \sff(\varphi) \vert$.

\iffullversion
A formula is in negation normal form if negations only appear in front of atomic propositions. An arbitrary \textsf{pLTL} formula is easily rewritten into an equivalent formula in negation normal form with a linear size increase; see Appendix~\ref{appendix:nnf}. From now on, whenever we write ``formula'', we implicitly refer to \textsf{pLTL} formulae in negation normal form, unless otherwise stated.
\else
A formula is in negation normal form if negations only appear in front of atomic propositions. An arbitrary \textsf{pLTL} formula is easily rewritten into an equivalent formula in negation normal form with a linear size increase. From now on, whenever we write ``formula'', we implicitly refer to \textsf{pLTL} formulae in negation normal form, unless otherwise stated.
\fi

%A formula is in negation normal form if negations only appear in front of atomic propositions. Using the derived operators above it is easy to rewrite an arbitrary \textsf{pLTL} formula into negation normal form without a size increase. From now on, whenever we write ``formula'', we implicitly refer to \textsf{pLTL} formulae in negation normal form, unless otherwise stated.

\subsection{Propositional Equivalence of Formulae}
To obtain reasonable bounds on our automata we need to define an equivalence relation on \textsf{pLTL} formulae. Because semantic equivalence is difficult to determine, we use the stronger notion of \textit{propositional equivalence}. This is equivalent to viewing a formula as a propositional logic formula where maximal temporal subformulae are taken as propositional variables.
\begin{definition}[Propositional Semantics of $\mathsf{pLTL}$~\cite{esparza_unified_translation}]
Let $\mathcal{I}$ be a set of formulae and $\varphi$ a formula. The propositional satisfaction relation $\mathcal{I} \models_p \varphi$ is inductively defined as
\begin{equation*}
\begin{aligned}
    \mathcal{I} &\models_p \top \quad &
    \mathcal{I} &\models_p \psi \land \xi \mbox{ iff } \mathcal{I} \models_p \psi \mbox{ and } \mathcal{I} \models_p \xi \\
\end{aligned}
\end{equation*}
%and
\begin{equation*}
\begin{aligned}
    \mathcal{I} &\not\models_p \bot &
    \mathcal{I} &\models_p \psi \lor \xi \mbox{ iff } \mathcal{I} \models_p \psi \mbox{ or } \mathcal{I} \models_p \xi,
\end{aligned}
\end{equation*}
with $\mathcal{I} \models_p \varphi \mbox{ iff } \varphi \in \mathcal{I}$ for all other cases. Two formulae $\varphi$ and $\psi$ are propositionally equivalent, denoted $\varphi \sim \psi$, if $\mathcal{I} \models_p \varphi \Leftrightarrow \mathcal{I} \models_p \psi$ for all sets of formulae $\mathcal{I}$. The \textit{(propositional) equivalence class} of a formula $\varphi$ is denoted $[\varphi]_\sim$. The \textit{(propositional) quotient set} of a set of formulae $\Psi$ is denoted $\Psi_{/\sim}$.
\end{definition}
We state formally that propositional equivalence implies equivalence.
\begin{lemma}[\cite{esparza_unified_translation}]
    Let $\varphi, \psi$ be formulae such that $\varphi \sim \psi$ , $w$ a word, $t\in \mathbb{N}$. Then, $(w,t)\models \varphi$ iff $(w,t)\models \psi$. 
\end{lemma}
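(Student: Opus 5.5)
The plan is to fix the word $w$ and the time $t$ and build one particular set of formulae $\mathcal{I}$ that records what is true at $(w,t)$. Propositional satisfaction by this set should then coincide with actual satisfaction at $(w,t)$. Concretely, I take
\[
\mathcal{I}_{w,t} \coloneqq \{\chi \mid (w,t) \models \chi\},
\]
the set of all formulae (in negation normal form) that hold at position $t$ of $w$. The key claim is that for every formula $\varphi$,
\[
\mathcal{I}_{w,t} \models_p \varphi \quad\text{iff}\quad (w,t) \models \varphi .
\]
Given this claim, the lemma is immediate. Suppose $\varphi \sim \psi$. Applying the definition of propositional equivalence to the particular set $\mathcal{I}_{w,t}$ gives $\mathcal{I}_{w,t} \models_p \varphi \Leftrightarrow \mathcal{I}_{w,t} \models_p \psi$. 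Chaining this with the claim on both sides yields $(w,t)\models\varphi \Leftrightarrow (w,t)\models\psi$.

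I prove the claim by structural induction on $\varphi$, following the clauses of the definition of $\models_p$.
\begin{itemize}
\item For $\top$ and $\bot$, both sides agree by definition: $\top$ is always satisfied and $\bot$ never is, in both semantics.
\item For $\psi \land \xi$ and $\psi \lor \xi$, the propositional clause reduces to the two conjuncts or disjuncts. I apply the induction hypothesis to each, and the semantic clauses for $\land$ and $\lor$ in Figure~\ref{fig:ltl semantics} have exactly the same shape.
\item Every other case falls under ``$\mathcal{I}\models_p\varphi$ iff $\varphi\in\mathcal{I}$''. This covers atomic formulae (possibly negated propositions) and temporal formulae. By the definition of $\mathcal{I}_{w,t}$, membership $\varphi \in \mathcal{I}_{w,t}$ is literally $(w,t)\models\varphi$, so there is nothing to prove.
\end{itemize}

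The only point needing care is that the derived operators of Figure~\ref{fig:additional operators} are treated as primitive temporal operators. For example, $\ltlW$ is defined as $\varphi \ltlU \psi \lor \ltlG \varphi$, and a formula rooted in $\ltlW$ must not be unfolded into that disjunction, since that would change its propositional structure. Since the propositional semantics treats every non-Boolean root atomically, this causes no trouble, provided the induction is on the syntax tree of the formula as written. The same holds for negated propositions, which are atomic in negation normal form. I expect no real obstacle: the whole content lies in choosing $\mathcal{I}_{w,t}$ to be the set of formulae true at $(w,t)$.
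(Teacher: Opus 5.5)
Your proof is correct and is the standard argument behind this result; the paper gives no proof of its own and simply cites Esparza et al., where the same idea is used: interpret with the set $\mathcal{I}_{w,t}$ of formulae true at $(w,t)$, show by structural induction that $\models_p$ and $\models$ coincide on it, and then transfer along $\varphi \sim \psi$. Your caveat about derived operators is harmless under either convention, since if $\ltlW$ were unfolded into its defining disjunction, the $\lor$ case of your induction would handle it equally well.
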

For example, $(a \ltlU b \wedge \ltlG a) \vee  a \ltlU b$ is propositionally equivalent to $a\ltlU b$.  
On the other hand, the formulas $a \ltlU b \vee (a \ltlW b \wedge \ltlF b)$ is equivalent to $a\ltlU b$ but not propositionally equivalent to it. 

The following lemma allows us to lift certain functions from formulae to equivalence classes of formulae.
\begin{lemma}[\cite{esparza_unified_translation}]
\label{lemma_propequiv_preserved}
    Let $f$ be a function on formulae such that $f(\top) = \top$, $f(\bot) = \bot$, and for all formulae $\varphi$ and $\psi$, $f(\varphi \land \psi) = f(\varphi) \land f(\psi)$ and $f(\varphi \lor \psi) = f(\varphi) \lor f(\psi)$. Then, for all pairs of formulae $\varphi$ and $\psi$, if $\varphi \sim \psi$ then $f(\varphi) \sim f(\psi)$.
\end{lemma}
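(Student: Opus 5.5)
The plan is to reduce the lemma to a statement about propositional valuations. The idea is that $f$ acts on the propositional skeleton of a formula, so every valuation of the image can be pulled back to a valuation of the original formula. Concretely, $\sff$-style maximal temporal subformulae and atoms are the ``variables'' of the propositional view. Since $f$ commutes with $\top,\bot,\land,\lor$, $f(\varphi)$ is obtained from $\varphi$ by keeping its Boolean structure and replacing each maximal non-Boolean subformula $\chi$ of $\varphi$ by the formula $f(\chi)$.

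Fix an arbitrary set of formulae $\mathcal{I}$ and define the pulled-back set
\[
\mathcal{J} \coloneqq \{\chi \mid \mathcal{I} \models_p f(\chi)\}.
\]
The key claim is that for every formula $\varphi$,
\[
\mathcal{I} \models_p f(\varphi) \quad\text{iff}\quad \mathcal{J} \models_p \varphi .
\]
This is proved by structural induction on $\varphi$ following the clauses of $\models_p$.
\begin{itemize}
\item For $\top$: $f(\top)=\top$, and both sides hold.
\item For $\bot$: $f(\bot)=\bot$, and both sides fail.
\item For $\psi\land\xi$: $f(\psi\land\xi)=f(\psi)\land f(\xi)$, so $\mathcal{I}\models_p f(\psi\land\xi)$ iff $\mathcal{I}$ propositionally satisfies both $f(\psi)$ and $f(\xi)$. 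By the induction hypothesis this holds iff $\mathcal{J}\models_p\psi$ and $\mathcal{J}\models_p\xi$, i.e.\ iff $\mathcal{J}\models_p\psi\land\xi$.
\item For $\psi\lor\xi$: the argument is the same as for $\land$.
\item For any other $\varphi$ (atomic or temporal): $\mathcal{J}\models_p\varphi$ iff $\varphi\in\mathcal{J}$. By the definition of $\mathcal{J}$ this holds iff $\mathcal{I}\models_p f(\varphi)$.
\end{itemize}
The other-case clause is exactly why $\mathcal{J}$ is defined this way. Note that $f(\varphi)$ may itself be a Boolean combination, and that is harmless, because the claim only refers to $\mathcal{I}\models_p f(\varphi)$ as a whole.

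With the claim in hand, suppose $\varphi\sim\psi$ and let $\mathcal{I}$ be arbitrary. Then
\[
\mathcal{I}\models_p f(\varphi) \iff \mathcal{J}\models_p \varphi \iff \mathcal{J}\models_p \psi \iff \mathcal{I}\models_p f(\psi),
\]
where the middle step is the definition of $\varphi\sim\psi$ applied to the set $\mathcal{J}$. Since $\mathcal{I}$ was arbitrary, $f(\varphi)\sim f(\psi)$.

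I do not expect a serious obstacle. The only delicate point is to phrase the induction over the clauses of $\models_p$ rather than over the full syntax of formulae. Clauses such as $\neg$ in front of atoms, or temporal operators, fall into the ``all other cases'' branch and are treated as opaque variables. This also uses that formulae are in negation normal form, so that the Boolean structure is only $\land$ and $\lor$.
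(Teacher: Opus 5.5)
Your proof is correct and complete. It is the standard argument: you pull $\mathcal{I}$ back to $\mathcal{J} = \{\chi \mid \mathcal{I} \models_p f(\chi)\}$ and prove $\mathcal{I} \models_p f(\varphi) \iff \mathcal{J} \models_p \varphi$ by induction along the clauses of $\models_p$. The hypotheses on $f$ are used exactly in the $\top$, $\bot$, $\land$, $\lor$ cases, and the opaque case holds by the definition of $\mathcal{J}$. The paper gives no proof of its own here and imports the lemma from Esparza et al., so there is nothing further to compare.

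One small correction to your closing remark: negation normal form is not actually used. The relation $\models_p$ treats every formula not of the form $\top$, $\bot$, $\psi \land \xi$ or $\psi \lor \xi$ as an opaque variable, including $\neg\chi$ for arbitrary $\chi$.
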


\section{Encoding the Past and the After-Function}
\label{section_previous_work}
In this section we present some definitions and results derived from Azzopardi et al.~\cite{azzopardi_et_al:LIPIcs.MFCS.2024.13} that are needed in order to understand the upcoming translations.
\subsection{Encoding the Past}
\label{section_encoding_past}
Our translations involve taking a formula $\varphi$ \textendash{} essentially representing a state in the automaton \textendash{} and a letter $\sigma$, and from these construct a successor formula by rewriting the formula using the local semantics of \textsf{pLTL}. To accomplish this, we first discuss how to update $\varphi$ so that the information contained in $\sigma$ that is relevant for evaluating the past subformulae of $\varphi$ is encoded. This is done by either weakening or strengthening past operators, in the following sense:
\begin{definition}[Weakening and strengthening formulae]
The \textit{weakening} $\varphi_\mathcal{W}$ and strengthening $\varphi_\mathcal{S}$ of a formula $\varphi$ is defined by case distinction on $\varphi$ as
\begin{equation*}
\begin{aligned}  
    (\ltlY \psi)_\mathcal{W} &\coloneqq \ltlWY \psi & (\psi \ltlS \xi)_\mathcal{W} &\coloneqq \psi \ltlWS \xi & (\psi \ltlNWS \xi)_\mathcal{W} &\coloneqq \psi \ltlNS \xi \\
    (\ltlWY \psi)_\mathcal{S} &\coloneqq \ltlY \psi & (\psi \ltlWS \xi)_\mathcal{S} &\coloneqq \psi \ltlS \xi & (\psi \ltlNS \xi)_\mathcal{S} &\coloneqq \psi \ltlNWS \xi.
\end{aligned}
\end{equation*}
For all other cases we define $\varphi_\mathcal{W} \coloneqq \varphi$ and $\varphi_\mathcal{S} \coloneqq \varphi$.
\end{definition}
To provide some intuition behind this definition, consider a state in an automaton labeled by the formula $\ltlX (p \ltlS q)$, and a word $w$. Upon reading the first letter of $w$, the successor state should \textit{not} be labeled by $p \ltlS q$, as this amounts to making the \textendash{} too strong \textendash{} requirement that $w_1 \models q$ (recall that $w_1$ is the suffix of $w$ without its initial letter). To find the correct successor state, we must also allow for the possibility that $w \models q$ and $w_1 \models p$. In general, we need to be able to encode which past subformulae were true in the \textit{previous} step; in this case, the subformula $p \ltlS q$. Weakening or strengthening past subformulae allows us to do this.

If we know that $w \models q$, then it is sufficient to check that $w_1 \models p \ltlWS q$ to deduce that $(w, 1) \models p \ltlS q$. The formula $q$ is called the \textit{weakening condition} of $p \ltlS q$; its satisfaction allows us to weaken $p \ltlS q$ for the next step.
%Consider the formula $\ltlG\ltlF(\bigvee_{p\in \mathit{AP}}\ltlO p)$. We see that once we read a letter containing some $p\in AP$, the relevant $\ltlO p\equiv_i \top \ltlS p$ is weakened to $\top \ltlWS \varphi$, which is equivalent to $\top$. Thus, the automaton has two states corresponding to the original formula and to $\ltlG\ltlF\top\equiv_i \top$.  
\begin{definition}[Weakening conditions]
Given a past formula $\varphi$, the weakening condition $\wc(\varphi)$ is defined as
\begin{equation*}
\begin{aligned}
\wc(\ltlY \psi) &\coloneqq \psi & \wc(\psi \ltlS \xi) &\coloneqq \xi & \wc(\psi \ltlNWS \xi) &\coloneqq \psi \land \xi \\
\wc(\ltlWY \psi) &\coloneqq \psi & \wc(\psi \ltlWS \xi) &\coloneqq \psi \lor \xi & \wc(\psi \ltlNS \xi) &\coloneqq \xi.
\end{aligned}
\end{equation*}
\end{definition}
We note that formulae rooted with one of the binary operators $\ltlS, \ltlWS, \ltlNWS$, and $\ltlNS$ are initially equivalent to their corresponding weakening conditions:
\begin{proposition}
\label{proposition:wc_equivalent}
    Let $\varphi \op \psi$ be a past formula, where $\op \in \{ \ltlS, \ltlWS, \ltlNWS, \ltlNS \}$. Then $\varphi \equiv_i \wc(\varphi)$.
\end{proposition}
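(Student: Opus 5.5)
We read the statement as saying that the whole past formula $\varphi \op \psi$ is initially equivalent to $\wc(\varphi \op \psi)$. The plan is to unfold the semantics of Figure~\ref{fig:ltl semantics} at time $t = 0$ for each of the four operators, using the abbreviations of Figure~\ref{fig:additional operators}. The key observation is that at $t=0$ the interval $[0..t]$ collapses to $\{0\}$ and the interval $(r,t]$ is empty. Hence every since-type formula at position $0$ reduces to a purely local condition on position $0$.

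First we treat $\ltlS$. We have $(w,0)\models \varphi \ltlS \psi$ iff there is $r\in[0..0]$ with $(w,r)\models\psi$ and a vacuous universal condition, that is, iff $(w,0)\models \psi = \wc(\varphi\ltlS\psi)$. As an auxiliary fact we also get $(w,0)\models \ltlO\chi$ iff $(w,0)\models\chi$. Consequently $(w,0)\models \ltlH\chi = \neg\ltlO\neg\chi$ iff $(w,0)\models \chi$.

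Next we treat $\ltlWS$. By definition $\varphi\ltlWS\psi = \varphi\ltlS\psi \lor \ltlH\varphi$. By the two facts above, this holds at $0$ iff $\psi$ or $\varphi$ holds at $0$, which is exactly $\wc(\varphi\ltlWS\psi)=\varphi\lor\psi$.

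For $\ltlNWS$ we have $\varphi\ltlNWS\psi = \psi\ltlS(\varphi\land\psi)$. By the $\ltlS$ case this holds at $0$ iff $\varphi\land\psi$ holds there.

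For $\ltlNS$ we have $\varphi\ltlNS\psi = \psi\ltlWS(\varphi\land\psi)$. By the $\ltlWS$ case this holds at $0$ iff $\psi\lor(\varphi\land\psi)$ holds there. By absorption this is equivalent to $\psi$, matching $\wc$.

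Since all these equivalences hold at position $0$ of every word $w$, the languages coincide, which is what $\equiv_i$ requires. There is no real obstacle. The only points needing care are the derived operators: $\ltlH$ must be handled by first evaluating $\ltlO$ at $0$ and then negating. The absorption step in the $\ltlNS$ case is a purely semantic (indeed propositional) simplification.
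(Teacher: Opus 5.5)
Your proof is correct and is the paper's own argument spelled out: the paper only remarks that the claim follows immediately from the semantics of \textsf{pLTL}, and your case-by-case evaluation at $t = 0$ (where $[0..t]$ collapses to $\{0\}$ and $(r,t]$ is empty) is exactly that argument. Your reading of the statement as $\varphi \op \psi \equiv_i \wc(\varphi \op \psi)$ is the intended one, since the printed $\varphi \equiv_i \wc(\varphi)$ is a typo.
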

\begin{proof}
    This follows immediately from the semantics of \textsf{pLTL}.
\end{proof}

The following proposition formalizes the intuition given above for how weakening works.
However, to remove the restriction on occurrences of past in subformulae, it needs a recursive application of weakening as done below in Definition~\ref{definition_rewriting_by_set}, leading to the full correctness in Lemma~\ref{lemma:entailed_correct}.
\begin{restatable}{proposition}{wcproperty}
\label{lemma:wc_property}
    Let $\varphi$ be a past formula such that $\psf(\varphi) = \{ \varphi \}$, and $w$ a word. Then $(w, 1) \models \varphi$ if and only if either 1) $w \models \wc(\varphi)$ and $w_1 \models \varphi_\mathcal{W}$, or 2) $w_1 \models \varphi_\mathcal{S}$.
\end{restatable}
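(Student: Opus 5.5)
The plan is a direct case analysis on the root operator of $\varphi$, which, since $\varphi$ is in negation normal form, is one of $\ltlY, \ltlWY, \ltlS, \ltlWS, \ltlNWS, \ltlNS$. Before the case split I would isolate the one fact that uses the hypothesis $\psf(\varphi) = \{\varphi\}$.

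\emph{Locality lemma.} The immediate arguments $\psi, \xi$ of $\varphi$ contain no past operators. A routine structural induction then shows that for every such argument $\chi$ and every $t$, $(w,t) \models \chi$ iff $w_t \models \chi$. This is where the restriction on past subformulae is needed: it lets me read every occurrence of an argument at time $0$ or $1$ of $w$ as a statement about $w$ or $w_1$.

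Next I unfold the semantics at position $1$ and at position $0$. At position $1$ there are only two predecessors to consider, so:
\begin{itemize}
\item $(w,1)\models \psi \ltlS \xi$ iff $(w,1)\models\xi$, or $(w,0)\models\xi$ and $(w,1)\models\psi$.
\item $(w,1)\models \ltlH\psi$ iff $(w,0)\models\psi$ and $(w,1)\models\psi$.
\item $(w,1)\models\ltlY\psi$ and $(w,1)\models\ltlWY\psi$ both hold iff $(w,0)\models\psi$.
\end{itemize}
At position $0$ of $w_1$, the weak operators collapse:
\begin{itemize}
\item $w_1\models\ltlWY\psi$ always holds, while $w_1 \models \ltlY\psi$ never holds.
\item $w_1 \models \psi\ltlS\xi$ iff $w_1\models\xi$.
\item $w_1\models\psi\ltlWS\xi$ iff $w_1\models \psi\lor\xi$.
\end{itemize}
For $\ltlNWS$ and $\ltlNS$ I substitute the definitions $\psi\ltlNWS\xi = \xi\ltlS(\psi\land\xi)$ and $\psi\ltlNS\xi = \xi\ltlWS(\psi\land\xi)$. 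Equivalently, I invoke Proposition~\ref{proposition:wc_equivalent}, which directly gives the value at position $0$ of $w_1$ as $\wc(\cdot)$ evaluated on $w_1$.

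With these facts, each case is a propositional comparison. Take $\varphi=\psi\ltlS\xi$. Condition 1) reads $w\models\xi$ and $w_1\models\psi\lor\xi$, and condition 2) reads $w_1\models\xi$. Their disjunction is equivalent to "$w_1\models\xi$, or $w\models\xi$ and $w_1\models\psi$", which is exactly the unfolding of $(w,1)\models\varphi$. The other cases go the same way:
\begin{itemize}
\item $\ltlY$: condition 2) is always false and the weakened formula $\ltlWY\psi$ is trivially true on $w_1$, so the statement reduces to $w\models\psi$.
\item $\ltlWY$: the same argument applies, since $\varphi_\mathcal{W}=\varphi$.
\item $\ltlWS$: the unfolding has the extra disjunct for $\ltlH\psi$, namely "$w\models\psi$ and $w_1\models\psi$". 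This is absorbed because $\wc = \psi\lor\xi$ and $w_1\models\varphi_\mathcal{W}$ means $w_1\models\psi\lor\xi$.
\end{itemize}

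I expect the main obstacle to be the bookkeeping for $\ltlNWS$ and especially $\ltlNS$. There the arguments are recombined as $\psi\land\xi$ and the strengthening maps $\ltlNS$ to $\ltlNWS$, so I have to verify carefully that $\wc(\psi\ltlNS\xi)=\xi$ correctly captures the $\ltlH$ part of the unfolding. Concretely, $(w,1)\models\psi\ltlNS\xi$ iff $(w,1)\models\psi\land\xi$, or ($(w,0)\models\psi\land\xi$ and $(w,1)\models\xi$), or ($(w,0)\models\xi$ and $(w,1)\models\xi$). The last two disjuncts combine into $w\models\xi$ and $w_1\models\xi$. I then match this against condition 1), which is $w\models\xi$ and $w_1\models\xi$ (the value of $\psi\ltlNS\xi$ at position $0$ of $w_1$), and condition 2), which is $w_1\models\psi\land\xi$. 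I would write these two cases out fully and leave the others brief.
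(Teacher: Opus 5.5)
Your proposal is correct and takes essentially the same route as the paper: a case analysis on the root past operator, unfolding the semantics at position $1$ of $w$, and evaluating the weakened or strengthened formula at position $0$ of $w_1$ (where the paper invokes Proposition~\ref{proposition:wc_equivalent}), with $\psf(\varphi)=\{\varphi\}$ used to identify $(w,1)\models\chi$ with $w_1\models\chi$ for the arguments $\chi$. The paper writes out only the $\ltlS$ case and uses this locality implicitly, whereas you isolate it and cover all six operators; your computations for $\ltlNWS$ and $\ltlNS$ check out, and for $\ltlWY$ the actual reason the $\ltlY$ argument transfers is that $(\ltlWY\psi)_\mathcal{W}=\ltlWY\psi$ and $(\ltlWY\psi)_\mathcal{S}=\ltlY\psi$ coincide with the weakening and strengthening of $\ltlY\psi$.
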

We now define how to rewrite a formula $\varphi$ according to a set of past formulae $C$. This set determines which past subformulae of $\varphi$ should be weakened, and which should be strengthened. In general, a past subformula may contain nested future subformulae, and as such may require further input to be evaluated. Hence a set $C$ is analogous to a ``guess'' of precisely which past subformulaes' weakening conditions hold at a given time, and thus should be weakened in preparation for the next input symbol. In the after-function, defined in Section~\ref{section_after}, such guesses are verified by adding the corresponding weakening conditions.
\begin{definition}[Rewriting past operators under sets]
\label{definition_rewriting_by_set}
Let $\varphi$ be a formula and $C$ a set of past formulae. We inductively define $\varphi\rw{C}$ by case distinction on the structure of $\varphi$ as:
\begin{equation*}
\begin{aligned}    
    \psi \rw{C} &\coloneqq \psi && \mbox{($\psi$ atomic)} \\
    (\op \psi)\rw{C} &\coloneqq 
    \begin{cases}
    \begin{aligned}
        &(\op \psi\rw{C})_\mathcal{W} && (\op \psi \in C) \\
        &(\op \psi\rw{C})_\mathcal{S} && \mbox{(otherwise)}
    \end{aligned}
    \end{cases}
    && \mbox{($\op$ unary)} \\
    (\psi \op \xi)\rw{C} &\coloneqq 
    \begin{cases}
    \begin{aligned}
        &(\psi\rw{C} \op \xi\rw{C})_\mathcal{W} && (\psi \op \xi \in C) \\
        &(\psi\rw{C} \op \xi\rw{C})_\mathcal{S} && \mbox{(otherwise)}.
    \end{aligned}
    \end{cases}
    && \mbox{($\op$ binary)}.
\end{aligned}
\end{equation*}
\end{definition}
The following Lemma states the correctness of the rewriting:
\begin{lemma}[\cite{azzopardi_et_al:LIPIcs.MFCS.2024.13}]
\label{lemma:entailed_correct}
  Let $\varphi$ be a formula, $w$ a word, $t \in \mathbb{N}$, and $C = \{ \psi \in \psf(\varphi) \mid w \models \wc(\psi) \}$. Then $(w, t + 1) \models \varphi$ if and only if $(w_1, t) \models \varphi\rw{C}$.
\end{lemma}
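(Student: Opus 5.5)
We argue by structural induction on $\varphi$, proving the claim simultaneously for all $t \in \mathbb{N}$, with $w$ and $C$ fixed as in the statement. The atomic, $\top$, $\bot$, $\land$ and $\lor$ cases are immediate. For an atomic $\psi$, the letter at position $t+1$ of $w$ is the letter at position $t$ of $w_1$, and $\rw{C}$ leaves $\psi$ unchanged. Propositional connectives commute with $\rw{C}$ and the induction hypothesis applies pointwise.

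The future operators are also straightforward, since $\rw{C}$ only changes past operators at their root. For $\ltlX \psi$, $(w,t+1)\models \ltlX\psi$ iff $(w,t+2)\models\psi$. By the induction hypothesis at time $t+1$ this holds iff $(w_1,t+1)\models\psi\rw{C}$, i.e.\ iff $(w_1,t)\models \ltlX(\psi\rw{C})$. The operators $\ltlU$, $\ltlW$, $\ltlM$, $\ltlR$ are handled the same way: every quantified position $r\ge t+1$ in $w$ is shifted to $r-1\ge t$ in $w_1$, and the induction hypothesis is applied at each such position. This is why the claim must be quantified over all $t$.

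The main work is the past operators, which I treat uniformly by unfolding them over time. Take $\varphi = \psi \ltlS \xi$ first. Write $\varphi' = \psi\rw{C} \ltlS \xi\rw{C}$ and $\varphi'_{\mathcal{W}} = \psi\rw{C}\ltlWS\xi\rw{C}$. By semantics, $(w,t+1)\models \psi\ltlS\xi$ iff one of two things holds:
\begin{itemize}
\item there is a witness $r\in[1,t+1]$ with $(w,r)\models\xi$ and $\psi$ holding on $(r,t+1]$, or
\item the witness is $r=0$, i.e.\ $w\models\xi$ and $\psi$ holds on $[1,t+1]$.
\end{itemize}
By the induction hypothesis (shifting indices by one), the first alternative is exactly $(w_1,t)\models\varphi'$. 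The second alternative is exactly ``$w\models\wc(\varphi)$ and $(w_1,t)\models \ltlH(\psi\rw{C})$''. So:
\begin{itemize}
\item If $\varphi\in C$, i.e.\ $w\models \wc(\varphi)=\xi$, the disjunction equals $(w_1,t)\models \varphi' \lor \ltlH\psi\rw{C}$, which is $(w_1,t)\models\varphi'_{\mathcal W}$.
\item If $\varphi\notin C$, the second alternative is false and the disjunction equals $(w_1,t)\models \varphi'=\varphi'_{\mathcal S}$.
\end{itemize}
The same unfolding handles the remaining past operators:
\begin{itemize}
\item For $\ltlWS$, the added disjunct $\ltlH\psi$ at position $0$ contributes $\psi$, so the weakening condition becomes $\psi\lor\xi$.
\item For $\ltlNWS$ and $\ltlNS$, the unfolding is via their definitions: $\ltlNWS$ is $\ltlS$ applied to $\psi$ with subformula $\psi\land\xi$, and $\ltlNS$ is the weak version.
\item For $\ltlY$ and $\ltlWY$, position $t+1$ looks back to $t$; when $t=0$ the look-back lands on position $0$ of $w$, which is decided by $\wc$ and hence by membership in $C$. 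When $t>0$ it follows from the induction hypothesis at $t-1$.
\end{itemize}
Throughout, I use that $C$ depends only on $w$ at time $0$, so the membership decision is the same for every $t$.

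The main obstacle is the case analysis showing that the disjunct contributed by position $0$ coincides exactly with the weak/strong switch. Several details need care here:
\begin{itemize}
\item the $\ltlH$ part must be evaluated on the rewritten subformulae $\psi\rw{C}$ through the induction hypothesis, not on the original $\psi$;
\item for weak operators whose root is not in $C$, strengthening must correctly drop the ``holds from the beginning'' disjunct, which is the case where $\wc(\varphi)$ fails at position $0$;
\item the mixed $\ltlNWS/\ltlNS$ variants need the same checks.
\end{itemize}
I would verify the $\ltlY$, $\ltlS$ and $\ltlWS$ cases in full and then reduce $\ltlNWS$ and $\ltlNS$ to them via their definitions, also checking that $\rw{C}$ respects these definitions.
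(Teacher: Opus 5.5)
Your proof is correct. The paper does not prove this lemma itself but imports it from \cite{azzopardi_et_al:LIPIcs.MFCS.2024.13}; your structural induction is exactly the generalization of the paper's proof of Proposition~\ref{lemma:wc_property}. There, $C$ is fixed and all $t$ are quantified simultaneously, and the induction hypothesis at shifted positions replaces the restriction $\psf(\varphi)=\{\varphi\}$. That is the ``recursive application of weakening'' the paper alludes to.

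There is one slip, and it is harmless. By definition $\psi\ltlNWS\xi = \xi\ltlS(\psi\land\xi)$, so the left argument of $\ltlS$ is $\xi$, not $\psi$. Also, for $\ltlNS$ the unfolded weakening condition $\xi\lor(\psi\land\xi)$ agrees with $\wc(\psi\ltlNS\xi)=\xi$ only semantically. That suffices, because membership in $C$ depends only on satisfaction by $w$.
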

Observe that the set $C$ of Lemma~\ref{lemma:entailed_correct} contains exactly the past subformulae of $\varphi$ whose weakening conditions are satisfied by $w$. Hence, a past subformula of $\varphi$ appears in weakened form in $\varphi\rw{C}$ if and only if its weakening condition is satisfied. We call this set the set of past subformulae \textit{entailed} by $w$.
\begin{restatable}{definition}{entailedset}
%\begin{definition}[Sets of entailed subformulae]
\label{definition_set_entailed}
Let $\varphi$ be a formula, $w$ a word, and $t \in \mathbb{N}$. The set of past subformulae of $\varphi$ entailed by $w$ at $t$ is defined as
\begin{equation*}
\begin{aligned}    
    \entconw{\varphi}{0}{w} &\coloneqq \{ \psi \in \psf(\varphi) \mid \psi = \psi_\mathcal{W} \} \\
    \entconw{\varphi}{t}{w} &\coloneqq \{ \psi \in \psf(\varphi\rw{C^w_{\varphi, t - 1}}) \mid w_{t-1} \models \wc(\psi) \} \quad (t > 0).
\end{aligned}
\end{equation*}
\end{restatable}
When the word $w$ above is clear from the context, we simply write $\entcon{\varphi}{t}$. Note that $\entcon{\varphi}{0}$ has no effect on $\varphi$, in the sense that $\varphi\rw{\entcon{\varphi}{0}} = \varphi$; it simply keeps strong past subformulae strong and weak subformulae weak. This is to avoid defining special cases for $t = 0$ when stating properties about sets of entailed subformulae. Given $t \in \mathbb{N}$ we denote the sequence $\entcon{\varphi}{0}, \dots, \entcon{\varphi}{t}$ of length $t + 1$ by $\entseq{\varphi}{t}$ and define $\compseq{\varphi}{t} = \{ \psi \in \psf(\varphi) \mid \psi\rw{\entcon{\varphi}{0}}\rw{\entcon{\varphi}{1}}\dots\rw{\entcon{\varphi}{t}} = (\psi\rw{\entcon{\varphi}{0}}\rw{\entcon{\varphi}{1}}\dots\rw{\entcon{\varphi}{t}})_\mathcal{W.} \}$. In effect, the set $\compseqw{\varphi}{t}{w}$ immediately puts $\psi$ in the correct ``past configuration'' at time $t$.
\subsection{The After-Function}
\label{section_after}
To compute a (deterministic) transition from a state \textendash{} represented by a formula \textendash{} given a letter, we make use of the local semantics of \textsf{pLTL} and the results of Section~\ref{section_encoding_past} to update the formula, intuitively returning its ``derivative'' with respect to a given letter. We first define the \textit{local after-function}.
This is essentially the ``classic'' definition of the after function, which does not work for past formulas.
It is completed by the (global) \textit{after-function} below. 
\begin{definition}[The local after-function]
Let $\varphi$ be a formula and $\sigma \in \mathcal{P}(\Var(\varphi))$. The local after-function $\afc[]$ is inductively defined on the structure of $\varphi$ as:
\begin{align*}
    \afl(p) &\coloneqq
    \begin{cases}
        \top &(p \in \sigma) \\
        \bot &(p \not\in \sigma)
    \end{cases} &
    \afl(\neg p) &\coloneqq 
    \begin{cases}
        \bot &(p \in \sigma) \\
        \top &(p \not\in \sigma)
    \end{cases} & \\
    \afl(\psi \land \xi) &\coloneqq \afl(\psi) \land \afl(\xi) &
    \afl(\psi \lor \xi) &\coloneqq \afl(\psi) \lor \afl(\xi) \\
    \afl(\ltlX \psi) &\coloneqq \psi & \afl(\ltlY \psi) &\coloneqq \bot \\
    \afl(\psi \ltlU \xi) &\coloneqq \afl(\xi) \lor \afl(\psi) \land \psi \ltlU \xi & \afl(\psi \ltlS \xi) &\coloneqq \afl(\xi) \\
    \afl(\psi \ltlW \xi) &\coloneqq \afl(\xi) \lor \afl(\psi) \land \psi \ltlW \xi & \afl(\psi \ltlWS \xi) &\coloneqq \afl(\psi \lor \xi) \\
    \afl(\psi \ltlM \xi) &\coloneqq \afl(\xi) \land \big( \afl(\psi) \lor \psi \ltlM \xi \big) & \afl(\psi \ltlNWS \xi) &\coloneqq \afl(\psi \land \xi) \\
    \afl(\psi \ltlR \xi) &\coloneqq \afl(\xi) \land \big( \afl(\psi) \lor \psi \ltlR \xi \big) & \afl(\psi \ltlNS \xi) &\coloneqq \afl(\xi) \\
    &&\afl(\ltlWY \psi) &\coloneqq \top.
\end{align*}
\end{definition}
From the above definition we see that subformulae that can be fully evaluated using only the information in $\sigma$ are immediately replaced by $\top$ or $\bot$, while subformulae that require further input (those involving $\ltlX, \ltlU, \ltlW, \ltlM$, and $\ltlR$) are partially evaluated. Because formulae rooted with binary past operators are equivalent to their weakening conditions (see Proposition~\ref{proposition:wc_equivalent}), the latter are immediately evaluated in place of the former.
For example, consider the formulas $\varphi_1:= (a \ltlS b) U c$ and $\varphi_2:= (a \ltlWS b) U c$. Computing the after of $\varphi_1$ reading $\{b\}$ leads to $\afl[\{b\}](c) \vee (\afl[\{b\}](a \ltlS b) \wedge \varphi_1)$, which is $\bot \vee (\varphi_1 \wedge \top)\equiv \varphi_1$. Indeed, $\afl[\{b\}](c)\equiv\bot$ while $\afl[\{b\}](a\ltlS b)\equiv \afl[\{b\}]{b}\equiv \top$. 
Computing the after of $\varphi_1$ reading $\{a\}$ we see that it evaluates to $\bot$. 
Contrarily, the after of $\varphi_2$ reading $\{a\}$ evaluates to $\varphi_2$. Indeed, $\afl[\{a\}](a \ltlWS b)\equiv \top$. 
Notice, that this does not include the update of past information, which is included below. 

Next, we define the (global) after-function. This applies the local after-function, and then applies the rewrite $\langle C \rangle$ for each $C \subseteq \psf(\varphi)$. 
The rewrite essentially naturally embeds the memory of the past in the structure of the formula by replacing strong to weak past operators.
The disjunction over all sets ensures that the ``correct'' set is among those considered, so that the definition is complete with respect to the semantics of \textsf{pLTL}. Moreover, for each $C$ the weakening conditions associated with the formulae in $C$ are added to ensure that $C$ does not weaken ``too much''.
\begin{definition}[The after-function]
\label{definition_after_function}
Let $\varphi$ be a formula and $\sigma$ a letter. 
The after-function $\af[]{}$ is defined as:
\begin{equation*}
    \af(\varphi) \coloneqq \bigvee_{C \subseteq \psf(\varphi)} \Big( \afl(\varphi) \land \bigwedge_{\psi \in C} \afl\big( \wc(\psi) \big) \Big)
    \rw{C},
\end{equation*}
On occasion we write $\af(\varphi, C)$ to refer to a specific disjunct of $\af(\varphi)$, i.e. we let $\af(\varphi, C) \coloneqq \big(\afl(\varphi) \land \bigwedge_{\psi \in C} \afl(\wc(\psi))\big)\rw{C}$.
\end{definition}
Revisiting $\varphi_1 := (a \ltlS b) U c$, and computing $\af[\{b\}](\varphi_1)$ we get a disjunction between $(\afl[\{b\}](\varphi_1)\wedge \afl[\{b\}](\wc(a \ltlS b)))\rw{\{a \ltlS b\}}$ and $(\afl[\{b\}](\varphi_1)\wedge \top)\rw{\emptyset}$.
The former includes $\wc(a \ltlS b)\equiv \afl[\{b\}](b)$ as a conjunct to ensure that the rewrite by $\{a \ltlS b\}$ is correct. The resulting former disjunct is equivalent to $\varphi_2$ and the latter to $\varphi_1$. As $\varphi_1\rightarrow \varphi_2$, the disjunction can be simplified to $\varphi_2$. This internalizes naturally that $b$, as the consequent of $a\ltlS b$, holds and turns it to $a\ltlWS b$.
Now, $\af[\{a\}](\varphi_2)$ is $\varphi_2$ again, correctly 
maintaining the continuous holding of $a \ltlS b$ after reading $\{b\}\cdot \{a\}$. 
\begin{remark}
\label{remark:optimization}
    As an optimization, observe that replacing each maximal past subformula in $\af(\varphi)$ of the form $\ltlY \psi$ by $\bot$ and each maximal past subformula of the form $\ltlWY \psi$ by $\top$ results in an equivalent formula. Similarly, it follows from Proposition~\ref{proposition:wc_equivalent} that each maximal past subformula rooted with either $\ltlS, \ltlWS, \ltlNWS$, or $\ltlWS$, can be replaced by their respective weakening conditions. By doing this until no maximal past subformulae remain we obtain a formula equivalent to $\af(\varphi)$ such that all of its maximal formulae are future formulae. We omit this optimization in order to avoid further complicating the presentation.
\end{remark}

The addition of weakening conditions corresponding to each subset $C \subseteq \psf(\varphi)$ ensures that the definition above is also sound, leading to the main results of this section:
\begin{restatable}{lemma}{entailedcorrectb}
\label{lemma:entailed_correctb}
    Let $\varphi$ be a formula, $\sigma$ a letter, and $w$ a word starting with $\sigma$. Then
    \begin{itemize}
        \item If $w \models \varphi$, then $w_1 \models \af(\varphi, C)$, where $C = \{ \psi \in \psf(\varphi) \mid w \models \wc(\psi) \}$.
        \item If $w_1 \models \af(\varphi, C)$, where $C \subseteq \psf(\varphi)$ is arbitrary, then $w \models \varphi$.
    \end{itemize}
\end{restatable}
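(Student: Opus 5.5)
The plan is to reduce both items to Lemma~\ref{lemma:entailed_correct} (instantiated at $t=0$) and a local correctness property of $\afl$. The auxiliary claim is: for every formula $\varphi$ and word $w$ starting with $\sigma$, $w \models \varphi$ if and only if $(w,1) \models \afl(\varphi)$.

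Here $\afl(\varphi)$ is read as a formula evaluated at position $1$ of $w$ itself, so past subformulae keep access to $\sigma_0$. We prove the claim by structural induction on $\varphi$, using the one-step unfolding of each operator.
\begin{itemize}
\item Atoms, $\land$ and $\lor$ are immediate.
\item $\afl(\ltlX\psi)=\psi$ is exactly the semantics of $\ltlX$.
\item For $\ltlU,\ltlW,\ltlM,\ltlR$ we use the expansion laws, e.g.\ $w\models\psi\ltlU\xi$ iff $w\models\xi$ or ($w\models\psi$ and $(w,1)\models\psi\ltlU\xi$), together with the induction hypothesis on $\psi,\xi$.
\item For past operators at time $0$ we have $w\not\models\ltlY\psi$ and $w\models\ltlWY\psi$. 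Each binary past operator is equivalent at time $0$ to its weakening condition (Proposition~\ref{proposition:wc_equivalent}), which matches the $\afl$ clauses, again using the induction hypothesis on the weakening condition.
\end{itemize}
The only care needed is that $\afl(\psi)$ for an atomic or past-rooted $\psi$ evaluates to $\top/\bot$ or a formula valid at position $1$ exactly when $\psi$ held at $0$. This is just the induction hypothesis, since $\afl$ is applied recursively to subformulae.

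For the first item, let $C=\{\psi\in\psf(\varphi)\mid w\models\wc(\psi)\}$ and assume $w\models\varphi$. By the claim, $(w,1)\models\afl(\varphi)$. For each $\psi\in C$ we have $w\models\wc(\psi)$, hence $(w,1)\models\afl(\wc(\psi))$ by the claim. So $(w,1)\models\chi$, where $\chi=\afl(\varphi)\land\bigwedge_{\psi\in C}\afl(\wc(\psi))$.

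To transfer from $(w,1)$ to $w_1$ we apply Lemma~\ref{lemma:entailed_correct} to $\chi$ with $t=0$. This needs the set used in the rewrite to equal $\{\psi\in\psf(\chi)\mid w\models\wc(\psi)\}$. Since $\psf(\chi)\subseteq\psf(\varphi)$ (as $\afl$ only produces subformulae, $\top$ and $\bot$), this set is $C\cap\psf(\chi)$. Past formulae in $C$ not occurring in $\chi$ do not affect $\chi\rw{C}$, because the rewrite only inspects subformulae actually present. Hence $w_1\models\chi\rw{C}=\af(\varphi,C)$.

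The second item is the main obstacle, because $C$ is arbitrary and Lemma~\ref{lemma:entailed_correct} only covers the correct set. Here is how we handle it. Let $C^*$ be the correct set. We show by induction on formulae that for every subformula $\theta$, $w_1\models\theta\rw{C}$ and $C\subseteq C^*$ on $\psf(\theta)$ together imply $w_1\models\theta\rw{C^*}$. The point is that weakening is monotone: $\varphi_\mathcal{S}$ implies $\varphi_\mathcal{W}$. Formulae in $C^*\setminus C$ are strong in $\theta\rw{C}$ and weak in $\theta\rw{C^*}$. Since the formula is in negation normal form, all contexts are positive, so monotonicity propagates upward through nested rewrites. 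The containment $C\subseteq C^*$ comes from the added conjuncts: $w_1\models\af(\varphi,C)$ gives $w_1\models(\afl(\wc(\psi)))\rw{C}$ for each $\psi\in C$. Applying this monotonicity (inductively on the size of $\wc(\psi)$, a strictly smaller formula), then Lemma~\ref{lemma:entailed_correct} and the claim, we obtain $w\models\wc(\psi)$, i.e.\ $\psi\in C^*$. Hence $w_1\models\afl(\varphi)\rw{C^*}$, so $(w,1)\models\afl(\varphi)$ by Lemma~\ref{lemma:entailed_correct}, and $w\models\varphi$ by the claim.

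The delicate point is ordering this induction correctly: establish $C\subseteq C^*$ restricted to smaller formulae before using monotonicity on larger ones.
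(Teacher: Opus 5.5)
Your proof is correct, but it is organized differently from the paper's. The paper proves both items by one structural induction on $\varphi$ in which each temporal case does its own shift. In the cases $\ltlX\psi$ and $\psi\ltlU\xi$ it invokes Lemma~\ref{lemma:entailed_correct} on the subformula deferred to the next step. For the converse it invokes a cited monotonicity proposition ($C \subseteq C'$ implies $\theta\rw{C} \Rightarrow \theta\rw{C'}$ at every position). The conjuncts $\afl(\wc(\xi))$, and the containment $C \subseteq C^*$ in the second item, are simply attributed to ``the inductive hypothesis''. You instead separate the two concerns. First you prove a rewrite-free unfolding lemma, $w \models \varphi$ iff $(w,1) \models \afl(\varphi)$, by structural induction. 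Then you make a single application of Lemma~\ref{lemma:entailed_correct} at $t=0$ to the whole conjunction $\chi$, justified by $\psf(\chi) \subseteq \psf(\varphi)$ and the fact that $\chi\rw{C}$ depends only on $C \cap \psf(\chi)$. Your route buys modularity and explicitness: the only case analysis over operators involves no past encoding at all. Your well-founded induction on the size of $\psi \in C$, yielding $C \subseteq C^*$, also makes precise a step the paper leaves implicit. The paper's route is shorter on the page and reuses an existing monotonicity result rather than re-deriving it. One small repair: state your monotonicity claim for all positions, i.e.\ $(w_1,t) \models \theta\rw{C}$ implies $(w_1,t) \models \theta\rw{C^*}$ for every $t$. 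As written (only at position $0$ of $w_1$), the induction does not pass through temporal operators; the global form is exactly the proposition the paper cites from Azzopardi et al.
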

From this we obtain the following corollary:
\begin{restatable}{corollary}{afcorrect}
\label{lemma:af_correct}
    Let $\varphi$ be a formula, $\sigma$ a letter, and $w$ a word starting with $\sigma$. Then $w \models \varphi$ iff $w_1 \models \af(\varphi)$.
\end{restatable}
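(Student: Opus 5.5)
The corollary follows directly from Lemma~\ref{lemma:entailed_correctb} once we observe that $\af(\varphi)$ is, by Definition~\ref{definition_after_function}, the finite disjunction $\bigvee_{C \subseteq \psf(\varphi)} \af(\varphi, C)$. The only subtlety is that the rewrite $\rw{C}$ is applied to each disjunct separately, so satisfaction of the whole disjunction is exactly satisfaction of some disjunct $\af(\varphi, C)$. Hence $w_1 \models \af(\varphi)$ iff there exists $C \subseteq \psf(\varphi)$ with $w_1 \models \af(\varphi, C)$, by the semantics of $\lor$ in Figure~\ref{fig:ltl semantics} at position $0$.

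For the forward direction, assume $w \models \varphi$. We take the specific set $C = \{ \psi \in \psf(\varphi) \mid w \models \wc(\psi) \}$, which is a subset of $\psf(\varphi)$ and therefore indexes one of the disjuncts. The first item of Lemma~\ref{lemma:entailed_correctb} gives $w_1 \models \af(\varphi, C)$, and hence $w_1 \models \af(\varphi)$.

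For the backward direction, assume $w_1 \models \af(\varphi)$. By the observation above, some disjunct $\af(\varphi, C)$ with $C \subseteq \psf(\varphi)$ is satisfied by $w_1$. The second item of Lemma~\ref{lemma:entailed_correctb} applies to arbitrary such $C$ and yields $w \models \varphi$.

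There is no real obstacle here. The only point to check is that the disjunction in Definition~\ref{definition_after_function} ranges over all subsets of $\psf(\varphi)$, so that the particular entailed set used in the first item is among the disjuncts. This is immediate from the definition.
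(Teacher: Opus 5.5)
Your proposal is correct and follows the paper's proof: both directions come straight from the two items of Lemma~\ref{lemma:entailed_correctb}, using that $\af(\varphi)$ is the disjunction of the disjuncts $\af(\varphi, C)$ over all $C \subseteq \psf(\varphi)$. In the forward direction the entailed set $\{ \psi \in \psf(\varphi) \mid w \models \wc(\psi) \}$ is the witnessing disjunct, and in the backward direction any satisfied disjunct suffices.
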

The extension of $\af[]$ to finite words is done in the natural way: given a formula $\varphi$ and word $w$ of length $n$, we define
\begin{equation*}
\begin{aligned}    
    \af[\epsilon](\varphi) &\coloneqq \varphi &&(n = 0) \\
    \af[w_{0n}](\varphi) &\coloneqq \af[w_{(n-1)n}](\af[w_{0(n-1)}](\varphi)) &&(n > 0).
\end{aligned}
\end{equation*}
Observe that Corollary~\ref{lemma:af_correct} immediately extends to finite words by induction.

\section{From \texorpdfstring{\textsf{pLTL}}{pLTL} To LDBA}
\label{section_pltl_to_ldba}
\subsection{The Limit Behavior of Words}
\label{section_limit_behavior}
Here we briefly outline the results by Esparza et al.~\cite{esparza_unified_translation}, adapted to the setting of \textsf{pLTL} by Azzopardi et al.~\cite{azzopardi_et_al:LIPIcs.MFCS.2024.13}, that are of relevance for this construction. For a more thorough exposition of these concepts we refer to the former for their introduction and to the latter for their adaptation to \textsf{pLTL}.

Let $\mu$-$\mathsf{pLTL}$ denote the set of formulae whose future operators are members of $\{ \ltlX, \ltlU, \ltlM \}$, and $\nu$-$\mathsf{pLTL}$ the set of formulae whose future operators are members of $\{ \ltlX, \ltlW, \ltlR \}$. Given a formula $\varphi$, we denote by $\mu(\varphi)$ the set of subformulae of $\varphi$ whose syntax trees are rooted with $\ltlU$ or $\ltlM$ and by $\nu(\varphi)$ the set of subformulae of $\varphi$ whose syntax trees are rooted with $\ltlW$ or $\ltlR$. The relevance of these definitions is that we can easily obtain automata equivalent to formulae in $\mu$-\textsf{pLTL} and $\nu$-\textsf{pLTL} using the after-function: given a formula in $\mu$-\textsf{pLTL} we can construct an equivalent automaton by taking propositional equivalence classes of formulae as states, using the after-function lifted to such classes as transition function, and defining the set $[\top]_\sim$ as accepting. For a formula $\varphi \in \nu$-\textsf{pLTL} we instead take all states as accepting except $[\bot]_\sim$. This is motivated by the following lemma:
\begin{lemma}[\cite{azzopardi_et_al:LIPIcs.MFCS.2024.13}]
\label{lemma_mu_nu_limits}
    Let $\varphi$ be a formula and $w$ a word. Then
    \begin{equation*}
    \begin{aligned}
        &1) \mbox{ If } \varphi \in \mu\mbox{-}\mathsf{pLTL} \mbox{ then } w \models \varphi \Leftrightarrow \exists t \, . \, \af[w_{0t}](\varphi) \sim \top. \\
        &2) \mbox{ If } \varphi \in \nu\mbox{-}\mathsf{pLTL} \mbox{ then } w \models \varphi \Leftrightarrow \forall t \, . \, \af[w_{0t}](\varphi) \not\sim \bot.
    \end{aligned}
    \end{equation*}
\end{lemma}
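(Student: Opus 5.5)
The plan is to derive both items from Corollary~\ref{lemma:af_correct}, lifted to finite words, together with a structural fact about the shape of $\af[w_{0t}](\varphi)$ for formulae in the two fragments. The starting point is the identity
\[
(w,0)\models\varphi \iff w_t \models \af[w_{0t}](\varphi),
\]
which holds for every $t$ by induction on $t$ using Corollary~\ref{lemma:af_correct}. Because $\afl$ and the rewriting $\rw{C}$ are homomorphic for $\land$, $\lor$, $\top$ and $\bot$, Lemma~\ref{lemma_propequiv_preserved} shows that the iteration is well defined on $\sim$-classes. The next thing to establish is closure: if $\varphi\in\mu$-$\mathsf{pLTL}$, then so is $\af(\varphi)$, and likewise for $\nu$. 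This holds because $\afl$ introduces only copies of existing future subformulae, and the weakening/strengthening and $\rw{C}$ rewrites touch only past operators.

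For the easy directions, suppose in case 1) that $\af[w_{0t}](\varphi)\sim\top$. Then $w_t$ satisfies it trivially, so $w\models\varphi$. In case 2), suppose $w\models\varphi$. Then for every $t$ we have $w_t\models\af[w_{0t}](\varphi)$, so $\af[w_{0t}](\varphi)\not\sim\bot$, since $\bot$ is unsatisfiable and $\sim$ implies equivalence.

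The hard directions need a well-founded measure. For 1), assume $w\models\varphi$ with $\varphi\in\mu$-$\mathsf{pLTL}$. I would prove the stronger claim that for every $\mu$-formula $\psi$ and every $t$, if $(w,t)\models\psi$ then there is some $s$ such that $\af[w_{ts}](\psi')\sim\top$, where $\psi'$ is $\psi$ in its correct past configuration given by $\compseq{\psi}{t}$. The proof is by structural induction on $\psi$, with an inner induction on the witness index for $\ltlU$ and $\ltlM$. The cases go as follows.
\begin{itemize}
\item Atoms are settled after one step.
\item $\ltlX$ is settled after one step plus the induction hypothesis.
\item For $\psi_1\ltlU\psi_2$ with witness $r$, unfold $\afl$ repeatedly. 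At each $s<r$ the $\psi_1$-conjunct eventually becomes $\top$ by the induction hypothesis, and at $r$ the $\psi_2$-disjunct does.
\item Conjunction and disjunction combine the individual witnesses by taking the maximum time, using that $\top$ is absorbing and is preserved by $\af$.
\item Past subformulae are settled immediately at the first step. After rewriting, $\afl$ of a past formula evaluates to $\top$ or $\bot$ through its weakening condition, and by Lemma~\ref{lemma:entailed_correct} and Lemma~\ref{lemma:entailed_correctb} the disjunct indexed by the correct $C$ has the right truth value.
\end{itemize}

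For 2), I would argue dually by contraposition. Suppose $w\not\models\varphi$. Then $\neg\varphi$, put into negation normal form, is a $\mu$-formula, because $\ltlW$ and $\ltlR$ dualize to $\ltlM$ and $\ltlU$. The goal is then a duality lemma of the form $\af(\overline{\psi})\sim\overline{\af(\psi)}$. For $\afl$ this is routine. The disjunction over $C$ in $\af$ breaks naive duality, but it can be resolved by selecting the disjunct for the correct $C$ given by Lemma~\ref{lemma:entailed_correctb} and noting that the other disjuncts are unsatisfiable, hence they do not affect $\sim\bot$ once past subformulae are evaluated.

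I expect the main obstacle to be this interaction between the past rewriting and propositional equivalence. The disjuncts for wrong guesses $C$ are semantically false but not propositionally $\bot$ until their weakening conditions are evaluated. Since $\afl$ evaluates maximal past formulae via $\wc$, they do collapse, but only one step later. The inductions must therefore carry an extra step of slack and track the configuration $\compseq{\varphi}{t}$ explicitly.
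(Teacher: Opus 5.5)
\textbf{The hard direction of 2) does not go through.} You concede that $\af(\overline{\psi})\sim\overline{\af(\psi)}$ fails, but your repair reverses the quantifier and rests on a false claim.

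By definition, $\af(\zeta)\sim\bot$ holds iff \emph{every} disjunct $\big(\afl(\zeta)\land W_C\big)\rw{C}$ is $\sim\bot$, where $W_C=\bigwedge_{\xi\in C}\afl(\wc(\xi))$. Iterating, every sequence of guesses has to die. Selecting the disjunct of the correct guess works for $\sim\top$ in 1), where by positivity one disjunct suffices, but not for $\sim\bot$.

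Moreover, wrong-guess disjuncts are neither unsatisfiable nor killed one step later. They can be exactly what delays $\sim\bot$. Take $\varphi=\ltlX\ltlY\ltlG q$ and a word with $q$ exactly at positions $0,\dots,n-1$, $n\geq 1$. Then $\af[\sigma_0](\varphi)\sim\ltlY\ltlG q\lor(\ltlWY\ltlG q\land\ltlG q)$. The disjunct of the correct guess $C=\varnothing$ becomes $\bot$ at the next step. The over-weakening guess $\{\ltlY\ltlG q\}$ survives as the satisfiable formula $\ltlG q$, so $\af[w_{0t}](\varphi)\sim\bot$ first holds at $t=n+1$.

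Under-weakening guesses can even be true on $w_1$. For $\ltlX(p\ltlS q)$ with $q\in\sigma_0\cap\sigma_1$, the $C=\varnothing$ disjunct $p\ltlS q$ holds on $w_1$. In general all you have is soundness of every disjunct, the second part of Lemma~\ref{lemma:entailed_correctb}.

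What is missing is an argument that a falsified $\nu$-formula kills \emph{all} branches. A direct one works. Induct on the (configured) formula, with an inner induction on the failure position for $\ltlW$ and $\ltlR$, and let $C^*=\{\xi\mid w\models\wc(\xi)\}$.
\begin{itemize}
\item For $C\subseteq C^*$, already $\afl(\zeta)\rw{C}$ is false on $w_1$, since the argument of Lemma~\ref{lemma:entailed_correctb} only uses $C\subseteq C^*$.
\item For $C\not\subseteq C^*$, pick $\xi\in C\setminus C^*$. The disjunct propositionally entails a disjunct of $\af(\wc(\xi))$, and $\wc(\xi)$ is a smaller $\nu$-formula falsified by $w$.
\item Combine the finitely many $C$ using three facts: $\af$ is monotone w.r.t.\ $\models_p$, it distributes over $\lor$ up to $\sim$, and it preserves $\bot$.
\end{itemize}

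\textbf{The hard direction of 1) has a related hole.} The disjunct of the correct guess is $\big(\afl(\zeta)\land W_{C^*}\big)\rw{C^*}$, not $\afl(\zeta)\rw{C^*}$, and weakening conditions can carry future obligations. Take $\ltlX\ltlY\ltlF q$ with $q$ first at position $n\geq 2$. Your $\ltlX$ case would yield $\top$ after two steps, since the correct configuration $\ltlWY\ltlF q$ of the argument is settled in one step. Yet $\af[w_{02}](\ltlX\ltlY\ltlF q)\sim\ltlF q$, and $\top$ is reached only at $t=n+1$.

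These obligations must be discharged by the induction hypothesis on the smaller $\wc(\xi)$. The resulting conjunctions cannot be combined by taking the ``maximum of witnesses'' as in $\mathsf{LTL}$. The reason is that $\af$ is not a homomorphism for $\land$: the disjunction over $C$ ranges over $\psf$ of the whole argument. For instance, let $p\in\sigma$. The assignment $\{d\ltlU\ltlWY p,\ c\ltlU\ltlY p\}$ satisfies $\af(\ltlX(d\ltlU\ltlY p))\land\af(\ltlX(c\ltlU\ltlY p))$, but not $\af\big(\ltlX(d\ltlU\ltlY p)\land\ltlX(c\ltlU\ltlY p)\big)$. For the same reason, Lemma~\ref{lemma_propequiv_preserved} does not apply to $\af$ itself; working modulo $\sim$ needs monotonicity of $\af$ w.r.t.\ $\models_p$ instead.

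The clean fix is to run the induction along the single branch of guesses given by the entailed sets. There each step is a Boolean homomorphism up to duplicated $W$-conjuncts, and the obligations are discharged by the induction hypothesis. At the end, note that this branch propositionally entails $\af[w_{0t}](\varphi)$.
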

Consider a formula $\varphi$, a word $w$, and two disjoint subsets $\langle \Phi, \Psi \rangle$ of subformulae of $\varphi$. It is easy to see that there exists some index after which all subformulae in $\Phi$ are satisfied by $w$ infinitely often or not at all, and all subformulae in $\Psi$ are satisfied by $w$ either persistently (i.e., from that point onward) or are false infinitely often. In particular, this is true in the special case when $\Phi = \mu(\varphi)$ and $\Psi = \nu(\varphi)$. We are interested in finding subsets $M \subseteq \mu(\varphi)$ and $N \subseteq \nu(\varphi)$ that precisely capture this ``limit behavior'' of $w$, in the sense that there exists a $t$ such that $M$ coincides with the sets $\F{\varphi}{w}{t}$ and $\GF{\varphi}{w}{t}$ below, and $N$ with the sets $\G{\varphi}{w}{t}$ and $\FG{\varphi}{w}{t}$:
\begin{equation*}
\begin{aligned}
    &\F{\varphi}{w}{t} \coloneqq \{ \psi \in \mu(\varphi) \mid (w, t) \models \ltlF \psi \}
    &&\GF{\varphi}{w}{t} \coloneqq \{ \psi \in \mu(\varphi) \mid (w, t) \models \ltlG \ltlF \psi \} \\
    &\G{\varphi}{w}{t} \coloneqq \{ \psi \in \nu(\varphi) \mid (w, t) \models \ltlG \psi \}
    &&\FG{\varphi}{w}{t} \coloneqq \{ \psi \in \nu(\varphi) \mid (w, t) \models \ltlF \ltlG \psi \}.
\end{aligned}
\end{equation*}
\begin{definition}[Stable words]
    A word $w$ is $\mu$-\textit{stable} ($\nu$-\textit{stable}) with respect to a formula $\varphi$ at index $t$ if $\F{\varphi}{w}{t} = \GF{\varphi}{w}{t}$ ($\G{\varphi}{w}{t} = \FG{\varphi}{w}{t}$). If $w$ is both $\mu$-stable and $\nu$-stable with respect to $\varphi$ at index $t$, then it is \textit{stable} with respect to $\varphi$ at %index 
    $t$.
\end{definition}
Assume that $w$ is stable with respect to $\varphi$ at $t$. If we know the limit behavior of $w$ with respect to $\varphi$, in the sense that we know either which formulae in $\mu(\varphi)$ hold infinitely often or which formulae in $\nu(\varphi)$ hold persistently, then at $t$ we can rewrite $\varphi$ into either an equivalent $\nu$-\textsf{pLTL} formula or an equivalent $\mu$-\textsf{pLTL} formula, respectively. These rewrites are done according to definitions \ref{definition_M} and \ref{definition_MN_transform}:
\begin{definition}
\label{definition_M}
Let $\varphi$ be a formula and $M$ a set of formulae whose syntax trees are rooted with $\ltlU$ or $\ltlM$. The formula $\varphi[M]_\nu$ is inductively defined as
\begin{equation*}
\begin{aligned}
    \psi[M]_\nu &\coloneqq \psi &&(\psi \, \text{atomic}) \\
    (\op \psi)[M]_\nu &\coloneqq \op (\psi[M]_\nu) &&(\op \text{unary}) \\
    (\psi \op \xi)[M]_\nu &\coloneqq (\psi[M]_\nu) \op (\xi[M]_\nu) &&(\op \in \{ \land, \lor, \ltlW, \ltlR, \ltlS, \ltlWS, \ltlNWS, \ltlNS \}) \\
    (\psi \ltlU \xi)[M]_\nu &\coloneqq
    \mathrlap{\begin{cases}
        (\psi[M]_\nu) \ltlW (\xi[M]_\nu) &(\psi \ltlU \xi \in M) \\
        \bot &(\mbox{otherwise})
    \end{cases}}
    \\
    (\psi \ltlM \xi)[M]_\nu &\coloneqq
    \mathrlap{\begin{cases}
        (\psi[M]_\nu) \ltlR (\xi[M]_\nu) &(\psi \ltlM \xi \in M) \\
        \bot &(\mbox{otherwise}),
    \end{cases}}
\end{aligned}
\end{equation*}
\end{definition}
\begin{definition}
\label{definition_MN_transform}
Let $\varphi$ be a formula and $N$ a set of formulae whose syntax trees are rooted with $\ltlW$ or $\ltlR$. The formula $\varphi[N]_\mu$ is inductively defined as
\begin{equation*}
\begin{aligned}
    (\psi \ltlW \xi)[N]_\mu &\coloneqq
    \begin{cases}
        \top &(\psi \ltlW \xi \in N) \\
        (\psi[N]_\mu) \ltlU (\xi[N]_\mu) &(\mbox{otherwise})
    \end{cases} \\
    (\psi \ltlR \xi)[N]_\mu &\coloneqq
    \begin{cases}
        \top &(\psi \ltlR \xi \in N) \\
        (\psi[N]_\mu) \ltlM (\xi[N]_\mu) &(\mbox{otherwise}).
    \end{cases}
\end{aligned}
\end{equation*}
The other cases are defined by recursive descent similarly to Definition~\ref{definition_M}.
\end{definition}
The conditions under which such rewrites result in equisatisfiable formulae with respect to a given word are given by the following lemma:
\begin{lemma}[\cite{azzopardi_et_al:LIPIcs.MFCS.2024.13}]
\label{lemma_MN_properties}
    Let $\varphi$ be a formula, $w$ a word, and $t \in \mathbb{N}$. Let $M$ and $N$ be sets of formulae. Then
    \begin{equation*}
    \begin{aligned}
        &1) \mbox{ If } \F{\varphi}{w}{0} \subseteq M \mbox{ and } (w, t) \models \varphi, \mbox{then } (w, t) \models \varphi[M]_\nu. \\
        &2) \mbox{ If } M \subseteq \GF{\varphi}{w}{0} \mbox{ and } (w, t) \models \varphi[M]_\nu, \mbox{then } (w, t) \models \varphi. \\
        &3) \mbox{ If } \FG{\varphi}{w}{0} \subseteq N \mbox{ and } (w, t) \models \varphi, \mbox{then } (w, t) \models \varphi[N]_\mu. \\
        &4) \mbox{ If } N \subseteq \G{\varphi}{w}{0} \mbox{ and } (w, t) \models \varphi[N]_\mu, \mbox{then } (w, t) \models \varphi.
    \end{aligned}
    \end{equation*}
\end{lemma}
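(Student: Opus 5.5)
We prove all four claims by structural induction on $\varphi$, with $t$ universally quantified inside the induction hypothesis, so that temporal operators can shift the evaluation point. Two monotonicity facts drive the argument. Since $\F{\varphi}{w}{0}$ collects every $\psi\in\mu(\varphi)$ satisfied at some position of $w$, the hypothesis $\F{\varphi}{w}{0}\subseteq M$ in 1) says: every $\ltlU$/$\ltlM$ subformula that holds at \emph{any} time $s$ lies in $M$. Likewise, $M\subseteq\GF{\varphi}{w}{0}$ in 2) says every member of $M$ holds infinitely often. Dually, $\FG{\varphi}{w}{0}\subseteq N$ says every $\ltlW$/$\ltlR$ subformula that eventually holds forever lies in $N$, and $N\subseteq\G{\varphi}{w}{0}$ says every member of $N$ holds at all positions. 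The sets $\F{\cdot}{}{}$, $\GF{\cdot}{}{}$, etc.\ of a subformula are contained in those of $\varphi$, so the hypotheses pass to subformulae.

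The propositional cases, atoms, $\ltlX$, and all past operators are immediate. The rewrite commutes with these operators, and their semantics is monotone in the truth of their arguments at the relevant positions. The induction hypothesis at arbitrary times $s$ (including earlier ones for $\ltlY,\ltlS$, and the derived past operators) therefore transfers satisfaction. Negation only occurs on atoms, by negation normal form, so monotonicity is available. The case $\ltlW$/$\ltlR$ in 1) and 2) is also direct by monotonicity.

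For 1), take $\psi\ltlU\xi$ with $(w,t)\models\psi\ltlU\xi$. Then $\psi\ltlU\xi\in\F{\varphi}{w}{0}\subseteq M$, so the rewrite is $\psi[M]_\nu\ltlW\xi[M]_\nu$. The witness positions transfer by the induction hypothesis, and $\ltlU$ implies $\ltlW$. The case $\ltlM$ is analogous.

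For 2), the rewrite is $\bot$ unless $\psi\ltlU\xi\in M$, so assume $(w,t)\models\psi[M]_\nu\ltlW\xi[M]_\nu$. If the until-disjunct holds, we are done by the induction hypothesis. Otherwise $\psi[M]_\nu$, and hence $\psi$, holds from $t$ on. Because $\psi\ltlU\xi\in\GF{\varphi}{w}{0}$, there is $r\ge t$ with $(w,r)\models\psi\ltlU\xi$. This gives some $r'\ge r$ with $\xi$ true at $r'$, and together with $\psi$ holding everywhere from $t$ we obtain $(w,t)\models\psi\ltlU\xi$. This is the key step; $\ltlM$ is symmetric, using $\xi$ persistently.

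For 3), take $\psi\ltlW\xi$. If it is in $N$, the rewrite is $\top$. Otherwise $\psi\ltlW\xi\notin\FG{\varphi}{w}{0}$, so it fails infinitely often. If $(w,t)\models\psi\ltlG$-branch held, i.e.\ $\psi$ held forever from $t$, then $\psi\ltlW\xi$ would hold at every $s\ge t$, a contradiction. Hence the strong until holds, and we transfer it by the induction hypothesis.

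For 4), if the subformula is in $N$ it holds at all positions, in particular at $t$. Otherwise we transfer $\ltlU\Rightarrow\ltlW$ by the induction hypothesis.

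The main obstacle is keeping the induction hypothesis general enough, namely universally over all $t$ and with the hypotheses on $M$ and $N$ inherited by subformulae. A second difficulty is the past cases, where evaluation moves backward; the index-free hypotheses (taken at $0$) make this harmless.
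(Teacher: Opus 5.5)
Your proof is correct and is the standard structural induction used for this lemma. The paper imports the lemma from Azzopardi et al.\ without proof, and your argument matches theirs (and Esparza et al.'s for plain \textsf{LTL}): past operators commute with both rewrites and are monotone, the index-$0$ hypotheses on $M$ and $N$ make backward shifts harmless, and the substantive cases are the $\ltlU$/$\ltlM$ case of 2) and the $\ltlW$/$\ltlR$ case of 3), handled as you do.

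One small wording fix concerns 2) and 4). There the hypothesis does not literally pass to a subformula $\psi$ as $M \subseteq \GF{\psi}{w}{0}$, because $\GF{\psi}{w}{0} \subseteq \GF{\varphi}{w}{0}$ points the wrong way. So state the induction over subformulae of the fixed $\varphi$ using $\varphi$'s sets, which is exactly what your key step in 2) already does.
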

As stated, Lemma~\ref{lemma_MN_properties} describes under which assumptions rewrites by $\placeholder[M]_\nu$ and $\placeholder[N]_\mu$ result in equisatisfiable formulae with respect to a given word. Of particular interest is that the resulting formulae are in $\nu\mbox{-}\mathsf{pLTL}$ and $\mu\mbox{-}\mathsf{pLTL}$, respectively. By Lemma~\ref{lemma_mu_nu_limits} and the discussion that precedes it, the satisfaction of such formulae is easily checked using Büchi automata. Lemma~\ref{lemma_MN_=>} and Lemma~\ref{lemma_MN_<=} describe how to check the assumptions on $M$ and $N$.
\begin{lemma}[\cite{azzopardi_et_al:LIPIcs.MFCS.2024.13}]
\label{lemma_MN_=>}
    Let $\varphi$ be a formula and $w$ a word that is stable with respect to $\varphi$ at index $s$. If $M = \GF{\varphi}{w}{0}$ and $N = \FG{\varphi}{w}{0}$, then
    \begin{gather*}
        \forall \psi \in M \, . \, \forall t \geq 0 \, . \, w_t \models \ltlF (\psi\rw{\compseq{\varphi}{t}}[N\rw{\compseq{\varphi}{t}}]_\mu) \\
        \mbox{and} \\
        \forall \psi \in N \, . \, \forall t \geq s \, . \, w_t \models \ltlG (\psi\rw{\compseq{\varphi}{t}}[M\rw{\compseq{\varphi}{t}}]_\nu).
    \end{gather*}
\end{lemma}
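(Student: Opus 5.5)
Both claims reduce the time-$t$ statement about the original $\varphi$ to a time-$0$ statement about the suffix $w_t$. For this I use the ``past configuration'' produced by $\compseq{\varphi}{t}$.

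\textbf{Step 1: shifting lemma.} I first iterate Lemma~\ref{lemma:entailed_correct} to get the following. For every subformula $\chi$ of $\varphi$ and all $t,r\in\mathbb{N}$,
\[
(w,t+r)\models\chi \iff (w_t,r)\models\chi\rw{\compseq{\varphi}{t}}.
\]
The proof is by induction on $t$. At each step the set $\entcon{\varphi}{t}$ is exactly the set of past subformulae whose weakening conditions are entailed by $w_{t-1}$, so Lemma~\ref{lemma:entailed_correct} applies to the current rewritten formula with $w_{t-1}$ in place of $w$. Since the rewrite $\rw{\cdot}$ only toggles past operators and commutes with subformula structure, the composed set $\compseq{\varphi}{t}$ acts on each subformula $\psi$ consistently with how it acts inside $\varphi$.

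The same argument applies after the $[\cdot]_\mu$ and $[\cdot]_\nu$ rewrites, because those rewrites do not touch past operators. Hence I get
\[
(\psi[N]_\mu)\rw{\compseq{\varphi}{t}}=\psi\rw{\compseq{\varphi}{t}}\bigl[N\rw{\compseq{\varphi}{t}}\bigr]_\mu,
\]
and similarly for $[\cdot]_\nu$. This commutation is checked by structural induction, comparing the two definitions clause by clause. It needs care when $\psi\ltlW\xi\in N$ has past subformulae: membership must be tested on the rewritten set $N\rw{\compseq{\varphi}{t}}$, and that is why the lemma is phrased that way.

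\textbf{Step 2: first claim.} Take $\psi\in M=\GF{\varphi}{w}{0}$. Then $\psi$ holds at infinitely many positions $r\ge t$. For the limit behaviour, I set $N=\FG{\varphi}{w}{0}$. Every formula in $N$ is eventually always true, so by choosing $r$ large enough I get $(w,r)\models\psi$ with $r$ beyond the point where $w$ is stable. Beyond that point $\FG{\varphi}{w}{r}=\G{\varphi}{w}{r}$, and this set equals $N$.

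I then apply Lemma~\ref{lemma_MN_properties}(3), taking the word to be the suffix $w_r$ and the formula $\psi$ in its past-configured form. The hypothesis holds because $\FG{}{w_r}{0}\subseteq N$, since $\FG{}{}{}$ is invariant under suffixes. This gives that $w_r$ satisfies $\psi[N]_\mu$ in the configured sense, and Step 1 transfers this back to $w_t$ with $\ltlF$. The subtlety is that Lemma~\ref{lemma_MN_properties} is stated for formulae without a past configuration at time $0$. I handle this by applying it to the original word $w$ at position $r$, and only afterwards shifting with Step 1: $(w,r)\models\psi$ implies $(w,r)\models\psi[N]_\mu$, and this equals $(w_t,r-t)\models(\psi[N]_\mu)\rw{\compseq{\varphi}{t}}$.

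\textbf{Step 3: second claim.} Take $\psi\in N$ and $t\ge s$. By stability at $s$, $(w,r)\models\psi$ for all $r\ge t$. Lemma~\ref{lemma_MN_properties}(1), applied at position $r$ with $M=\GF{\varphi}{w}{0}$, requires $\F{\varphi}{w}{0}\subseteq M$, which generally fails. So I instead apply it to the shifted word $w_s$, for which $\F{}{w_s}{0}=\GF{}{w_s}{0}=M$ by $\mu$-stability. To do this I first move to $w_s$ with past configuration via Step 1, then use Lemma~\ref{lemma_MN_properties}(1), which I must know holds for configured formulae. I will argue that its proof is by structural induction on future operators and is insensitive to the past configuration. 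Finally I shift from $s$ to $t$, which gives $\ltlG$.

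\textbf{Main obstacle.} The main obstacle is precisely this interplay: justifying Lemma~\ref{lemma_MN_properties} for past-rewritten formulae on suffixes, and the commutation of $\rw{\cdot}$ with $[\cdot]_\mu$ and $[\cdot]_\nu$. I expect to prove both by structural induction.
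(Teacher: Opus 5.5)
Your argument has a genuine gap, in the second half of Step~1 and every place it is used. That is the identity $(\psi[N]_\mu)\rw{\compseq{\varphi}{t}}=\psi\rw{\compseq{\varphi}{t}}[N\rw{\compseq{\varphi}{t}}]_\mu$, and more generally the idea that a formula already rewritten by $[\cdot]_\mu$ or $[\cdot]_\nu$ can be moved from one past configuration to another. Step~2 relies on it twice: in ``Step~1 transfers this back to $w_t$'' (from $w_r$), and in ``$(w,r)\models\psi[N]_\mu$ \dots\ equals $(w_t,r-t)\models(\psi[N]_\mu)\rw{\compseq{\varphi}{t}}$''. Step~3 relies on it in ``shift from $s$ to $t$''.

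The identity is false. The rewrites $[\cdot]_\mu$ and $[\cdot]_\nu$ keep past operators in place but change their arguments, and hence their weakening conditions. Let $\zeta=\ltlY(a\ltlW b)$ be a past subformula of $\psi$ with $a\ltlW b\in N$. Then $\zeta[N]_\mu=\ltlY\top$, which in general is not in $\psf(\varphi)$, so $(\zeta[N]_\mu)\rw{\compseq{\varphi}{t}}=\ltlY\top$.
\begin{itemize}
\item If $w_{t-1}\models a\ltlW b$, then $\zeta\rw{\compseq{\varphi}{t}}[N\rw{\compseq{\varphi}{t}}]_\mu=\ltlWY\top$. The identity fails syntactically, and also semantically at time $0$ of $w_t$.
\item If $w_{t-1}\not\models a\ltlW b$ (allowed, since $N=\FG{\varphi}{w}{0}$ only gives eventual persistence), then $(w,t)\models\zeta[N]_\mu$ for $t\geq 1$. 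However, $(w_t,0)\not\models\zeta\rw{\compseq{\varphi}{t}}[N\rw{\compseq{\varphi}{t}}]_\mu=\ltlY\top$.
\end{itemize}
So the transfer fails even semantically. Your shifting lemma is valid only for genuine subformulae of $\varphi$ under their own entailed sets.

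The detour comes from a subtlety that does not exist. Lemma~\ref{lemma_MN_properties} holds for an arbitrary formula, word and position, and $\psi\rw{\compseq{\varphi}{t}}$ is just a formula. So fix $t$ and work only on $w_t$.

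For the first claim: by the valid half of Step~1, $\nu(\varphi\rw{\compseq{\varphi}{t}})=\nu(\varphi)\rw{\compseq{\varphi}{t}}$, and $w_t\models\ltlF\ltlG(\chi\rw{\compseq{\varphi}{t}})$ iff $w\models\ltlF\ltlG\chi$. Hence $\FG{\varphi\rw{\compseq{\varphi}{t}}}{w_t}{0}\subseteq N\rw{\compseq{\varphi}{t}}$. Choose $r\geq t$ with $(w,r)\models\psi$; then $(w_t,r-t)\models\psi\rw{\compseq{\varphi}{t}}$. Lemma~\ref{lemma_MN_properties}(3), applied to $w_t$ at position $r-t$, gives the claim.

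For the second claim, let $t\geq s$. Stability gives $\F{\varphi}{w}{t}\subseteq\F{\varphi}{w}{s}=\GF{\varphi}{w}{s}=M$, hence $\F{\varphi\rw{\compseq{\varphi}{t}}}{w_t}{0}\subseteq M\rw{\compseq{\varphi}{t}}$. Since $\psi\in\FG{\varphi}{w}{s}=\G{\varphi}{w}{s}$, you have $(w_t,r)\models\psi\rw{\compseq{\varphi}{t}}$ for every $r$. Lemma~\ref{lemma_MN_properties}(1) at each $r$ then yields the $\ltlG$.

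This route needs no passage through $w_s$ or $w_r$ and no commutation of $\rw{\cdot}$ with $[\cdot]_\mu$ or $[\cdot]_\nu$. It also needs no re-proof of Lemma~\ref{lemma_MN_properties} for configured formulae. It is the same way the paper itself applies Lemma~\ref{lemma_MN_properties} in its appendix arguments (e.g.\ for Lemma~\ref{lemma_B1}).
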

\begin{lemma}[\cite{azzopardi_et_al:LIPIcs.MFCS.2024.13}]
\label{lemma_MN_<=}
    Let $\varphi$ be a formula and $w$ a word. Let $M \subseteq \mu(\varphi)$ and $N \subseteq \nu(\varphi)$. If
    \begin{gather*}
        \forall \psi \in M \, . \, \forall r \, . \, \exists t \geq r \, . \, w_t \models \ltlF (\psi\rw{\compseq{\varphi}{t}}[N\rw{\compseq{\varphi}{t}}]_\mu) \\
        \mbox{and} \\
        \forall \psi \in N \, . \, \exists t \geq 0 \, . \, w_t \models \ltlG (\psi\rw{\compseq{\varphi}{t}}[M\rw{\compseq{\varphi}{t}}]_\nu),
    \end{gather*}
    then $M \subseteq \GF{\varphi}{w}{0}$ and $N \subseteq \FG{\varphi}{w}{0}$,
\end{lemma}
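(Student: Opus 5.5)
We prove the two conclusions separately, each by contradiction; they are symmetric.

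\emph{The claim $M \subseteq \GF{\varphi}{w}{0}$.} Fix $\psi \in M$. The hypothesis gives infinitely many $t$ with
\[
w_t \models \ltlF\bigl(\psi\rw{\compseq{\varphi}{t}}[N\rw{\compseq{\varphi}{t}}]_\mu\bigr).
\]
We want to turn each such witness into a position where $\psi$ itself holds. By Lemma~\ref{lemma:entailed_correct}, applied inductively along $t$ steps, the rewritten formula $\psi\rw{\compseq{\varphi}{t}}$ at offset $r$ in the suffix $w_t$ is equivalent to $(w, t+r) \models \psi$ (with the $[\cdot]_\mu$ rewrite commuting with the past rewrite on the $\nu$-subformulae). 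So we obtain infinitely many positions $t' = t+r$ at which $(w,t') \models \psi[N]_\mu$, where $N$ is read in its original form. Now apply Lemma~\ref{lemma_MN_properties}(4) at each such position; this needs $N \subseteq \G{\varphi}{w}{0}$, which does not follow directly from our hypotheses.

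\emph{The claim $N \subseteq \FG{\varphi}{w}{0}$.} Symmetrically, for each $\psi \in N$ there is a $t$ after which the rewritten $\psi[M]_\nu$ holds at every position, i.e.\ $(w,t') \models \psi[M]_\nu$ for all $t' \ge t$. Lemma~\ref{lemma_MN_properties}(2) would then give $(w,t') \models \psi$ from $M \subseteq \GF{\varphi}{w}{0}$, but that is again the claim we are proving.

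The two claims are therefore mutually dependent, and breaking this circularity is the main obstacle. I would resolve it by induction on subformula depth, processing $M \cup N$ bottom-up.

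For $\psi \in M$, the rewrite $\psi[N]_\mu$ touches only the $\nu$-subformulae strictly inside $\psi$. So it suffices that those finitely many inner elements of $N$ are eventually globally true from every sufficiently late position, i.e.\ that they lie in $\FG{\varphi}{w}{0}$ for the smaller subformulae. The point is that Lemma~\ref{lemma_MN_properties}(4) is applied to $\psi$ in place of $\varphi$ at a late index. We can do this by shifting the word: pick $s$ beyond which every inner $\xi \in N$ holds globally, and use the suffix $w_s$ with the past encoded by $\compseq{\varphi}{s}$.

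For $\psi \in N$, symmetrically, the inner $\mu$-subformulae in $M$ are already known, by induction, to hold infinitely often. Lemma~\ref{lemma_MN_properties}(2), applied to $\psi$ on $w_{t'}$ with past encoding, then yields $\psi$ at every $t' \ge t$, so $\psi \in \FG{\varphi}{w}{0}$. For $\psi \in M$, the shifted application of (4) gives infinitely many positions where $\psi$ holds, so $\psi \in \GF{\varphi}{w}{0}$.

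The delicate points I expect are the following:
\begin{itemize}
\item checking that the past-rewrite $\rw{\cdot}$ and the $[\cdot]_\mu$ / $[\cdot]_\nu$ rewrites commute appropriately, so that the statement about $w_t$ transfers to $(w,t)$ via Lemma~\ref{lemma:entailed_correct};
\item checking that Lemma~\ref{lemma_MN_properties}, stated for $\mathcal{G}$ and $\mathcal{GF}$ at index $0$, can be instantiated at a shifted index. This uses the fact that $\ltlG\ltlF$ and $\ltlF\ltlG$ are shift-invariant, while $\G{}{}{}$ at the shift point is guaranteed by the choice of $s$.
\end{itemize}
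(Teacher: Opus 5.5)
Your overall architecture matches the paper's argument, which is inlined in the correctness proof of the LDBA. You order $M \cup N$ so that proper subformulae come first. You apply Lemma~\ref{lemma_MN_properties}(4) to $\psi \in M$ once the elements of $N$ below $\psi$ are known to lie in $\FG{\varphi}{w}{0}$, and Lemma~\ref{lemma_MN_properties}(2) to $\psi \in N$ once the elements of $M$ below $\psi$ are known to lie in $\GF{\varphi}{w}{0}$. The gap is in how you pass between an encoded suffix and the original word. You plan to show that $\rw{\cdot}$ commutes with $[\cdot]_\mu$ and $[\cdot]_\nu$, so that $(w_t, r) \models \psi\rw{\compseq{\varphi}{t}}[N\rw{\compseq{\varphi}{t}}]_\mu$ transfers to $(w, t+r) \models \psi[N]_\mu$. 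The same principle is needed in two other places. One is re-anchoring the witnesses, which the hypothesis supplies on $w_t$ with encoding $\compseq{\varphi}{t}$, at a single fixed suffix $w_s$ with encoding $\compseq{\varphi}{s}$. The other is applying (2) on every $w_{t'}$ with $t' \geq t$, although the hypothesis gives the $\nu$-rewritten formula only on the one suffix $w_t$. This principle is false. Lemma~\ref{lemma:entailed_correct} applies only to the set of past subformulae \emph{of the formula being transported} whose weakening conditions hold. Now $\compseq{\varphi}{t}$ records the conditions $\wc(\zeta)$ of the original past subformulae $\zeta$, whereas the past subformulae of $\psi[N]_\mu$ have the conditions $\wc(\zeta)[N]_\mu$. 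For instance, let $\varphi = \psi = \ltlY(a \ltlW b) \ltlU c$, $N = \varnothing$, $t \geq 1$, and let $w$ have $a$ everywhere, $b$ nowhere, and $c$ exactly at position $t+1$. Since $(w, t-1) \models a \ltlW b$, the encoding weakens the past subformula, and $w_t \models \ltlWY(a \ltlU b) \ltlU c$. Yet $(w, t) \not\models \ltlY(a \ltlU b) \ltlU c = \psi[N]_\mu$, because $a \ltlU b$ holds nowhere on $w$.

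The repair is never to transport a formula that still carries $[\cdot]_\mu$ or $[\cdot]_\nu$. Instead, apply Lemma~\ref{lemma_MN_properties} on the suffix itself to the encoded formula, and transport only its rewrite-free conclusion. For $\psi \in M$, pick among the infinitely many witnesses some $t \geq s$, and apply (4) on the word $w_t$ to $\psi\rw{\compseq{\varphi}{t}}$ with the set $N\rw{\compseq{\varphi}{t}}$. Its side condition is $w_t \models \ltlG(\xi\rw{\compseq{\varphi}{t}})$ for the $\xi \in N$ below $\psi$. By Lemma~\ref{lemma:entailed_correct} this is just $(w, t) \models \ltlG \xi$, which holds because $t \geq s$. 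This yields $w_t \models \ltlF(\psi\rw{\compseq{\varphi}{t}})$, i.e.\ $(w, t) \models \ltlF \psi$, for infinitely many $t$, so $\psi \in \GF{\varphi}{w}{0}$. For $\psi \in N$, keep the single suffix $w_t$ from the hypothesis and apply (2) at every position $r$ of $w_t$; the lemma is stated for arbitrary positions. Its side condition is $w_t \models \ltlG\ltlF(\xi\rw{\compseq{\varphi}{t}})$, i.e.\ $(w, t) \models \ltlG\ltlF \xi$, for the $\xi \in M$ below $\psi$, and this follows from the induction hypothesis. You get $(w, t+r) \models \psi$ for all $r$, so $\psi \in \FG{\varphi}{w}{0}$. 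This is essentially the paper's argument. Your $s$ plays the role of its choice of $t$ late enough that $\FG{\varphi\rw{\compseq{\varphi}{t}}}{w_t}{0} = \G{\varphi\rw{\compseq{\varphi}{t}}}{w_t}{0}$.
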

\subsection{The Idea Behind the Construction}
\label{section_ldba_overview}
Lemma~\ref{lemma_MN_properties} tells us that we can freely rewrite $\varphi$ by either $\placeholder[M]_\nu$ or $\placeholder[N]_\mu$ to obtain a formula that is equivalent with respect to $w$ at a point of stability, under certain assumptions on $M$ and $N$. Our strategy in constructing an LDBA, then, is to update $\varphi$ by successively applying the after-function and make a guess about when the point of stability has been reached, at which point we also guess which two sets $M \subseteq \mu(\varphi)$ and $N \subseteq \nu(\varphi)$ capture the limit behavior of $w$; this makes up the non-deterministic part of the automaton. After making such a guess, we rewrite $\varphi$ by $\placeholder[M]_\nu$, so that the guess can be verified with an automaton with a Büchi acceptance condition. At this point we must also check that every formula in $M$ is satisfied infinitely often, and that every formula in $N$ is satisfied persistently (i.e., from this point forever).

Throughout this section we consider a fixed formula $\varphi$ that is to be translated to a limit-deterministic Büchi automaton and a fixed ordering $C_1, C_2, \dots, C_{2^{\vert \psf(\varphi) \vert}}$ of the elements of $\mathcal{P}(\psf(\varphi))$. For simplicity, we assume $C_1 = \{ \psi \in \psf(\varphi) \mid \psi = \psi_\mathcal{W} \}$. This ensures that $C_1$ describes the initial configuration of $\varphi$, in the sense that $\psi\rw{C_1} = \psi$ for all $\psi \in \sff(\varphi)$.

The state spaces of the automata we construct are defined in terms of the following notions. Given a formula $\varphi$ we denote by $\mathbb{B}(\varphi)$ the set of formulae $\psi$ satisfying $\sff(\psi) \subseteq \sff(\varphi) \cup \{ \top, \bot \}$. That is, $\mathbb{B}(\varphi)$ is the set of formulae that are Boolean combinations of temporal and atomic subformulae of $\varphi$. Similarly, consider a formula that is a disjunction of formulae with the property that for each such disjunct $\psi$ there exists a $C \in 2^{\psf(\varphi)}$ such that $\psi \in \mathbb{B}(\varphi\rw{C})$. We denote by $\mathbb{B}^\lor(\varphi)$ the set of all such formulae.

That the automata constructed are of asymptotically optimal size is a consequence of the following lemma.
\begin{lemma}[\cite{azzopardi_et_al:LIPIcs.MFCS.2024.13}]
\label{lemma_size_bound}
Let $\varphi$ be a formula. Then $\left\vert \mathbb{B}^\lor(\varphi)_{/\sim} \right\vert \leq 2^{2^{\vert \varphi \vert + \vert \psf(\varphi) \vert}}$.
\end{lemma}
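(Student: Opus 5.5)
The plan is a counting argument over propositional equivalence classes. By definition, every formula in $\mathbb{B}^\lor(\varphi)$ is a disjunction $\bigvee_i \psi_i$. For each disjunct $\psi_i$ there is some $C_i \subseteq \psf(\varphi)$ with $\psi_i \in \mathbb{B}(\varphi\rw{C_i})$. Hence each disjunct is a Boolean combination, using $\land$, $\lor$, $\top$ and $\bot$, of formulae drawn from $\bigcup_{C \subseteq \psf(\varphi)} \sff(\varphi\rw{C})$. The whole disjunction is then itself such a Boolean combination. So $\mathbb{B}^\lor(\varphi) \subseteq \mathbb{B}'$, where $\mathbb{B}'$ is the set of positive Boolean combinations over the atom set $A \coloneqq \bigcup_{C} \sff(\varphi\rw{C})$.

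The first step is to bound $\vert A \vert$. Let $\chi \in \sff(\varphi)$. Inspecting Definition~\ref{definition_rewriting_by_set}, the operation $\rw{C}$ never changes the tree shape of $\chi$. It only toggles each past operator node between its strong and its weak version, via $(\cdot)_\mathcal{W}$ and $(\cdot)_\mathcal{S}$. The argument proceeds by structural induction: each subformula of $\varphi\rw{C}$ has the form $\chi'$, where $\chi'$ is obtained from some $\chi \in \sff(\varphi)$ by choosing strong or weak versions of its past nodes. For the counting, I would rather map each element of $A$ to a pair $(\chi, D)$, with $\chi \in \sff(\varphi)$ and $D$ recording which past subformulae of $\varphi$ occurring in $\chi$ are weakened.

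A crude bound $\vert A \vert \le \vert\varphi\vert \cdot 2^{\vert\psf(\varphi)\vert}$ is too weak to give the stated exponent. The intended bound is $\vert A \vert \le \vert \varphi \vert + \vert \psf(\varphi)\vert$, which is plausible when counting atoms modulo their top operator. So the key step is the following observation. Propositionally, only the \emph{maximal} temporal subformulae of each disjunct act as variables. I would therefore count atoms as pairs (subformula of $\varphi$, polarity of the root) where the root is a past operator, plus one atom otherwise. This gives $\vert\sff(\varphi)\vert + \vert\psf(\varphi)\vert = \vert\varphi\vert + \vert\psf(\varphi)\vert$ possible root forms.

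The difficulty is that the inner past operators may also vary, so distinct subformulae can share the same root form. To handle this, I would argue that the inner configuration is determined by $C$ uniformly across the whole formula. Then, for a fixed $C$, the map from $\sff(\varphi)$ to $\sff(\varphi\rw{C})$ is injective and preserves the root polarity information. I expect this identification to be the main obstacle: showing that the relevant set of variables really has size at most $\vert\varphi\vert+\vert\psf(\varphi)\vert$. It might require using a slightly refined notion of atom, or citing the original argument of \cite{azzopardi_et_al:LIPIcs.MFCS.2024.13}.

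Once $\vert A\vert \le n \coloneqq \vert\varphi\vert+\vert\psf(\varphi)\vert$ is established, the rest is standard. Two formulae over $A$ are propositionally equivalent iff they define the same Boolean function of the truth values of the atoms in $A$. There are at most $2^{2^{n}}$ Boolean functions on $n$ variables, so $\vert \mathbb{B}^\lor(\varphi)_{/\sim}\vert \le 2^{2^n}$, which is the claimed bound.
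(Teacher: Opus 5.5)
\textbf{Verdict: genuine gap.} The step you flagged as the main obstacle is false, and the reduction fails with it.

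Propositional semantics treats every non-Boolean formula as a separate variable. Elements of $A$ that share a root operator but differ in their inner past configuration are therefore distinct variables, and there is no sound way to ``count atoms modulo their top operator''. Take $\varphi = \ltlX(\ltlY p_1 \land \dots \land \ltlY p_k)$. Then $\vert\varphi\vert + \vert\psf(\varphi)\vert = 3k+1$, yet $A$ contains the $2^k$ pairwise distinct formulae $\varphi\rw{C}$, one for each $C \subseteq \psf(\varphi)$. Your side claim that $\chi \mapsto \chi\rw{C}$ is injective on $\sff(\varphi)$ also fails ($\ltlY p$ and $\ltlWY p$ are identified), though that is harmless.

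More fundamentally, your first move already discards what makes the lemma true: you replace $\mathbb{B}^\lor(\varphi)$ by the set $\mathbb{B}'$ of all positive Boolean combinations over $A$. There are at least $2^{\binom{m}{\lfloor m/2\rfloor}}$ monotone Boolean functions on $m$ variables. With $m \geq 2^k$ in the example, the quotient of $\mathbb{B}'$ is triply exponential in $k$, while the target bound $2^{2^{3k+1}}$ is only doubly exponential. So no argument that bounds $\mathbb{B}^\lor(\varphi)_{/\sim}$ via $\mathbb{B}'_{/\sim}$ can reach the stated bound.

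The missing idea is to use that each disjunct lies in $\mathbb{B}(\varphi\rw{C})$ for a \emph{single} $C$, and to count conjunctive clauses rather than variables.
\begin{enumerate}
\item Your correct observation that $\rw{C}$ preserves tree shape gives $\sff(\varphi\rw{C}) = \{\chi\rw{C} \mid \chi \in \sff(\varphi)\}$.
\item Put each disjunct into monotone disjunctive normal form, which preserves $\sim$. Every clause then has the form $\bigwedge_{\chi \in S}\chi\rw{C}$ with $C \subseteq \psf(\varphi)$ and $S \subseteq \sff(\varphi)$. So there are at most $2^{\vert\psf(\varphi)\vert}\cdot 2^{\vert\varphi\vert}$ clauses in total.
\item Every formula of $\mathbb{B}^\lor(\varphi)$ is propositionally equivalent to the disjunction of some set of these clauses, with $\top$ the empty clause and $\bot$ the empty disjunction. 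A set of clauses determines the class, so there are at most $2^{2^{\vert\varphi\vert + \vert\psf(\varphi)\vert}}$ classes.
\end{enumerate}
The paper imports the lemma without proof, but this is the clause counting it sketches in Section~\ref{section:dpa}. Note that the exponent $\vert\varphi\vert + \vert\psf(\varphi)\vert$ bounds the number of clauses, not the number of propositional variables.
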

We note that implementing the optimization described in Remark~\ref{remark:optimization} further improves this bound to $\left\vert \mathbb{B}^\lor(\varphi)_{/\sim} \right\vert \leq 2^{2^{\vert \varphi \vert}}$.

\begin{remark}
For the remainder of this section we freely make use of $\af$, $\placeholder[\placeholder]_\nu$, and $\placeholder[\placeholder]_\mu$ lifted to equivalence classes of formulae. This is justified by Lemma~\ref{lemma_propequiv_preserved}. Furthermore, while the states of all automata are equivalence classes of formulae, we omit for the sake of readability explicit use of the notation $[\placeholder]_\sim$ and $\placeholder_{/\sim}$.
\end{remark}

\subsection{The Weakening Conditions Automaton}
\label{section_wc_automaton}
As mentioned in Section~\ref{section_ldba_overview}, given two sets $M \subseteq \mu(\varphi)$ and $N \subseteq \nu(\varphi)$, we need a way to check the limit-behavior of the given word with respect to the formulae in these sets. By rewriting the formulae in $M$ by $\placeholder[N]_\mu$ and the formulae in $N$ by $\placeholder[M]_\nu$, this can be done by automata using Büchi acceptance conditions, according to Lemma~\ref{lemma_mu_nu_limits}. In the case of standard \textsf{}{LTL}, checking that a formula $\psi \in M$ is satisfied infinitely often is simple: once the accepting state $[\top]_\sim$ for the formula $\ltlF (\psi[N]_\mu)$ is reached, the automaton loops back to the initial state to check the formula again. Similarly, it is easy to check that a formula $\psi \in N$ is satisfied persistently, by rejecting if the sink state $[\bot]_\sim$ is reached for the formula $\ltlG (\psi[M]_\nu)$.

The presence of past operators adds an extra complication in the construction of these automata, in that care must be taken to properly encode the past when checking a formula in $M$ or $N$ at a specific point in time. Because the after-function is destructive in the sense that it removes subformulae in which the past is encoded, we need a way to track weakening conditions on the side. To this end, we define the \textit{weakening conditions (WC) automaton} $\mathcal{H}_\varphi = (S, S_0, \delta_\mathcal{H})$. First, we define \textit{saturated} sets of past formulae. 
\iffullversion
This definition is a technicality that allows us to prove the desired bounds on the size of the set of states of our automata, and it is motivated further in Appendix~\ref{appendix_saturated}.
\else
This definition is a technicality that allows us to prove the desired bounds on the size of the set of states of our automata.
\fi
\begin{definition}[Saturated sets of past formulae]
Let $\varphi$ be a formula and let $C$ and  $C'$ be sets of past formulae. Then $C'$ is saturated with respect to $C$, denoted $C \preceq C'$, if for all $\xi, \xi' \in \psf(\varphi)$ such that $\xi\rw{C} = \xi'\rw{C}$, it holds that $\xi\rw{C'} = \xi'\rw{C'}$.
\end{definition}

The WC automaton tracks, for each index $i \in [2^{\vert \psf(\varphi) \vert}]$ and every time step, the weakening conditions that must be satisfied in order to apply the rewrite $\placeholder\rw{C_i}$. This leads to the following definitions of the set of states $S$ and initial state $S_0$:
\begin{align*}
    S &\coloneqq \prod_{i = 1}^{2^{\vert \psf(\varphi) \vert}} \mathbb{B}(\varphi\rw{C_i}) \\
    S_0 &\coloneqq \langle \top, \bot, \dots, \bot \rangle
\end{align*}
Its transition function is defined by $\delta_\mathcal{H}(\langle \psi_1, \psi_2, \dots, \psi_{2^{\vert \psf(\varphi) \vert}}\rangle, \sigma) \coloneqq \langle \psi'_1, \psi'_2, \dots, \psi'_{2^{\vert \psf(\varphi) \vert}}\rangle$, where
\begin{equation*}
    \psi'_i \coloneqq \bigvee_{j \in J_i} \bigg(
    \af(\psi_j, C_i\rw{C_j}) \land
    \bigwedge_{\xi \in C_i}
    \af(\wc(\xi\rw{C_j}), C_i\rw{C_j})
    \bigg),
\end{equation*}
and $J_i \coloneqq \{ j \in [2^{\vert \psf(\varphi) \vert}] \mid C_j \preceq C_i \}$.

This is to be interpreted as follows: the formula $\psi_i'$ is intended to capture what must be satisfied in order to apply the rewrite $\placeholder \rw{C_i}$. The outer disjunction over $J_i$ corresponds to making a guess of which $C_j$ was the correct choice of set to rewrite with in the previous step. Since $C_i$ is a subset of $\psf(\varphi)$, we update it to reflect the current configuration of $\varphi$ by rewriting it by $\placeholder\rw{C_j}$.
Now, having made a guess of current set $C_i$ and previous set $C_j$, we update the existing condition $\psi_j$ that was added to justify the application of $C_j$ in the previous step by adding the formula $\afc(\psi_j, C_i\rw{C_j})$. We must also add new weakening conditions to justify the weakenings made by the rewrite of $C_i$, and update them with the current letter $\sigma$. We do this by adding the conjunct $\bigwedge_{\xi \in C_i} \af(\wc(\xi\rw{C_j}), C_i\rw{C_j})$. Observe that also here each $\xi \in C_i$ is updated by $C_j$ to reflect the current configuration of $\varphi$.
\subsection{The Initialization Formulae}
\label{section_reset_formulae}
With a mechanism in place for tracking the weakening conditions that justify each rewrite, we now define formulae describing the states reached upon resetting the automata that verify guessed sets $M$ and $N$.
\begin{definition}[The initialization formulae]
\label{definition_reset_formulae}
Let $\xi^\times = \langle \xi_1, \xi_2, \dots, \xi_{2^{\vert \psf(\varphi) \vert}} \rangle \in S$ and $\psi \in \sff(\varphi)$. Given a set $M \subseteq \mu(\varphi)$ or $N \subseteq \nu(\varphi)$, we define the initialization formulae by
\begin{align*}
    \iota(\psi, \xi^\times, M) &\coloneqq
    \bigvee_{i \in [2^{\vert \psf(\varphi) \vert}]} \big( \psi\rw{C_i}[M\rw{C_i}]_\nu \land \xi_i[M\rw{C_i}]_\nu \big).
    \\
    \iota(\psi, \xi^\times, N) &\coloneqq
    \bigvee_{i \in [2^{\vert \psf(\varphi) \vert}]} \big( \psi\rw{C_i}[N\rw{C_i}]_\mu \land \xi_i[N\rw{C_i}]_\mu \big).
\end{align*}
\end{definition}
To motivate the definition of $\iota(\psi, \xi^\times, M)$ (the motivation for $\iota(\psi, \xi^\times, N)$ is similar), we reconstruct the formula in steps. Consider that we want to verify that $(w, t) \models \psi$. Assuming that $i$ is the index such that $C_i$ is the set that correctly weakens and strengthens past subformulae of $\varphi$ at time $t$, this is equivalent to checking that $w_t \models \psi\rw{C_i}$. To justify this rewrite, we add the corresponding weakening conditions \textendash{} tracked by the WC automaton \textendash{} resulting in the formula $\psi\rw{C_i} \land \xi_i$. In order to easily check the resulting formula using Lemma~\ref{lemma_mu_nu_limits} and Lemma~\ref{lemma_MN_properties}, we want to rewrite it into a $\nu\mbox{-}\mathsf{pLTL}$-formula. However, directly rewriting it by $M$ to obtain the formula $\psi\rw{C_i}[M]_\nu \land \xi_i[M]_\nu$ is not appropriate; each past subformula $\zeta$ of $\varphi$ appears as $\zeta$ in $M$, but as $\zeta\rw{C_i}$ in $\psi\rw{C_i}$ and $\xi_i$. For this reason, we also rewrite the formulae in $M$ by $C_i$, resulting in the formula $\psi\rw{C_i}[M\rw{C_i}]_\nu \land \xi_i[M{\rw{C_i}}]_\nu$. Finally, as the index $i$ is unknown, we consider the disjunction over all indices.
%To provide some intuition for this definition, consider that we want to verify that $(w, t) \models \psi$. Assuming that $C_i$ is the set that correctly weakens and strengthens past subformulae of $\psi$, this is equivalent to checking that $w_t \models \psi\rw{C_i}$. As the index $i$ is unknown, we consider the disjunction over all indices in $\vert \mathcal{P}(\psf(\varphi)) \vert$. For each such index, the corresponding weakening conditions $\xi_i$, as tracked by the WC automaton, are added. In order to easily check the resulting formula using Lemma~\ref{lemma_mu_nu_limits} and Lemma~\ref{lemma_MN_properties}, we want to rewrite it by either $\placeholder[M]_\nu$ or $\placeholder[N]_\mu$. However, we must first update the formulae in $M$ or $N$ to be consistent with how they appear as subformulae of $\psi\rw{C_i}$ and $\xi_i$, motivating the rewrite $M\rw{C_i}$ and $N\rw{C_i}$.
\subsection{Components of the LDBA}
\label{section_ldba_components}
We begin by defining three components that make up the accepting part of the LDBA. These are multiple-entry runner automata over $\mathcal{P}(\Var(\varphi))$, all with bed automaton $\mathcal{H}_\varphi = (S, S_0, \delta_\mathcal{H})$. The reason for using multiple-entry automata is that the state reached upon making a jump into the accepting part depends on the current state reached in the non-deterministic part.
\subsubsection{Verifying the Main Formula}
\label{section_main_automaton}
We begin by defining the multiple-entry runner automaton $\mathcal{B}^1_{M, N} = (Q^1_{M, N}, \delta^1_{M, N}, \alpha^1_{M, N})$ over $\mathcal{P}(\Var(\varphi))$ with bed automaton $\mathcal{H}_\varphi$ and the following components:
\begin{align*}
    Q^1_{M, N} &\coloneqq \mathbb{B}^\lor(\varphi[M]_\nu)
    &
    \delta^1_{M, N}(\zeta, \xi^\times, \sigma) &\coloneqq \af(\zeta)
    &
    \alpha^1_{M, N} &\coloneqq Q^1_{M, N} \setminus \bot.
\end{align*}
\begin{remark}
    We reiterate that we are omitting notation related to propositional equivalence. For example, it is really the case that the state space $Q^1_{M, N}$ of $\mathcal{B}^1_{M, N}$ is $ \mathbb{B}^\lor(\varphi[M]_\nu)_{/\sim}$, and that its transition function $\delta^1_{M, N}([\zeta]_\sim, \xi^\times, \sigma)$ is $[\af(\zeta, \sigma)]_\sim$.
\end{remark}
Consider a word $w$ being read by an LDBA of the kind described in Section~\ref{section_ldba_overview}, and that the run jumps, after having read $w_{0t}$, from the non-deterministic part to the accepting part described by $\mathcal{H}_\varphi \ltimes \mathcal{B}^1_{M, N}$. At this point $\varphi$ has been transformed into $\af[w_{0t}](\varphi)$. At the same time, the WC automaton is in the state $[\delta_\mathcal{H}(S_0, w_{0t})]_\sim$. The purpose of $\mathcal{B}^1_{M, N}$ is to verify that $w_t \models \af[w_{0t}](\varphi)[M\rw{C}]_\nu$ for some $C$, accepting if and only if $w \models \varphi$ (given that the conditions on $M$ and $C$ are checked elsewhere).
\begin{restatable}{lemma}{lemmabone}
\label{lemma_B1}
    Let $w$ be a word over $\mathcal{P}(\Var(\varphi))$, $t \in \mathbb{N}$, and $\eta$ the run of $\mathcal{H}_\varphi \ltimes \mathcal{B}^1_{M, N}$ on $w_t$ starting from the state $\langle \beta, \gamma \rangle$, where $\beta = \delta_\mathcal{H}(S_0, w_{0t})$ and $\gamma = \iota(\af[w_{0t}](\varphi), \beta, M)$. Then
    \begin{enumerate}
    \item If $\GF{\varphi}{w}{0} \subseteq M$, $w \models \varphi$, and $w$ is stable at $t$, then $\eta$ is an accepting run.
    \item If $M \subseteq \GF{\varphi}{w}{0}$ and $\eta$ is an accepting run then $w \models \varphi$.
    \end{enumerate}
\end{restatable}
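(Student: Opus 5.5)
Both directions reduce to Lemma~\ref{lemma_mu_nu_limits}(2): every state of $\mathcal{B}^1_{M,N}$ is a $\nu$-$\mathsf{pLTL}$ formula (or a disjunction of such), and $\alpha^1_{M,N}$ consists of all states except $\bot$. Hence $\eta$ is accepting iff $w_t \models \gamma$, where the $\mathcal{B}^1$-component of $\eta$ is exactly $\af[w_{ts}](\gamma)$. One point needs care first. The transition of $\mathcal{B}^1_{M,N}$ is the plain after-function and ignores the bed component, so it suffices to check that $\gamma$ is a $\nu$-$\mathsf{pLTL}$ formula (in $\mathbb{B}^\lor(\varphi[M]_\nu)$). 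This holds because $[\cdot]_\nu$ removes all $\ltlU,\ltlM$ and $\af$ preserves $\nu$-$\mathsf{pLTL}$. It remains to characterise $w_t \models \gamma$.

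The central technical step is a correctness invariant for the WC automaton. Write $\beta=\langle \xi_1,\dots\rangle$. For every $t$ and every $\psi \in \mathbb{B}(\varphi)$ we want
\[
(w,t)\models \psi \iff \exists i.\; w_t \models \psi\rw{C_i} \land \xi_i ,
\]
with the witness $i$ chosen so that $C_i$ equals $\compseq{\varphi}{t}$ when the left side holds. We prove this by induction on $t$. The base case uses $S_0=\langle\top,\bot,\dots\rangle$ and $C_1$, the identity configuration. The inductive step unfolds $\delta_\mathcal{H}$ and applies Lemma~\ref{lemma:entailed_correctb}: the conjuncts $\af(\wc(\xi\rw{C_j}),C_i\rw{C_j})$ justify the guessed weakenings (soundness), and the true entailed set supplies a witnessing disjunct (completeness), in the same way Lemma~\ref{lemma:entailed_correct} is used. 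We also need the saturation condition $C_j\preceq C_i$ to hold for the true sequence of configurations, so that the correct $j$ lies in $J_i$. Applied to $\af[w_{0t}](\varphi)$, and combined with Corollary~\ref{lemma:af_correct} (extended to finite words), the invariant gives: $w\models\varphi$ iff some disjunct $\af[w_{0t}](\varphi)\rw{C_i}\land\xi_i$ holds on $w_t$.

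For item 1, assume $w\models\varphi$ and take the true index $i$. The disjunct $\af[w_{0t}](\varphi)\rw{C_i}\land\xi_i$ holds at $w_t$, and every temporal subformula of it comes from $\varphi\rw{C_i}$. By stability at $t$ and $\GF{\varphi}{w}{0}\subseteq M$, every $\mu$-subformula that holds at some position $\ge t$ lies in $\GF{\varphi}{w}{0}$. The rewritten copies satisfy the analogous inclusion with respect to $M\rw{C_i}$, because the rewrite only touches past operators. Lemma~\ref{lemma_MN_properties}(1), applied to the shifted word with $M\rw{C_i}$, then shows that the $[M\rw{C_i}]_\nu$-version of the disjunct holds at $w_t$, so $w_t\models\gamma$. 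For item 2, an accepting $\eta$ gives $w_t\models\gamma$, hence some disjunct holds at $w_t$. Since $M\rw{C_i}\subseteq\GF{}{}{}$ of the rewritten formula (the rewrite only changes past operators, and $w_t$-suffixes preserve $\ltlG\ltlF$), Lemma~\ref{lemma_MN_properties}(2) removes $[\cdot]_\nu$. The soundness half of the invariant then yields $(w,t)\models\af[w_{0t}](\varphi)$ in the past-aware sense, and Corollary~\ref{lemma:af_correct} gives $w\models\varphi$.

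The main obstacle is the WC invariant, and specifically the saturation bookkeeping: showing that $C_i\rw{C_j}$ correctly re-expresses the new entailed set in the current configuration, and that $C_j\preceq C_i$ holds along the genuine configuration sequence. A second delicate point is transporting the $\GF{}{}{}$ and $\F{}{}{}$ conditions from subformulae of $\varphi$ to their $\rw{C_i}$-images. I would isolate this as an auxiliary claim: for $\zeta\in\mu(\varphi)$ and the true $C_i$, $(w,t')\models\zeta$ iff $w_{t'}\models\zeta\rw{\compseq{\varphi}{t'}}$. This follows from Lemma~\ref{lemma:entailed_correct} by induction.
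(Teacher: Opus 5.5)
Your treatment of item 1 matches the paper's proof. You characterise acceptance of $\eta$ as $w_t \models \gamma$ via Lemma~\ref{lemma_mu_nu_limits}.2, take the true index with $C_i = \compseq{\varphi}{t}$, use $w_t \models \xi_i$, transport $\F{\varphi}{w}{t} \subseteq M$ to $M\rw{C_i}$, and apply Lemma~\ref{lemma_MN_properties}.1. Re-proving the WC invariant by induction, rather than importing it from Azzopardi et al., only makes the argument more self-contained. Item 2, however, has a genuine gap.

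Acceptance gives you a satisfied disjunct $\af[w_{0t}](\varphi)\rw{C_i}[M\rw{C_i}]_\nu \land \xi_i[M\rw{C_i}]_\nu$ for an index $i$ about which nothing is known yet. You immediately apply Lemma~\ref{lemma_MN_properties}.2 with $M\rw{C_i}$, which needs $M\rw{C_i} \subseteq \GF{\varphi\rw{C_i}}{w_t}{0}$. That inclusion does not follow from $M \subseteq \GF{\varphi}{w}{0}$ for an arbitrary $C_i$. Making a past operator weak or strong is exactly what changes its meaning on the suffix $w_t$, so ``the rewrite only touches past operators'' justifies nothing. For instance, take $\varphi = \ltlF \ltlO q$, $M = \{\varphi\}$, $w = \{q\}, \varnothing, \varnothing, \dots$ and $t = 1$. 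Then $\varphi \in \GF{\varphi}{w}{0}$, but for $C_i = \varnothing$ the formula $\varphi\rw{C_i} = \ltlF\ltlO q$ fails everywhere on $w_1$. The transport is guaranteed only for the true configuration $\compseq{\varphi}{t}$ and, by monotonicity, for supersets of it. That is why it is fine in item 1 but not here.

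The same problem blocks your next step. Your invariant gives soundness for $\xi_i$, but you only have $\xi_i[M\rw{C_i}]_\nu$. Removing $[\cdot]_\nu$ there is again an application of Lemma~\ref{lemma_MN_properties}.2 at the unvalidated $C_i$.

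The paper proceeds in the opposite order:
\begin{enumerate}
\item It uses the $\nu$-aware soundness of the WC automaton imported from Azzopardi et al.: if $M \subseteq \GF{\varphi}{w}{0}$ and $w_t \models \xi_i[M\rw{C_i}]_\nu$, then $C_i \subseteq \compseq{\varphi}{t}$.
\item It uses monotonicity of rewriting ($C \subseteq C'$ implies that $\rw{C}$-rewritten formulae entail their $\rw{C'}$-rewritten counterparts) to lift the satisfied disjunct to the true configuration $\compseq{\varphi}{t}$.
\item Only there does it transport $M \subseteq \GF{\varphi}{w}{0}$ and apply Lemma~\ref{lemma_MN_properties}.2, obtaining $w_t \models \af[w_{0t}](\varphi)$; Corollary~\ref{lemma:af_correct} then yields $w \models \varphi$.
\end{enumerate}

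To repair your proof, your induction on the WC automaton must establish this $[M\rw{C_i}]_\nu$-version of soundness, using $M \subseteq \GF{\varphi}{w}{0}$, and item 2 must be reordered accordingly. Note also that the target of the final step is $w_t \models \af[w_{0t}](\varphi)$, not $(w,t) \models \af[w_{0t}](\varphi)$. The output of the after-function is evaluated at the start of the suffix, which is the form Corollary~\ref{lemma:af_correct} requires.
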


\subsubsection{Verifying the Choice of \texorpdfstring{$M$}{M}}
\label{section_M_automaton}
We define a multipe-entry runner automaton $\mathcal{B}^2_{M, N}$ that verifies the correctness of choice of $M$. 

Let $\psi_1, \psi_2, \dots, \psi_{\vert M \vert}$ be an ordering of the formulae in $M$. We now define the multiple-entry runner automaton $\mathcal{B}^2_{M, N} = (Q^2_{M, N}, \delta^2_{M, N}, \alpha^2_{M, N})$ over $\mathcal{P}(\Var(\varphi))$ with bed automaton $\mathcal{H}_\varphi$ by
{\allowdisplaybreaks
\begin{equation*}
\begin{aligned}
    Q^2_{M, N} &\coloneqq \bigcup_{i \in [\vert M \vert]} \big\{\langle \xi, i \rangle \mid \xi \in \mathbb{B}^\lor(\varphi[N]_\mu \land \ltlF(\psi_i[N]_\mu)) \big\} \\
    \delta^2_{M, N}\big(\big\langle \zeta, i \big\rangle, \xi^\times, \sigma\big) &\coloneqq
    \begin{cases}
    \begin{aligned}
        &\big\langle \af(\zeta), i \big\rangle &&(\zeta \not\sim \top) \\
        &\big\langle
        \iota(\ltlF \psi_j, \xi^\times, N)
        , j
        \big\rangle &&(\zeta \sim \top), \\
    \end{aligned}
    \end{cases} \\
    &\quad \quad \quad \text{where } j = (i\bmod{\vert M \vert}) + 1 \\
    \alpha^2_{M, N} &\coloneqq \big\{ \big\langle \top, 1 \big\rangle \big\}.
\end{aligned}
\end{equation*}
}
We extend propositional equivalence to states of $Q^2_{M, N}$ in the obvious way: two states $\langle [\xi_1]_\sim, i \rangle$ and $\langle [\xi_2]_\sim, j \rangle$ are equivalent if $\xi_1 \sim \xi_2$ and $i = j$. Informally, $\mathcal{B}^2_{M, N}$ contains for each $\psi_i$ a copy of the state space required to verify $\ltlF (\psi_i[N]_\mu)$, indexed by $i$. Because such a formula is in $\mu$-\textsf{pLTL}, it is satisfied if and only if it is eventually rewritten by $\af[]$ into a formula propositionally equivalent to $\top$. When this happens, we transition to the next sub-component, which serves to verify $\ltlF (\psi_{i + 1}[N]_\mu)$. This continues until we reach the final component, at which point we loop back to the component with index $1$. It suffices to let $\langle \top , 1 \rangle$ be accepting, since the transition function ensures that this state is visited infinitely often precisely when $\langle \top , i \rangle$ is visited infinitely often for all $i \in [\vert M \vert]$. 

%Observe that the above construction is very similar to that used to ``degeneralize'' a generalized Büchi automaton. One difference is that we are not required to consider products of automata for each separate formula. This is not due to the particulars of our construction, but is a consequence of the fact that we require each formula in $M$ (rewritten by $\placeholder[N]_\mu$) to hold infinitely often. Moreover, our transition function must make use of the weakening conditions as tracked by $\mathcal{H}_\varphi$ when moving from one sub-component to the next. 

Lemma~\ref{lemma_B2} expresses the conditions under which $\mathcal{H}_\varphi \ltimes \mathcal{B}^2_{M, N}$ serves to verify the assumptions of Lemma~\ref{lemma_MN_=>}.
\begin{restatable}{lemma}{lemmabtwo}
\label{lemma_B2}
    Let $w$ be a word over $\mathcal{P}(\Var(\varphi))$, $t' \in \mathbb {N}$ and $i \in [\vert M \vert - 1]$. Let $\eta$ be the run of $\mathcal{H}_\varphi \ltimes \mathcal{B}^2_{M, N}$ on $w_{t'}$ starting from the state $\langle \beta, \gamma \rangle$, where $\beta = \cwc(S_0, w_{0t'})$ and $\gamma = \big\langle \iota(\ltlF \psi_i, \beta, N), i \big\rangle$. Then
\begin{itemize}
    \item 1) \, If $\FG{\varphi}{w}{0} \subseteq N$ and $w$ satisfies
    \begin{equation*}
    \forall \psi \in M \, . \, \forall t \geq 0 \, . \, w_t \models \ltlF(\psi\rw{\compseq{\varphi}{t}}[N\rw{\compseq{\varphi}{t}}]_\mu),
    \end{equation*}
    then $\eta$ is an accepting run.
    \item 2) \, If $N \subseteq \FG{\varphi}{w}{0}$ and $\eta$ is an accepting run, then
    \begin{equation*}
        \forall \psi \in M \, . \, \forall r \, . \, \exists t \geq r \, . \, w_t \models \ltlF (\psi\rw{\compseq{\varphi}{t}}[N\rw{\compseq{\varphi}{t}}]_\mu).
    \end{equation*}
\end{itemize}
\end{restatable}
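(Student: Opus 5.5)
The proof rests on one invariant relating the WC automaton $\mathcal{H}_\varphi$ to the sets $\compseq{\varphi}{t}$. Once that is in place, each phase of the cyclic runner is handled by Lemma~\ref{lemma_mu_nu_limits} together with Lemma~\ref{lemma_MN_properties}.

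\textbf{Invariant for $\mathcal{H}_\varphi$.} For every $t$ and every index $i$, let $\xi_i$ be the $i$-th component of $\cwc(S_0,w_{0t})$. The claim has two directions.
\begin{itemize}
\item If $C_i$ rewrites the past subformulae of $\varphi$ exactly as $\compseq{\varphi}{t}$ does, i.e.\ $\zeta\rw{C_i}=\zeta\rw{\compseq{\varphi}{t}}$ for all $\zeta\in\sff(\varphi)$, then $w_t\models\xi_i$.
\item Conversely, if $w_t\models\zeta\rw{C_i}\land\xi_i$ for any $i$, then $(w,t)\models\zeta$.
\end{itemize}
I would prove this by induction on $t$. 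The base case holds because $S_0=\langle\top,\bot,\dots\rangle$ and $C_1$ is the identity configuration. For the step, unfold $\delta_\mathcal{H}$ and use Lemma~\ref{lemma:entailed_correctb} on each disjunct $\af(\cdot,C_i\rw{C_j})$. The first bullet of that lemma, applied to the correct previous index $j$, gives completeness; here the saturation condition $C_j\preceq C_i$ is needed so that the correct $j$ actually lies in $J_i$. The second bullet gives soundness for an arbitrary $j$, combined with Lemma~\ref{lemma:entailed_correct}, which justifies the added weakening conditions $\wc(\xi\rw{C_j})$. This is the step I expect to be the main obstacle, specifically the bookkeeping of the composed rewrites $C_i\rw{C_j}$ against $\compseq{\varphi}{t}$. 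I would first try proving, as an auxiliary claim, that $\compseq{\varphi}{t+1}$ equals $C_i\rw{C_j}$ composed appropriately whenever $C_j$ is the correct configuration at $t$.

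From the invariant I derive the key consequence about entries of the runner. For $\psi\in M$ and any $t$, by Lemma~\ref{lemma_MN_properties}(3),(4) together with the invariant, the following implications hold:
\begin{itemize}
\item If $\FG{\varphi}{w}{0}\subseteq N$ and $w_t\models\ltlF(\psi\rw{\compseq{\varphi}{t}}[N\rw{\compseq{\varphi}{t}}]_\mu)$, then $w_t\models\iota(\ltlF\psi,\beta_t,N)$, witnessed by the correct index $i$.
\item If $N\subseteq\FG{\varphi}{w}{0}$ and $w_t\models\iota(\ltlF\psi,\beta_t,N)$, then some disjunct holds, and hence there is an $r\ge t$ at which the $N$-relaxed $\psi$ holds under the right configuration.
\end{itemize}
In the second case I must argue that the witnessing configuration is the correct one. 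Since $N\subseteq\FG{\varphi}{w}{0}$ only gives the $\ltlF\ltlG$ version, I would apply Lemma~\ref{lemma_MN_properties}(4) at a shifted origin where stability holds, and absorb the shift into the ``$\exists t\ge r$'' quantifier.

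\textbf{Part 1.} Starting in component $i$ at $t'$, the entry formula is $\iota(\ltlF\psi_i,\beta,N)$. It lies in $\mu$-\textsf{pLTL} and, by the first implication above, is satisfied by $w_{t'}$. Corollary~\ref{lemma:af_correct} shows the run tracks $\af[w_{t't}]$ of it. By Lemma~\ref{lemma_mu_nu_limits}(1), it therefore reaches a state $\sim\top$ at some finite time $t''$. The run then re-enters with $\iota(\ltlF\psi_{j},\beta_{t''},N)$. The hypothesis holds for all $t$, so this phase also terminates, and the argument repeats. By induction the run cycles through all components infinitely often and visits $\langle\top,1\rangle$ infinitely often, so $\eta$ is accepting.

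\textbf{Part 2.} If $\eta$ is accepting, every component is completed infinitely often, at arbitrarily late entry times $t$. For each completed phase starting at $t$, Lemma~\ref{lemma_mu_nu_limits}(1) gives $w_t\models\iota(\ltlF\psi_k,\beta_t,N)$. The second implication then yields the required witness $t\ge r$ for every $\psi\in M$ and every $r$.
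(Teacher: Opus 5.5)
\textbf{Part 1 matches the paper; Part 2 has a genuine gap.} Your Part 1 is the paper's argument. The first bullet of your invariant is the fact, cited from Azzopardi et al., that the component of $\cwc(S_0,w_{0t})$ indexed by $\compseq{\varphi}{t}$ holds on $w_t$. With Lemma~\ref{lemma_MN_properties}(3) and Lemma~\ref{lemma_mu_nu_limits}(1), an induction over phases then gives acceptance. The gap is in your ``second implication'', used in Part 2.

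\textbf{What Part 2 needs.} At a stable entry time $t$ you only know that some disjunct $\ltlF(\psi\rw{C_i}[N\rw{C_i}]_\mu)\land\xi_i[N\rw{C_i}]_\mu$ of $\iota(\ltlF\psi,\beta_t,N)$ holds on $w_t$, for an a priori arbitrary $C_i$. You must reach the syntactic statement $w_t\models\ltlF(\psi\rw{\compseq{\varphi}{t}}[N\rw{\compseq{\varphi}{t}}]_\mu)$. The paper does this with a property of $\mathcal{H}_\varphi$ that your invariant lacks: if $N\subseteq\FG{\varphi}{w}{0}$ and $t$ is past the stability point, then $w_t\models\xi_i[N\rw{C_i}]_\mu$ already forces $C_i\subseteq\compseq{\varphi}{t}$. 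In words, even the $N$-relaxed weakening conditions prevent over-weakening. Monotonicity of $\placeholder\rw{C}$ in $C$ then transports the first conjunct to the correct configuration at the same $t$.

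\textbf{Why your de-relaxation route fails.} Your second bullet is a semantic soundness claim about the \emph{unrelaxed} $\xi_i$, so you would first have to de-relax $\xi_i[N\rw{C_i}]_\mu$ with Lemma~\ref{lemma_MN_properties}(4).
\begin{itemize}
\item That lemma needs $w_t\models\ltlG(\chi\rw{C_i})$ for the $\chi\in N$. Stability gives this only for $\chi\rw{\compseq{\varphi}{t}}$, and by monotonicity for $C_i\supseteq\compseq{\varphi}{t}$.
\item Now take a $C_i$ incomparable with $\compseq{\varphi}{t}$: it weakens a past subformula it should not, and strengthens one it should weaken. This is exactly the kind of set you must rule out, yet for it the lemma does not apply. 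The argument is circular.
\item Even granting $(w,t)\models\ltlF\psi$, you cannot return to the $N$-relaxed form in the conclusion. That step is Lemma~\ref{lemma_MN_properties}(3), which needs $\FG{\varphi}{w}{0}\subseteq N$ — a hypothesis of Part 1, not Part 2.
\item With only $N\subseteq\FG{\varphi}{w}{0}$, a $\ltlW$-subformula in $\FG{\varphi}{w}{0}\setminus N$ becomes a $\ltlU$ under $[N]_\mu$ and may be false.
\item Applying your soundness bullet to $\zeta=\ltlF(\psi[N]_\mu)$ does not rescue this either. The semantics of the relaxed formula is governed by the weakening conditions of \emph{its own} past subformulae, not by $\compseq{\varphi}{t}$, and you still face the first obstacle above.
\end{itemize}

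\textbf{The fix.} Add to your invariant the statement ``for $N\subseteq\FG{\varphi}{w}{0}$ and $t$ past stability, $w_t\models\xi_i[N\rw{C_i}]_\mu$ implies $C_i\subseteq\compseq{\varphi}{t}$''; the paper imports this from Azzopardi et al. In Part 2, use it together with monotonicity instead of de-relaxing.
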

\subsubsection{Verifying the Choice of \texorpdfstring{$N$}{N}}
\label{section_N_automaton}
The final automaton that we define before moving on to the full LDBA construction is the multiple-entry automaton $\mathcal{B}^3_{M, N} = (Q^3_{M, N}, \delta^3_{M, N}, \alpha^3_{M, N})$ over $\mathcal{P}(\Var(\varphi))$, with bed automaton $\mathcal{H}_\varphi$, where
\begin{align*}
    Q^3_{M, N} &\coloneqq \mathbb{B}^\lor\bigg(\varphi[M]_\nu \land \bigwedge_{\psi \in N} \ltlG(\psi[M]_\nu)\bigg)
    \\
    \delta^3_{M, N}(\zeta, \xi^\times, \sigma) &\coloneqq \af(\zeta)
    \\
    \alpha^3_{M, N} &\coloneqq Q^3_{M, N} \setminus \{\bot\}.
\end{align*}
The automaton $\mathcal{H}_\varphi \ltimes \mathcal{B}^3_{M, N}$ simultaneously verifies that all formulae in $N$ are satisfied persistently, under certain assumptions on $M$. Lemma~\ref{lemma_B3} expresses under what conditions $\mathcal{H}_\varphi \ltimes \mathcal{B}^3_{M, N}$ serves to verify the assumptions of Lemma~\ref{lemma_MN_<=}.

\begin{restatable}{lemma}{lemmabthree}
\label{lemma_B3}
Let $w$ be a word over $\mathcal{P}(\Var(\varphi))$ and $t \in \mathbb{N}$. Let $\eta$ be the run of $\mathcal{H}_\varphi \ltimes \mathcal{B}^3_{M, N}$ on $w_{t}$ starting from the state $\langle \beta, \gamma \rangle$, where $\beta = \cwc(S_0, w_{0t})$ and $\gamma = \iota(\ltlG \land_{\psi \in N} \,\psi, \beta, M)$. Then
\begin{itemize}
    \item 1) \, If $\GF{\varphi}{w}{0} \subseteq M$ and $w$ satisfies
    \begin{equation*}
    \forall \psi \in N \, . \, w_{t} \models \ltlG(\psi\rw{\compseq{\varphi}{t}}[M\rw{\compseq{\varphi}{t}}]_\nu),
    \end{equation*}
    then $\eta$ is an accepting run.
    \item 2) \, If $M \subseteq \GF{\varphi}{w}{0}$ and $\eta$ is an accepting run, then
    \begin{equation*}
        \forall \psi \in N \, . \, w_t \models \ltlG (\psi\rw{\compseq{\varphi}{t}}[M\rw{\compseq{\varphi}{t}}]_\nu).
    \end{equation*}
\end{itemize}
\end{restatable}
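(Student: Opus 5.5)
The plan is to reduce Lemma~\ref{lemma_B3} to Lemma~\ref{lemma_mu_nu_limits}(2), using an invariant that links the state of $\mathcal{H}_\varphi$ with the correct past configuration. Since $\psi[M]_\nu$ is a $\nu$-\textsf{pLTL} formula for every $\psi$, all states of $\mathcal{B}^3_{M,N}$ lie in $\nu$-\textsf{pLTL}, and the transition $\delta^3_{M,N}$ ignores the bed. So $\eta$ is accepting iff $\af[w_{tt'}](\gamma)\not\sim\bot$ for all $t'\ge t$. By Lemma~\ref{lemma_mu_nu_limits}(2), this holds iff $w_t\models\gamma$. It therefore suffices to show the following two implications:
\begin{itemize}
\item[(a)] under $\GF{\varphi}{w}{0}\subseteq M$ and the $\ltlG$-hypothesis, $w_t\models\gamma$;
\item[(b)] under $M\subseteq\GF{\varphi}{w}{0}$, $w_t\models\gamma$ implies the $\ltlG$-conclusion.
\end{itemize}

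First I unfold $\gamma=\iota(\ltlG\bigwedge_{\psi\in N}\psi,\beta,M)$. Writing $\beta=\langle\xi_1,\dots\rangle$, we get $\gamma=\bigvee_i\big((\ltlG\bigwedge_{\psi\in N}\psi)\rw{C_i}[M\rw{C_i}]_\nu\land\xi_i[M\rw{C_i}]_\nu\big)$. Note that $\rw{C_i}$ and $[\cdot]_\nu$ commute with $\ltlG$ and $\land$, since $\ltlG$ unfolds to a $\ltlW$ with $\bot$ and is untouched by $\rw{\cdot}$. Hence the first conjunct is $\bigwedge_{\psi\in N}\ltlG(\psi\rw{C_i}[M\rw{C_i}]_\nu)$. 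The key auxiliary claim, proved by induction on $t$ from the definition of $\delta_\mathcal{H}$ with Lemma~\ref{lemma:entailed_correctb} and Lemma~\ref{lemma:entailed_correct} as the inductive step, has two parts:
\begin{itemize}
\item[(i)] \emph{Completeness.} For the index $i^*$ with $C_{i^*}$ realising $\compseq{\varphi}{t}$, meaning $\zeta\rw{C_{i^*}}=\zeta\rw{\entcon{\varphi}{0}}\cdots\rw{\entcon{\varphi}{t}}$ for all $\zeta\in\sff(\varphi)$, we have $w_t\models\xi_{i^*}$.
\item[(ii)] \emph{Soundness.} Whenever $w_t\models\xi_i$, then for every $\zeta\in\sff(\varphi)$ and $s\ge0$, $(w_t,s)\models\zeta\rw{C_i}$ iff $(w,t+s)\models\zeta$. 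In other words, $C_i$ is a correct configuration, up to the identifications allowed by saturation.
\end{itemize}
The saturation relation $C_j\preceq C_i$ in $J_i$ is what makes the guess $C_i\rw{C_j}$ well defined along the induction, so I would track configurations modulo $\xi\rw{C}=\xi'\rw{C}$.

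For (a), I take the disjunct $i^*$. Its $\xi_{i^*}$ holds by (i), and the $\ltlG$ part holds by the hypothesis. For the $[M\rw{C_{i^*}}]_\nu$-rewrite of $\xi_{i^*}$, I apply Lemma~\ref{lemma_MN_properties}(1) to the word $w$ at position $t$. This needs $\F{}{}{}$ of the rewritten subformulae to be contained in $M\rw{C_{i^*}}$. Here I would use stability together with $\GF{\varphi}{w}{0}\subseteq M$, transported through the configuration via (ii). This point needs care: Lemma~\ref{lemma_MN_properties}(1) requires $\F{}{w}{0}\subseteq M$ rather than $\GF{}{}{}$. I expect the intended reading is that $t$ is at or past the stability point, or that the lemma is applied to the suffix $w_t$, where $\F{}{}{}$ and $\GF{}{}{}$ coincide. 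I would state this explicitly.

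For (b), I take a disjunct $i$ with $w_t$ satisfying both conjuncts. Using Lemma~\ref{lemma_MN_properties}(2), which is sound under $M\subseteq\GF{\varphi}{w}{0}$ transported by (ii), I recover $w_t\models\xi_i$. Then (ii) shows that $C_i$ agrees with the true configuration. Consequently $\psi\rw{C_i}[M\rw{C_i}]_\nu$ and $\psi\rw{\compseq{\varphi}{t}}[M\rw{\compseq{\varphi}{t}}]_\nu$ are satisfied at the same positions of $w_t$, which gives the conclusion.

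The main obstacle is the auxiliary invariant (i)/(ii), specifically soundness. The nested $\af(\cdot,C_i\rw{C_j})$ updates inside $\delta_\mathcal{H}$ must be shown to accumulate exactly the weakening conditions of Lemma~\ref{lemma:entailed_correctb}, across all guessed predecessors $j\in J_i$. A secondary difficulty is that the $[\cdot]_\nu$ rewriting of $\xi_i$ must be justified by Lemma~\ref{lemma_MN_properties} relative to the rewritten set $M\rw{C_i}$ rather than $M$ itself.
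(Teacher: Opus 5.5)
\textbf{The reduction and part 1).} Your outer reduction is the paper's. Every state of $\mathcal{B}^3_{M,N}$ is a $\nu$-\textsf{pLTL} formula and $\delta^3_{M,N}$ ignores the bed, so by Lemma~\ref{lemma_mu_nu_limits}(2) acceptance of $\eta$ is equivalent to $w_t\models\gamma$. For 1) the paper likewise takes the disjunct of the true configuration, using completeness of $\mathcal{H}_\varphi$ for $\xi_{i^*}$ and Lemma~\ref{lemma_MN_properties}(1) for its $[\cdot]_\nu$-rewrite.

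Your caveat there points to a real defect of the statement. The paper proves 1) through a $\ltlG$-analogue of its initialization lemma, which it calls ``almost identical'' to the $\ltlF$-version. But the $\ltlF$-version uses Lemma~\ref{lemma_MN_properties}(3), whose hypothesis concerns the suffix-invariant set $\FG{\varphi}{w}{0}$. Here one instead needs $\F{\varphi}{w}{t}\subseteq M$, which follows from $\GF{\varphi}{w}{0}\subseteq M$ when $w$ is stable at $t$, but not in general. Without stability, 1) fails. Take $\varphi=(\top\ltlS\ltlF q)\ltlW\bot$, $q$ true only at position $5$, $t=1$, $M=\varnothing$, $N=\{\varphi\}$. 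The hypothesis holds, since the rewritten formula is $\ltlG((\top\ltlWS\bot)\ltlW\bot)$, which is valid. However, the disjunct of $\gamma$ for $C=\varnothing$ is $\ltlG((\top\ltlS\bot)\ltlW\bot)$, which is unsatisfiable. The disjunct for the true configuration $\{\top\ltlS\ltlF q\}$ carries the weakening condition $\ltlF q$, which $[\varnothing]_\nu$ turns into $\bot$. So $\eta$ rejects. The lemma is only invoked at the (stable) jump point, so adding the hypothesis is harmless.

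\textbf{The gap in part 2).} Your claim (ii) is false as a biconditional. $\xi_i$ collects weakening conditions only for formulae in $C_i$, so it cannot rule out that $C_i$ weakens too little. With $\psf(\varphi)=\{\ltlY p\}$ and $p\in\sigma_0$, after one letter both $\xi_{\varnothing}$ and $\xi_{\{\ltlY p\}}$ equal $\top$. Yet $(w_1,0)\not\models\ltlY p$ while $(w,1)\models\ltlY p$.

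More seriously, the step ``recover $w_t\models\xi_i$ from $w_t\models\xi_i[M\rw{C_i}]_\nu$ by Lemma~\ref{lemma_MN_properties}(2)'' is circular. That lemma needs $M\rw{C_i}\subseteq\GF{\xi_i}{w_t}{0}$. Transporting $M\subseteq\GF{\varphi}{w}{0}$ to it requires knowing how $C_i$ relates to the true configuration, which (ii) could only supply once $w_t\models\xi_i$ is already known. The step is in fact false. Take $A=\top\ltlS p$, $\varphi=\ltlY\ltlX\ltlF A$, $p$ true only at $0$, $t=1$, $M=\{\ltlF A\}\subseteq\GF{\varphi}{w}{0}$ and $C_i=\{\varphi\}$. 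Then $\xi_i\sim\ltlF A$ with $A$ strong, which fails on $w_1$, while $\xi_i[M\rw{C_i}]_\nu\sim\top\ltlW A$ is valid.

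What closes 2) is the soundness property of $\mathcal{H}_\varphi$ that the paper imports from Azzopardi et al.: if $M\subseteq\GF{\varphi}{w}{0}$ and $w_t\models\xi_i[M\rw{C_i}]_\nu$, then $C_i\subseteq\compseq{\varphi}{t}$. This gives only an inclusion, not a correct configuration. It is also stated directly for the $[\cdot]_\nu$-rewritten conditions, so it must be proved as such (by induction on $t$) rather than routed through $w_t\models\xi_i$. Monotonicity of $\rw{\cdot}$ under inclusion then gives the one direction you actually need, from $\psi\rw{C_i}[M\rw{C_i}]_\nu$ to $\psi\rw{\compseq{\varphi}{t}}[M\rw{\compseq{\varphi}{t}}]_\nu$.
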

\subsection{Constructing the LDBA}
\label{section_constructing_ldba}
Let $\mathcal{H}_\varphi = (S, S_0, \delta_\mathcal{H})$. For each pair of sets of subformulas $M \subseteq \mu(\varphi)$ and $N \subseteq \nu(\varphi)$ we assume an ordering $\psi_1, \psi_2, \dots, \psi_{\vert M \vert}$ of the elements in $M$ and define three functions. These are transitions that describe for $i \in [3]$ the state reached in $B^i_{M, N}$ upon jumping from the non-deterministic part of the LDBA depending on the consumed prefix of the word being read and how it has transformed $\varphi$.
\begin{equation*}
\begin{aligned}
    Q^1_{0, M, N}(\zeta, \xi^\times, \sigma) &\coloneqq \iota(\af(\zeta), \xi^\times, M) \\
    Q^2_{0, M, N}(\zeta, \xi^\times, \sigma) &\coloneqq \big\langle \iota(\ltlF \psi_1, \xi^\times, N), 1 \big\rangle \\
    Q^3_{0, M, N}(\zeta, \xi^\times, \sigma) &\coloneqq \iota(\ltlG \land_{\psi \in N} \, \psi, \xi^\times, M).
\end{aligned}
\end{equation*}
The choice of index $1$ in the definition of $Q^2_{0, M, N}$ is arbitrary. We collect the components of the three automata defined in Section~\ref{section_ldba_components} and the above defined functions by taking their direct products:
\begin{equation*}
\begin{aligned}
    Q_{M, N} &\coloneqq Q^1_{M, N} \times Q^2_{M, N} \times Q^3_{M, N} \\
    Q_{0, M, N}(\zeta, \xi^\times, \sigma) &\coloneqq \big\langle
    Q^1_{0, M, N}(\zeta, \xi^\times, \sigma), Q^2_{0, M, N}(\zeta, \xi^\times, \sigma), Q^3_{0, M, N}(\zeta, \xi^\times, \sigma)
    \big\rangle
    \\
    \delta_{M, N}\big( \big\langle \zeta_1, \zeta_2, \zeta_3 \big\rangle, \xi^\times, \sigma \big) &\coloneqq \big\langle
    \delta^1_{M, N}(\zeta_1, \xi^\times, \sigma), \delta^2_{M, N}(\zeta_2, \xi^\times, \sigma), \delta^3_{M, N}(\zeta_3, \xi^\times, \sigma)
    \big\rangle \\
    \alpha_{M, N} &\coloneqq \alpha^1_{M, N} \times \alpha^2_{M, N} \times \alpha^3_{M, N}.
\end{aligned}
\end{equation*}
Following the intuition described in Section~\ref{section_ldba_overview}, we now define the runner automaton $\mathcal{A}_{LDBA}(\varphi) = (Q, Q_0, \delta, \alpha)$ with bed automaton $\mathcal{H}_\varphi$ with the following components:
\begin{equation*}
\begin{aligned}
    Q &\coloneqq \mathbb{B}^\lor(\varphi) \ \cup \ \bigcup \big\{ Q_{M, N} \mid M \subseteq \mu(\varphi), N \subseteq \nu(\varphi) \big\}\\
    Q_0 &\coloneqq \varphi \\
    \delta\big(\zeta, \xi^\times, \sigma\big) &\coloneqq \big\{ \af(\zeta) \big\} \ \cup \ \bigcup \big\{ Q_{0, M, N}(\zeta, \xi^\times, \sigma) \mid M \subseteq \mu(\varphi), N \subseteq \nu(\varphi) \big\} \\
    \delta\big(\big\langle \zeta_1, \zeta, \zeta_3 \big\rangle, \xi^\times, \sigma \big) &\coloneqq \delta_{M, N}\big(\big\langle \zeta_1, \zeta_2, \zeta_3 \big\rangle, \xi^\times, \sigma \big) \\
    \alpha &\coloneqq \bigcup \big\{ \alpha_{M, N} \mid  M \subseteq \mu(\varphi), N \subseteq \nu(\varphi) \big\}.
\end{aligned}
\end{equation*}
Observe that we define the transition function in a pattern matching style: the successor states of a state in $\mathbb{B}^\lor(\varphi)$ is given by $\delta(\zeta, \xi^\times, \sigma)$ and the successor state of a state in $Q_{M, N}$ is given by $\delta(\langle \zeta_1, \zeta_2, \zeta_3 \rangle, \xi^\times, \sigma)$.

The automaton begins in its non-deterministic part with state space $\mathbb{B}^\lor(\varphi)$, starting with the state labeled by $\varphi$. At each state and time $t$ it is possible to either keep updating $\varphi$ using the after-function, or to guess that the word is now stable with respect to $\varphi$ and jump to an arbitrary accepting component $Q_{M, N}$. In the latter case, we continue in the state $Q_{0, M, N}$, which is defined in terms of the initialization formulae; this ensures that the correct weakening conditions are added, as tracked by the WC automaton. We then proceed by verifying that $w_t \models \af[w_{0t}](\varphi)[M\rw{C}]_\nu$ for some $C$ whose weakening conditions are satisfied, using the automaton $\mathcal{B}^1_{M, N}$ described in Section~\ref{section_main_automaton}, as well as the assumptions on $M$ and $N$ using the automata described in Section~\ref{section_M_automaton} and Section~\ref{section_N_automaton}, respectively.

\begin{restatable}{theorem}{ldbacorrect}
    Let $\varphi$ be a formula. Then $\mathcal{L}(\mathcal{H}_\varphi \ltimes \mathcal{A}_{LDBA}(\varphi)) = \mathcal{L}(\varphi)$. Moreover, the number of states of $\mathcal{H}_\varphi \ltimes \mathcal{A}_{LDBA}(\varphi)$ is at most doubly exponential in the size of $\varphi$.
\end{restatable}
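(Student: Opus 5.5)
The proof has two parts: language inclusion in both directions, and the state bound. Both inclusions rest on the component lemmas (Lemma~\ref{lemma_B1}, Lemma~\ref{lemma_B2}, Lemma~\ref{lemma_B3}) together with Lemma~\ref{lemma_MN_=>} and Lemma~\ref{lemma_MN_<=}. Throughout, $\mathcal{H}_\varphi$ is deterministic. So along any run, after reading $w_{0t}$ the bed component is $\beta = \cwc(S_0, w_{0t})$. By Corollary~\ref{lemma:af_correct}, the runner component in the non-deterministic part is $\af[w_{0t}](\varphi)$. An accepting run must eventually jump into some $Q_{M,N}$, since $\alpha$ lies in the accepting parts. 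After the jump it is deterministic, and it visits $\alpha^1_{M,N}\times\alpha^2_{M,N}\times\alpha^3_{M,N}$ infinitely often. In particular, each of the three projected runs is accepting.

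For $\mathcal{L}(\varphi)\subseteq\mathcal{L}(\mathcal{H}_\varphi \ltimes \mathcal{A}_{LDBA}(\varphi))$, let $w\models\varphi$. Choose $s$ with $w$ stable at $s$, and set $M=\GF{\varphi}{w}{0}$ and $N=\FG{\varphi}{w}{0}$. The run follows $\af[]$ until time $s$ and then jumps into $Q_{M,N}$.
\begin{itemize}
\item The first component starts in $\iota(\af[w_{0s}](\varphi),\beta,M)$. By the definition of $Q^1_{0,M,N}$, this is exactly the entry state of Lemma~\ref{lemma_B1}, up to the one-letter shift in indexing, which I would align carefully. Lemma~\ref{lemma_B1}(1) then gives acceptance.
\item For the second component, Lemma~\ref{lemma_MN_=>} supplies the hypothesis of Lemma~\ref{lemma_B2}(1), so that run is accepting.
\item For the third component, the second conjunct of Lemma~\ref{lemma_MN_=>} holds for all $t\ge s$, which gives Lemma~\ref{lemma_B3}(1).
\end{itemize}

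The delicate point is combining the three Büchi conditions. Components $1$ and $3$ have co-Büchi-like acceptance: every state other than $\bot$ is accepting, so once accepting they stay out of $\bot$ forever. Hence visits to the product acceptance set are exactly the visits of component $2$ to $\langle\top,1\rangle$, which occur infinitely often.

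For the converse inclusion, take an accepting run that jumps at time $t$ into $Q_{M,N}$. Acceptance of component $2$ gives, via Lemma~\ref{lemma_B2}(2), the first hypothesis of Lemma~\ref{lemma_MN_<=}. Acceptance of component $3$ gives, via Lemma~\ref{lemma_B3}(2), the second hypothesis. But these lemmas are conditioned on $N\subseteq\FG{\varphi}{w}{0}$ and $M\subseteq\GF{\varphi}{w}{0}$ respectively, which is exactly what Lemma~\ref{lemma_MN_<=} is meant to produce. This circularity is the main obstacle. I would break it by induction on the nesting depth of subformulae in $M\cup N$, as in Esparza et al.: to show that $\psi\in M$ holds infinitely often, the hypotheses only need $N$ restricted to proper subformulae of $\psi$, and symmetrically for $N$. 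So I would prove simultaneously, by induction on subformula height, that every $\psi\in M$ lies in $\GF{\varphi}{w}{0}$ and every $\psi\in N$ lies in $\FG{\varphi}{w}{0}$. This may require re-reading Lemmas~\ref{lemma_B2},~\ref{lemma_B3} restricted to subformulae. Once $M\subseteq\GF{\varphi}{w}{0}$ is established, Lemma~\ref{lemma_B1}(2) yields $w\models\varphi$.

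For the size bound, I would count the product components:
\begin{itemize}
\item The bed $S$ has $2^{|\psf(\varphi)|}$ coordinates, each of size at most $|\mathbb{B}(\varphi\rw{C_i})_{/\sim}|\le 2^{2^{|\varphi|}}$. This gives $2^{2^{|\varphi|}\cdot 2^{|\varphi|}}=2^{2^{O(|\varphi|)}}$.
\item The non-deterministic part $\mathbb{B}^\lor(\varphi)_{/\sim}$ is bounded by Lemma~\ref{lemma_size_bound}.
\item Each $Q^k_{M,N}$ is a $\mathbb{B}^\lor$ of a formula of linear size, times at most $|M|\le|\varphi|$ indices for $k=2$. So Lemma~\ref{lemma_size_bound} again gives doubly exponential bounds.
\item There are at most $2^{|\varphi|}\cdot 2^{|\varphi|}$ pairs $(M,N)$.
\end{itemize}
Multiplying these bounds stays within $2^{2^{O(|\varphi|)}}$.
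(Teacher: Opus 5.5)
Your proposal is correct and follows the paper's route: the forward inclusion jumps at a stable point into $Q_{M,N}$ with $M=\GF{\varphi}{w}{0}$, $N=\FG{\varphi}{w}{0}$ and discharges Lemmas~\ref{lemma_B1}(1), \ref{lemma_B2}(1), \ref{lemma_B3}(1) via Lemma~\ref{lemma_MN_=>}; the converse breaks the circularity between Lemmas~\ref{lemma_B2}(2) and~\ref{lemma_B3}(2) by the paper's own simultaneous induction over $M\cup N$ ordered by the subformula relation, with the same implicit restriction of those lemmas to the partial sets that you flag; and the size bound is the same component-by-component count. Your remark that components $1$ and $3$ can never reach the sink $\bot$, so that visits to $\alpha_{M,N}$ coincide with visits of component $2$ to $\langle\top,1\rangle$, makes explicit a step the paper leaves as ``it follows''.
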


\section{From LDBA to DPA}
\label{section:dpa}
Esparza et al.~\cite{esparza2022dpa} present a translation from linear temporal logic (without past) to deterministic parity automata, via limit-deterministic Büchi automata. Provided that the LDBA automaton satisfies certain conditions, the translation is asymptotically optimal, resulting in an automaton with at most a doubly exponential number of states with respect to the size of the translated formula. We now establish that the construction of Section~\ref{section_pltl_to_ldba} satisfies these conditions. We begin with by providing the relevant definitions.
\begin{definition}[Initial-deterministic LDBA]
An LDBA $(Q, Q_0, \delta, \alpha)$ is initial-deterministic if the restriction $\delta \vert_{Q_{nd}}$ is deterministic. That is, for all $q \in Q_{nd}$ and letters $\sigma$, it is the case that $\vert \delta(q, \sigma) \cap Q_{nd} \vert = 1$.
\end{definition}
\begin{definition}[Bases]
A base of an LDBA $(Q, Q_0, \delta, \alpha)$ over an alphabet $\Sigma$ is a (finite) set $\mathcal{B} = \{ L_1, L_2, \dots, L_n \}$ of languages $L_i \subseteq \Sigma^\omega$, such that there, for every $q \in Q$, exists a set $I \subseteq [n]$ satisfying $\mathcal{L}(q) = \bigcup_{i \in I} L_i$.
\end{definition}
Intuitively, a base for an LDBA is a set of languages such that the set of words accepted from each state of the automaton is the union of some subset of languages from the base.
\begin{lemma}[\cite{esparza2022dpa}]
\label{lemma:ldba_property}
Let $\mathcal{A}$ be an initially-deterministic LDBA with a base $\mathcal{B}$. Then there exists an equivalent DPA with at most $2^{\mathcal{O}(b^2)}$ states, where $b = \vert \mathcal{B} \vert + 1$.
\end{lemma}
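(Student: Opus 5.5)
The plan is to follow the construction of Esparza et al.~\cite{esparza2022dpa}: determinize the LDBA $\mathcal{A}=(Q,Q_0,\delta,\alpha)$ with a ``breakpoint list'' construction, and use the base $\mathcal{B}$ only to bound the number of list configurations. Because $\mathcal{A}$ is initial-deterministic, reading a finite prefix leaves exactly one run inside $Q_{nd}$. All non-determinism therefore lies in the times at which the run leaves $Q_{nd}$ into $Q_d$. Once in $Q_d$, every run is deterministic. A word is accepted iff for some time $t$, the deterministic run of the unique $Q_{nd}$-state at $t$ jumps into $Q_d$ and, from that jump state, eventually visits $\alpha$ infinitely often.

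A DPA state will be a pair $(q,\langle p_1,\dots,p_k\rangle)$, where $q\in Q_{nd}$ is the current initial-component state and $p_1,\dots,p_k\in Q_d$ is a list of tracked accepting-part states ordered by age. On a letter $\sigma$ we carry out four steps in order.
\begin{enumerate}
\item Move $q$ to its unique $Q_{nd}$-successor and each $p_i$ to its unique successor.
\item Append to the end of the list the $Q_d$-successors of $q$.
\item Remove every entry $p_i$ with $\mathcal{L}(p_i)\subseteq\bigcup_{j<i}\mathcal{L}(p_j)$; this also removes duplicates.
\item Assign a priority: $2i$ if $i$ is the least index whose entry visited $\alpha$ and was not removed; $2i-1$ if $i$ is the least index at which an entry was removed; a large odd default otherwise.
\end{enumerate}

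Correctness follows the standard argument. Suppose $w$ is accepted via a jump at time $t$. Consider the smallest position $i$ that eventually stabilises with a tracked state whose run accepts the suffix. Positions below $i$ are removed only finitely often, and position $i$ sees $\alpha$ infinitely often, so the minimal priority seen infinitely often is even. Conversely, suppose an even priority $2i$ is the minimum seen infinitely often. Then position $i$ is never removed from some point on and visits $\alpha$ infinitely often. So some deterministic run from a jump state is accepting, and $w\in\mathcal{L}(\mathcal{A})$. The removal step is harmless: a removed $p_i$ has its suffix language covered by older entries, and an accepting run from $p_i$ can be replaced by one from an older entry. This is exactly where the language-containment pruning earns its keep.

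Next comes the bound. For each $p\in Q_d$ let $I_p\subseteq[\vert\mathcal{B}\vert]$ be the set given by the base, so $\mathcal{L}(p)=\bigcup_{i\in I_p}L_i$. After pruning, the sets $\bigcup_{j\le i} I_{p_j}$ grow strictly with $i$, so $k\le\vert\mathcal{B}\vert$.

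The main obstacle is that the list holds concrete states $p_j$ and not just the sets $I_{p_j}$, so the count above does not yet bound the number of DPA states. I would first try to quotient each entry by its language. Choose a canonical representative for each language $\bigcup_{i\in I}L_i$ and replace $p_j$ by it. The Büchi-visit priority would then be redefined via stabilisation of the entry's language-class sequence rather than literal visits to $\alpha$. This is the step where I would lean most heavily on the argument of~\cite{esparza2022dpa}. It should be sound because, along a deterministic run, membership of the remaining suffix in $\mathcal{L}(p_j)$ is all that matters in the limit. Once this works, each list is a sequence of at most $\vert\mathcal{B}\vert$ subsets of $[\vert\mathcal{B}\vert]$, giving $(2^{\vert\mathcal{B}\vert})^{\vert\mathcal{B}\vert}$ possibilities. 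The component $q\in Q_{nd}$ is handled in the same way: it is itself a state with a language expressible in the base, which is where the $+1$ in $b$ comes from. Altogether this yields $2^{\mathcal{O}(b^2)}$ states.
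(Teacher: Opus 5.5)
The paper gives no proof of this lemma; it imports it from Esparza et al. The first half of your argument is the right construction and is sound. That part consists of the age-ordered list of $Q_d$-states, appending jump targets, pruning $p_i$ when $\mathcal{L}(p_i)\subseteq\bigcup_{j<i}\mathcal{L}(p_j)$, and the priority read as $\min(2i_\alpha,\,2i_{\mathrm{rem}}-1)$. So are the correctness sketch and the bound of $|\mathcal{B}|$ on the length of pruned lists.

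\textbf{The gap is your final step, the language quotient.} Replacing an entry by a canonical representative of its language class is unsound. A sequence that is re-canonicalised at every step is not a run of $\mathcal{A}$, so its visits to $\alpha$ say nothing. The ``language-class sequence'' of an entry is just the sequence of residuals $w_{0t}^{-1}\mathcal{L}(p_j)$. It depends only on the prefix and carries no information about whether the suffix is accepted.

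The same problem arises for $q$, because two language-equivalent states of $Q_{nd}$ can have different jump targets. Over $\{a\}$, let $x$ loop on itself and also jump to an accepting sink, and let $y$ move to $x$ without jumping. Then $\mathcal{L}(x)=\mathcal{L}(y)=a^\omega$, but the run canonicalised to $y$ never jumps.

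\textbf{This cannot be repaired, because the statement as quoted, with no dependence on $|Q|$, is false.} Let $L_n$ be the set of words over $\{a,b\}$ with infinitely many occurrences of $a^n$. The obvious $(n+1)$-state DBA is an initial-deterministic LDBA (take $Q_d=Q$). All of its states have language $L_n$, so $\{L_n\}$ is a base and $b=2$.

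Yet every DPA for $L_n$ has at least $n$ states. Suppose it had $k<n$ states, and let $P\ge 1$ be a common multiple of all $a$-cycle lengths. Since $n-1\ge k$, reading $a^{n-1}$ or $a^{n-1+P}$ from any state ends in the same state and visits the same set of states. Hence the runs on $(a^{n-1}b)^\omega\notin L_n$ and $(a^{n-1+P}b)^\omega\in L_n$ have the same infinitely-visited set, a contradiction.

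What your construction honestly yields is obtained by keeping concrete states and bounding only the list length. That gives at most about $|Q_{nd}|\cdot(|Q_d|+1)^{|\mathcal{B}|}=2^{\mathcal{O}(|\mathcal{B}|\log|Q|)}$ states. This is $2^{\mathcal{O}(b^2)}$ only under the extra hypothesis $\log|Q|=\mathcal{O}(b)$.

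For the paper's LDBA that hypothesis holds, since $|Q|\le 2^{2^{\mathcal{O}(|\varphi|)}}$ and $|\mathcal{B}|=2^{\mathcal{O}(|\varphi|)}$. So the doubly exponential corollary survives, but the lemma must carry that hypothesis or an explicit $|Q|$ in the bound. Your proof should stop at the concrete-state count rather than attempt the quotient.
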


It is immediately clear that the automaton $\mathcal{H}_\varphi \ltimes \mathcal{A}_{LDBA}(\varphi)$ is initial-deterministic.
\begin{restatable}{lemma}{initdeterministic}
    Given a formula $\varphi$, the LDBA $\mathcal{A}_{LDBA}$ is initial deterministic.    
\end{restatable}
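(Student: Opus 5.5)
We argue directly from the definitions by identifying the two parts of $\mathcal{H}_\varphi \ltimes \mathcal{A}_{LDBA}(\varphi)$. We take the accepting part to be $Q_d \coloneqq \bigcup \{ Q_{M,N} \mid M \subseteq \mu(\varphi), N \subseteq \nu(\varphi)\} \times S$ and the non-deterministic part to be $Q_{nd} \coloneqq \mathbb{B}^\lor(\varphi) \times S$. Here we work with propositional equivalence classes throughout, as in the remark preceding Section~\ref{section_wc_automaton}. The proof then has two steps.

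\emph{Step 1: the composed automaton is an LDBA with accepting part $Q_d$.} Since $\overline{\alpha} = \alpha \times S$ and $\alpha$ is a union of sets $\alpha_{M,N} \subseteq Q_{M,N}$, every accepting state lies in $Q_d$. Next take a state $\langle \langle \zeta_1,\zeta_2,\zeta_3\rangle, s\rangle \in Q_d$ and a letter $\sigma$. The bed transition $\delta_\mathcal{H}(s,\sigma)$ is a single tuple. The runner transition $\delta_{M,N}$ is componentwise given by $\delta^1_{M,N}$, $\delta^2_{M,N}$, $\delta^3_{M,N}$, each of which returns exactly one class: either $\af(\zeta)$, or the single initialization class $\iota(\ltlF\psi_j,\xi^\times,N)$ paired with its index. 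So the successor is unique. It stays in $Q_{M,N}$ because $\af$ maps $\mathbb{B}^\lor(\cdot)$ into itself, the initialization formulae land back in the appropriate $\mathbb{B}^\lor$-set, and $Q_d$ has no transitions back to $\mathbb{B}^\lor(\varphi)$.

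\emph{Step 2: determinism on $Q_{nd}$.} For $\langle \zeta, s\rangle \in Q_{nd}$, the set $\delta(\zeta,\xi^\times,\sigma)$ is $\{\af(\zeta)\}$ together with the jump targets $Q_{0,M,N}(\zeta,\xi^\times,\sigma)$, and all jump targets belong to $Q_d$. Hence $\delta(\cdot,\sigma)\cap Q_{nd}$ consists of the single state $\langle \af(\zeta), \delta_\mathcal{H}(s,\sigma)\rangle$. This step also needs $\af(\zeta) \in \mathbb{B}^\lor(\varphi)$. That holds because each disjunct $\af(\zeta, C)$ is a Boolean combination of subformulae of $\varphi\rw{C'}$ rewritten by $C$, i.e. of $\varphi\rw{C''}$ for some $C''$. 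We treat this closure property as already implicit in the state-space definition and the size bound of Lemma~\ref{lemma_size_bound}.

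The main obstacle is the disjointness of $\mathbb{B}^\lor(\varphi)$ from the sets $Q_{M,N}$, and of the $Q_{M,N}$ from one another. As bare equivalence classes of formulae they may overlap, for instance both may contain $[\top]_\sim$. We resolve this by reading $Q$ as a disjoint union, i.e. by tagging the states of each $Q_{M,N}$ with $(M,N)$ and with their triple structure. The pattern-matching definition of $\delta$ already presupposes this tagging, since it distinguishes states by shape. Under this reading, both the uniqueness in Step 2 and the closure in Step 1 are immediate.
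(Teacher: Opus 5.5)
Your proposal is correct and follows the paper's argument. The paper's proof is just your Step 2: from a state of $\mathbb{B}^\lor(\varphi)$, the only successor that stays in the non-deterministic part is $\af(\zeta)$, unique up to propositional equivalence. Your Step 1, the explicit treatment of the WC automaton component, and the tagging of the $Q_{M,N}$ are extra bookkeeping that verifies the LDBA structure the paper takes for granted, not a different route.
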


%Each state of the automaton is a product containing a linear number of formulae in $\mathbb{B}^\lor(\varphi')$, for some $\varphi'$. 
To motivate the existence of a sufficiently small base, we concentrate on formulae in $\mathbb{B}^{\lor}(\varphi')$, for some $\varphi'$ derived from $\varphi$ and of at most the same size, which form the transitions of the different components of the LDBA.
Each formula $\psi$ in $\mathbb{B}^{\lor}(\varphi')$ has the same language as some formula $\psi^d$ in $\mathbb{B}^{\lor}(\varphi')$ that is in disjunctive normal form. 
By construction, each disjunct of $\psi^d$ is a conjunctive clause belonging to $\mathbb{B}(\varphi'\rw{C})$, for some $C \subseteq \psf(\varphi')$. 
There are at most $2^{|\psf(\varphi)|}$ choices for $C\subseteq \psf(\varphi')$.
Each conjunctive clause, corresponds to a subset of $\sff(\varphi')$.
Hence, there are at most $2^{|\sff(\varphi)|+|\psf(\varphi)|}$ different disjuncts in $\psi^d$.
There are at most $2^{|\varphi|}$ conjunctive clauses in $\mathbb{B}(\varphi)$. 
% There are at most $2^{\vert \varphi \vert}$ conjunctive clauses in $\mathbb{B}(\varphi)$, up to initial equivalence. The set $\mathbb{B}^\lor(\varphi)$ is defined as disjunctions of conjunctive clauses in  Hence, the language of a formula in $\mathbb{B}^\lor(\varphi)$ is specified by some disjunct $\psi \in \mathbb{B}(\varphi\rw{C})$, where $C$ is the maximal set of past formulae whose weakening conditions are satisfied. 
Hence the language of $\psi$ is a union of at most $2^{\vert \varphi \vert} \cdot 2^{\vert \psf(\varphi)\vert }$ sub-languages.

Since the set of states of $\mathcal{H}_\varphi \ltimes \mathcal{A}_{LDBA}(\varphi)$ is defined in terms of a linear number of copies of the set $\mathbb{B}^\lor(\varphi)$, we obtain the following lemma:

\begin{restatable}{lemma}{expbase}
    Given a formula $\varphi$, the LDBA $\mathcal{A}_{LDBA}$ has a base of size $2^{\mathcal{O(\vert \varphi \vert)}}$.
\end{restatable}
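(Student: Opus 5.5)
We will build an explicit base out of languages of ``conjunctive clauses with a fixed past configuration'' and then show that every state language of $\mathcal{H}_\varphi \ltimes \mathcal{A}_{LDBA}(\varphi)$ is a union of such languages. First we fix the finitely many ``root'' formulae from which the state spaces are generated: $\varphi$, $\varphi[M]_\nu$, $\varphi[N]_\mu \land \ltlF(\psi_i[N]_\mu)$ and $\varphi[M]_\nu \land \bigwedge_{\psi\in N}\ltlG(\psi[M]_\nu)$ for $M\subseteq\mu(\varphi)$, $N\subseteq\nu(\varphi)$, $\psi_i\in M$. Each of these roots $\varphi'$ satisfies $|\sff(\varphi')| = \mathcal{O}(|\varphi|)$ and $|\psf(\varphi')|\le|\psf(\varphi)|$. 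The rewrites $\placeholder[M]_\nu$ and $\placeholder[N]_\mu$ only replace operators by their dual or by constants, so subformulae map injectively. For a root $\varphi'$, a set $C\subseteq\psf(\varphi')$, a subset $\Gamma\subseteq\sff(\varphi'\rw{C})$, and a state of the bed automaton, we take the language of the conjunction $\bigwedge\Gamma$. The conjunction is read with the bed state's weakening conditions incorporated, as in the runner automata.

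The key step is to show that every state $\zeta\in\mathbb{B}^\lor(\varphi')$ is propositionally equivalent to a disjunction of such clauses. We put $\zeta$ into disjunctive normal form, treating maximal temporal subformulae as variables. By the definition of $\mathbb{B}^\lor$, every disjunct lies in some $\mathbb{B}(\varphi'\rw{C})$, so it is a conjunction of literals over $\sff(\varphi'\rw{C})$. Since everything is in negation normal form and negated atoms belong to $\sff$, each clause is just a subset $\Gamma$. Propositional equivalence implies semantic equivalence, so $\mathcal{L}(\zeta)$ is the union of the clause languages. A product state $\langle\zeta_1,\zeta_2,\zeta_3\rangle$ together with its bed state accepts the intersection of the component languages, because all three components are deterministic runners on the same bed run.

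Here lies the main obstacle. An intersection of unions is a union of intersections, so the base has to be closed under the relevant intersections. We would take as base elements the triples of clauses, one per component, paired with the index $M,N$ and the bed state's relevant coordinate. Counting gives $2^{\mathcal{O}(|\varphi|)}$ choices of $(M,N)$, $2^{\mathcal{O}(|\varphi|+|\psf(\varphi)|)}$ clause triples, and a factor $|M|$ for the index $i$. The product is still $2^{\mathcal{O}(|\varphi|)}$.

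The delicate point is the dependence on the bed state $\xi^\times$. The bed has doubly exponentially many states, so we cannot afford one base element per bed state. We would try to argue that the accepted language from $\langle\xi^\times,q\rangle$ depends on $\xi^\times$ only through the weakening conditions already folded into $q$ by the initialization formulae $\iota$. If that argument fails, the fallback is to note that the runner transitions ignore $\xi^\times$ except at the resets of $\mathcal{B}^2_{M,N}$. At those resets, correctness (Lemma~\ref{lemma_B2}) shows the language is determined by $w$ alone. This lets us quotient out the bed state before counting.
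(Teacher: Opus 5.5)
Your proposal has a genuine gap. It sits at the point you yourself call delicate, and it is made worse by treating the language of a formula as if it were the language of an automaton state.

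\textbf{Formula languages versus state languages.} The DNF argument shows that the \emph{formula} $\zeta\in\mathbb{B}^\lor(\varphi')$ has a language that is a union of clause languages. A base, however, must cover the language accepted \emph{from the automaton state}. For the non-deterministic part, this identification comes from the correctness of the after-function (Corollary~\ref{lemma:af_correct}). For $\mathcal{B}^1_{M,N}$ and $\mathcal{B}^3_{M,N}$ it comes from Lemma~\ref{lemma_mu_nu_limits}.2, since those states are $\nu$-\textsf{pLTL} formulae. Moreover, $\delta^1_{M,N}$ and $\delta^3_{M,N}$ ignore $\xi^\times$, so there the bed state is irrelevant and plain clause languages suffice. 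Your ``reading with the bed state's weakening conditions incorporated'' is not needed there, and as written it is not a defined language.

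For $\mathcal{B}^2_{M,N}$ the identification is false. From $\langle\xi^\times,\langle\zeta,i\rangle\rangle$ a word is accepted only if $\zeta$ is driven to $\top$ \emph{and} infinitely many later resets $\iota(\ltlF\psi_j,\beta',N)$ also succeed, where $\beta'$ are the then-current bed states. So triples of clauses do not span the languages of the product states, whatever you do with $\xi^\times$.

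\textbf{The missing step.} You need a bed-independent description of this recurrence part, and neither of your two suggestions supplies one.
\begin{itemize}
\item The first cannot work as stated for $\mathcal{B}^2_{M,N}$: the later resets read bed coordinates that were never folded into $\zeta$.
\item The fallback through Lemma~\ref{lemma_B2} gives no exact description of any state language. That lemma concerns freshly reset states whose bed coordinate is $\cwc(S_0,w_{0t'})$. Each of its two directions holds only under a different hypothesis relating $N$ to $\FG{\varphi}{w}{0}$. It therefore says nothing about components with a wrongly guessed $N$, which a base must still cover.
\end{itemize}

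\textbf{What the paper does.} The paper argues that the language of a $\mathcal{B}^2_{M,N}$-state is cut out by a clause of $q$ together with the recurrence formula $\bigl(\bigwedge_{i\in[\vert M\vert]}\ltlG\ltlF(\psi_i[N]_\mu)\bigr)\rw{C}$ for some $C\subseteq\psf(\varphi)$. The dependence on the bed is thus absorbed into the choice of $C$. This yields $2^{\vert\varphi\vert+\vert M\vert+\vert\psf(\varphi)\vert}$ sub-languages. Cubing over the three components and summing over the $2^{\mathcal{O}(\vert\varphi\vert)}$ pairs $(M,N)$ keeps the base at $2^{\mathcal{O}(\vert\varphi\vert)}$.

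A minor point: $\placeholder[M]_\nu$ is not injective on subformulae. For example, if $a\ltlU b\in M$, then $a\ltlU b$ and $a\ltlW b$ have the same image. You only need that $\sff$ does not grow.
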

From this, we immediately obtain the following corollary:
\begin{corollary}
    Let $\varphi$ be a formula. Then there exists a DPA $\mathcal{A}$ with a number of states at most doubly exponential in the size of $\varphi$, such that $\mathcal{L}(\mathcal{A}) = \mathcal{L}(\varphi)$, 
\end{corollary}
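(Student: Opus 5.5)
The corollary should follow by chaining the last two lemmas with Lemma~\ref{lemma:ldba_property} and the correctness theorem for the LDBA. Let $\mathcal{A}' = \mathcal{H}_\varphi \ltimes \mathcal{A}_{LDBA}(\varphi)$. By the theorem preceding Section~\ref{section:dpa}, $\mathcal{L}(\mathcal{A}') = \mathcal{L}(\varphi)$. Next I will check that $\mathcal{A}'$ is a genuine LDBA in the sense of Section~\ref{section_preliminaries_automata}. Take as accepting part the set of states $\langle \xi^\times, q\rangle$ with $q \in Q_{M,N}$ for some $M,N$. These transitions are deterministic, because $\delta_\mathcal{H}$ and each $\delta_{M,N}$ are deterministic. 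They stay inside the accepting part, and $\overline{\alpha} = \alpha \times S$ is contained in it.

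By the lemma stating that $\mathcal{A}_{LDBA}$ is initial-deterministic, $\mathcal{A}'$ is initial-deterministic. The only choices out of a non-deterministic state $\langle \xi^\times,\zeta\rangle$ lead either to the unique successor $\langle \delta_\mathcal{H}(\xi^\times,\sigma), \af(\zeta)\rangle$ in $Q_{nd}$ or into the accepting part. By the base lemma it also has a base $\mathcal{B}$ with $|\mathcal{B}| \le 2^{c|\varphi|}$ for some constant $c$.

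Lemma~\ref{lemma:ldba_property} then gives an equivalent DPA $\mathcal{A}$ with at most $2^{\mathcal{O}(b^2)}$ states, where $b = |\mathcal{B}|+1 \le 2^{c|\varphi|}+1$. Since $b^2 \le 2^{2c|\varphi|+2}$, the state count is $2^{2^{\mathcal{O}(|\varphi|)}}$, which is doubly exponential in $|\varphi|$. Equivalence gives $\mathcal{L}(\mathcal{A}) = \mathcal{L}(\mathcal{A}') = \mathcal{L}(\varphi)$.

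The only potentially delicate point is that the base lemma and the initial-determinism lemma are stated for $\mathcal{A}_{LDBA}$, whereas Lemma~\ref{lemma:ldba_property} must be applied to the composed Büchi automaton $\mathcal{H}_\varphi \ltimes \mathcal{A}_{LDBA}(\varphi)$. I read the paper's statements as referring to this composition, which is the only object with an initial state and a language, so nothing further is required. If needed, I would justify this by noting two facts. First, the bed component is deterministic, so composing it preserves initial determinism. Second, the language of a state $\langle \xi^\times, q\rangle$ is determined by the formula disjuncts used in the base argument together with the conditions $\xi_i$, which are themselves elements of $\mathbb{B}^\lor$. The same counting therefore bounds the base size by $2^{\mathcal{O}(|\varphi|)}$.
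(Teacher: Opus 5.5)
Your proposal is correct and is the argument the paper intends when it says the corollary follows immediately. It combines the correctness theorem for $\mathcal{H}_\varphi \ltimes \mathcal{A}_{LDBA}(\varphi)$, its initial determinism and its base of size $2^{\mathcal{O}(\vert\varphi\vert)}$ with Lemma~\ref{lemma:ldba_property}, giving $2^{\mathcal{O}(b^2)} = 2^{2^{\mathcal{O}(\vert\varphi\vert)}}$ states. Your reading that these lemmas concern the composed automaton matches the paper: the remark before the initial-determinism lemma and the opening of the base lemma's proof both explicitly treat $\mathcal{H}_\varphi \ltimes \mathcal{A}_{LDBA}$ as the LDBA in question.
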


\section{Conclusion}
\label{section_discussion}
We have presented an asymptotically optimal translation from linear temporal logic with past to limit-deterministic Büchi automata, extending previous work by Esparza et al.~\cite{esparza_unified_translation}. The translation makes use of a framework for handling past operators by rewrites developed by Azzopardi et al.~\cite{azzopardi_et_al:LIPIcs.MFCS.2024.13}. In addition, we have shown that the resulting automata are amenable to a further asymptotically optimal translation to deterministic parity automata, due to Esparza et al.~\cite{esparza2022dpa}. The automata can be built incrementally on a by-need basis and do not require altering the states and transitions already created.
%\np{We note that our TNGBW construction, gives a fresh view on the classic translation of pLTL to NBW, which keeps a set of past formulae true at a given position on the side. Our construction internalizes these into the formulae themselves by weakening. It has the distinct advantage that past formulae that are not required anymore naturally disappear, while this can be optimized in the classical construction. We are not aware of this optimization being actually done in the past (pun intended). }
%\np{Guess not space for something like this.}

As future work, we would like to implement the above translations and explore whether the complexity involved in handling past operators is balanced by the added succinctness in practice. The translations by Esparza et al.~\cite{esparza_unified_translation} have been implemented in the tool Owl~\cite{owl}. It is of interest to explore how amenable those implementations are to our extensions. Owl is used by, among other things, the reactive synthesis tool Strix~\cite{strix} and the probabilistic model checker MoChiBA~\cite{mochiba}.
We would like to also study the usage of the natural encoding of the past in the conversion of \textsf{pLTL} to non-deterministic B\"uchi automata with transition-based acceptance, as used in state-of-the-art model-checking tools and in the extension of \textsf{pLTL} to include the `now' operator \cite{laroussinie2002temporal}.

\clearpage 

\bibliographystyle{eptcs}
\bibliography{main}

\clearpage
\iffullversion
\appendix

\section{Negation Normal Form for \textsf{pLTL}}
\label{appendix:nnf}
Two formulae $\varphi$ and $\psi$ are globally equivalent, written below as $\varphi \equiv_g \psi$, if $(w, t) \models \varphi \Leftrightarrow (w, t) \models \psi$, for all words $w$ and all $t$. A general \textsf{pLTL} formula $\varphi$ is easily rewritten into a globally equivalent formula in negation normal form without increasing the size of the formula using the below equivalences:
\begin{equation*}
\begin{aligned}
    \neg \top &\equiv_g \bot &
    \neg \bot &\equiv_g \top \\
    \neg (\psi \land \xi) &\equiv_g \neg \psi \lor \neg \xi &
    \neg (\psi \lor \xi) &\equiv_g \neg \psi \land \neg \xi \\
    \neg(\psi \ltlU \xi) &\equiv_g (\neg \psi) \ltlR (\neg \xi) &
    \neg(\psi \ltlR \xi) &\equiv_g (\neg \psi) \ltlU (\neg \xi) \\
    \neg(\psi \ltlW \xi) &\equiv_g (\neg \psi) \ltlM (\neg \xi) &
    \neg(\psi \ltlM \xi) &\equiv_g (\neg \psi) \ltlW (\neg \xi) \\
    \neg(\psi \ltlS \xi) &\equiv_g (\neg \psi) \ltlNS (\neg \xi) &
    \neg(\psi \ltlNS \xi) &\equiv_g (\neg \psi) \ltlS (\neg \xi) \\
    \neg(\psi \ltlWS \xi) &\equiv_g (\neg \psi) \ltlNWS (\neg \xi) &
    \neg(\psi \ltlNWS \xi) &\equiv_g (\neg \psi) \ltlWS (\neg \xi) \\
    \neg(\ltlY \psi) &\equiv_g \ltlWY(\neg \psi) &
    \neg(\ltlWY \psi) &\equiv_g \ltlY(\neg \psi) \\
    \neg(\ltlX \psi) &\equiv_g \ltlX (\neg \psi).
\end{aligned}
\end{equation*}
These also allow for the elimination of $\ltlF, \ltlG, \ltlO$, and $\ltlH$ without a size increase.

\section{Omitted Proofs and Definitions}
\subsection{Section \ref{section_previous_work}}
\subsubsection{Encoding the Past}
\wcproperty*
\begin{proof}
    We take $\ltlS$ as a demonstrative example.

    Assume $(w, 1) \models \varphi \ltlS \psi$. If $(w, 1) \models \psi$ then $w_1 \models \varphi \ltlS \psi$. Since $\varphi \ltlS \psi = (\varphi \ltlS \psi)_\mathcal{S}$, we have $w_! \models (\varphi \ltlS \psi)_\mathcal{S}$. The other option is that $w \models \psi$ and $w_1 \models \varphi$. Because $\wc(\varphi \ltlS \psi) = \psi$, we have $w \models \wc(\varphi \ltlS \psi)$. Finally, because $(\varphi \ltlS \psi)_\mathcal{W} = \varphi \ltlWS \psi \equiv_i \varphi \lor \psi$ by Proposition~\ref{proposition:wc_equivalent}, we have $w_1 \models (\varphi \ltlS \psi)_\mathcal{W}$, as required.

    For the other direction, assume that $w_1 \models (\varphi \ltlS \psi)_\mathcal{S}$, so that $w_1 \models \varphi \ltlS \psi$. By Proposition~\ref{proposition:wc_equivalent}, this means $w_1 \models \psi$, so that $(w, 1) \models \varphi \ltlS \psi$, and we are done. Assume instead that $w \models \wc(\varphi \ltlS \psi)$ and $w_1 \models (\varphi \ltlS \psi)_\mathcal{W}$. Then $w \models \psi$ by definition, and $w_1 \models \varphi \lor \psi$ by definition and by Proposition~\ref{proposition:wc_equivalent}. The latter, as $\psf(\varphi)=\{\varphi\}$, implies that $(w, 1) \models \varphi \lor \psi$. Hence, either $(w, 1) \models \psi$ or $w \models \psi$ and $(w, 1) \models \varphi$; in either case $(w, 1) \models \varphi \ltlS \psi$.
\end{proof}
Because the satisfaction of a strong past formula implies the satisfaction of its weak counterpart, we have the following useful property:
\begin{proposition}[\cite{azzopardi_et_al:LIPIcs.MFCS.2024.13}]
\label{proposition_rewrite_subset}
Let $\varphi$ be a formula, $w$ a word, and $C$ and $C'$ two sets of past formulae such that $C \subseteq C'$. Then, for arbitrary $t \in \mathbb{N}$,
\begin{equation*}
    (w, t) \models \varphi\rw{C} \Rightarrow (w, t) \models \varphi\rw{C'}.
\end{equation*}
\end{proposition}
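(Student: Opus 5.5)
We argue by structural induction on $\varphi$, proving the claim simultaneously for all $t$. We may also use the semantic fact that every weak past operator is implied by its strong counterpart: for all $w,t$, $(w,t)\models \ltlY\psi \Rightarrow (w,t)\models \ltlWY\psi$, $(w,t)\models \psi\ltlS\xi \Rightarrow (w,t)\models \psi\ltlWS\xi$, and $(w,t)\models \psi\ltlNWS\xi \Rightarrow (w,t)\models \psi\ltlNS\xi$. Each implication is immediate from the semantics in Figure~\ref{fig:ltl semantics} and the definitions in Figure~\ref{fig:additional operators}, because the weak version is the strong one disjoined with $\ltlH$ or $\neg\ltlY\top$.

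\textbf{Non-past cases.} Atomic formulae are unchanged by $\rw{\cdot}$, so the claim is trivial. The Boolean operators $\land,\lor$ do not occur as cases of Definition~\ref{definition_rewriting_by_set} but are rewritten componentwise (recursive descent). Since $\land,\lor$ and all future operators $\ltlX,\ltlU,\ltlW,\ltlM,\ltlR$ are monotone in their arguments, the induction hypothesis applied at every time point yields the claim. For these operators $\rw{C}$ only rewrites the arguments, because $(\cdot)_\mathcal{W}$ and $(\cdot)_\mathcal{S}$ are the identity on non-past roots. Monotonicity in the arguments holds because the semantic clauses contain only positive occurrences of the subformulae, the formulae being in negation normal form.

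\textbf{Past cases.} Take a past formula $\psi\op\xi$; unary operators are handled the same way. The past operators $\ltlY,\ltlWY,\ltlS,\ltlWS,\ltlNWS,\ltlNS$ are also monotone in their arguments. By the induction hypothesis at all time points, $\psi\rw{C}\op\xi\rw{C}$ implies $\psi\rw{C'}\op\xi\rw{C'}$ at every position. It remains to compare the outer strength, using the fact that $(\cdot)_\mathcal{W}$ and $(\cdot)_\mathcal{S}$ only toggle between the strong and weak versions of the same operator pair. We split into cases according to membership of $\psi\op\xi$ in $C$ and $C'$. Because $C\subseteq C'$, the combination ``in $C$ but not in $C'$'' cannot occur.
\begin{itemize}
\item If the formula is in both sets or in neither, both sides apply the same outer map and monotonicity suffices.
\item If it lies in $C'\setminus C$, the left side is the strengthening and the right side the weakening. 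We chain monotonicity with the strong-implies-weak fact.
\end{itemize}
Note that applying $(\cdot)_\mathcal{W}$ to an already weak operator, or $(\cdot)_\mathcal{S}$ to an already strong one, is the identity, so both situations are covered by these cases.

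\textbf{Main obstacle.} The only delicate point is the bookkeeping. The induction hypothesis must be stated for all $t$, since past and future operators shift the time index. We must also check that the strong-implies-weak fact holds for every pair, including $\ltlNWS/\ltlNS$, whose definitions are nested through $\ltlS$ and $\ltlWS$. The rest is routine.
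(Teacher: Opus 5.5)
Your proof is correct and is essentially the paper's argument. The paper only justifies this proposition by the remark that a strong past operator implies its weak counterpart, deferring the details to Azzopardi et al., and your structural induction over all time points is exactly the elaboration of that remark: monotonicity of the NNF operators in every case, plus strong-implies-weak in the case $\psi\op\xi\in C'\setminus C$. (Minor point: $\land$ and $\lor$ are covered by the binary clause of Definition~\ref{definition_rewriting_by_set}, where $(\cdot)_\mathcal{W}$ and $(\cdot)_\mathcal{S}$ act as the identity, which gives the same componentwise rewriting you use.)
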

We now recall the definition of entailed sets of past subformulae.
%\begin{definition}[Sets of entailed subformulae]
%\label{definition_set_entailed}
\entailedset*
The definition $\compseq{\varphi}{t} = \{ \psi \in \psf(\varphi) \mid \psi\rw{\entcon{\varphi}{0}}\rw{\entcon{\varphi}{1}}\dots\rw{\entcon{\varphi}{t}} = (\psi\rw{\entcon{\varphi}{0}}\rw{\entcon{\varphi}{1}}\dots\rw{\entcon{\varphi}{t}})_\mathcal{W} \}$ is motivated by the following property, which we will frequently use without explicit mention:
\begin{proposition}[\cite{azzopardi_et_al:LIPIcs.MFCS.2024.13}]
\label{proposition_compseq}
    Let $\varphi$ be a formula and $\psi \in \sff(\varphi)$, $w$ a word, and $t \in \mathbb{N}$. Then 
    \begin{equation*}
        \psi\rw{\compseqw{\varphi}{t}{w}} = \psi\rw{\entconw{\varphi}{0}{w}}\rw{\entconw{\varphi}{1}{w}}\dots\rw{\entconw{\varphi}{t}{w}}.
    \end{equation*}
\end{proposition}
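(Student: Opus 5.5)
The plan is to prove the equality by structural induction on $\psi \in \sff(\varphi)$, after isolating two elementary facts about the rewrite $\placeholder\rw{C}$. Write $D \coloneqq \compseqw{\varphi}{t}{w}$ and $R(\chi) \coloneqq \chi\rw{\entconw{\varphi}{0}{w}}\rw{\entconw{\varphi}{1}}\dots\rw{\entconw{\varphi}{t}}$ for any subformula $\chi$ of $\varphi$.

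The first fact is that $\placeholder\rw{C}$ preserves the shape of the syntax tree and only changes past operators between their weak and strong versions. The relevant pairs are $\{\ltlY,\ltlWY\}$, $\{\ltlS,\ltlWS\}$ and $\{\ltlNWS,\ltlNS\}$. From the definition of $(\placeholder)_\mathcal{W}$ and $(\placeholder)_\mathcal{S}$:
\begin{itemize}
\item applying $(\placeholder)_\mathcal{W}$ to either member of a pair yields the weak member, and applying $(\placeholder)_\mathcal{S}$ yields the strong member;
\item both maps are the identity on non-past roots.
\end{itemize}
Consequently, the operator at the root of $\chi\rw{C}$ is determined only by the pair of the original root and by whether $\chi \in C$. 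This is independent of the previous strength of the root.

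The second fact is that the rewrite commutes with taking immediate subformulae. If $\chi = \chi_1 \op \chi_2$ (or $\op\,\chi_1$), then by Definition~\ref{definition_rewriting_by_set} the immediate subformulae of $\chi\rw{C}$ are exactly $\chi_1\rw{C}$ and $\chi_2\rw{C}$. Iterating, the immediate subformulae of $R(\chi)$ are $R(\chi_1)$ and $R(\chi_2)$, and the root of $R(\chi)$ lies in the same pair as the root of $\chi$.

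With these facts the induction is short.
\begin{itemize}
\item If $\psi$ is atomic, both sides equal $\psi$.
\item If $\psi$ is rooted with a Boolean or future operator, both $\psi\rw{D}$ and $R(\psi)$ keep that root. Their immediate subformulae are $\psi_i\rw{D}$ and $R(\psi_i)$ respectively, which agree by the induction hypothesis; note that the $\psi_i$ are again subformulae of $\varphi$.
\item If $\psi$ is rooted with a past operator, the children agree by the same argument. It remains to compare the roots, which lie in the same weak/strong pair. The root of $\psi\rw{D}$ is weak iff $\psi \in D$. By the definition of $D$ (recall $\psi \in \psf(\varphi)$), $\psi \in D$ holds iff $R(\psi) = R(\psi)_\mathcal{W}$, that is, iff the root of $R(\psi)$ is weak. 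Hence the two roots coincide and $\psi\rw{D} = R(\psi)$.
\end{itemize}

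The main obstacle is only bookkeeping. First, the sets $\entconw{\varphi}{k}{w}$ for $k>0$ consist of subformulae of the rewritten formula $\varphi\rw{\entconw{\varphi}{k-1}{w}}$, not of $\varphi$ itself. The argument never inspects their contents, because it only uses the final strength of each root, so this causes no difficulty. Second, one must check the pair-closure property of $(\placeholder)_\mathcal{W}$ and $(\placeholder)_\mathcal{S}$ on each of the six past operators, which is a direct inspection of the definition.
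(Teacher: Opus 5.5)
Your proof is correct. The facts you isolate reduce the identity to comparing the final strength of each past node, and that strength is exactly what membership in $\compseqw{\varphi}{t}{w}$ records. Those facts are that each rewrite $\placeholder\rw{C}$ keeps the tree shape, commutes with taking immediate subformulae, and sets a past root to the weak or strong member of its pair according only to membership in $C$.

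The paper gives no proof of its own here, since it imports the proposition from Azzopardi et al., and your structural induction is the natural argument. You are also right that it never uses the contents of the sets $\entconw{\varphi}{k}{w}$. The identity therefore holds for an arbitrary finite sequence of sets and needs no saturation hypothesis.
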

%The above proposition allows us to establish the most important property of the set $\compseq{\varphi}{t}$: it captures precisely the information in the prefix $w_{0t}$ needed to evaluate $\varphi$ at time $t$ using only the suffix $w_t$:
%\begin{lemma}
%\label{lemma_entailed_sequence}
%Given a formula $\varphi$, word $w$, and $t \in \mathbb{N}$, we have $(w, t) \models \varphi$ iff $w_t \models \varphi\rw{\compseq{\varphi}{t}}$.
%\end{lemma}
\subsubsection{The After-Function}
In order to establish the correctness of the after-function (Corollary~\ref{lemma:af_correct}), we begin with the following lemma:
\entailedcorrectb*
\begin{proof}
    Assume that $w \models \varphi$ and let $C = \{ \psi \in \psf(\varphi) \mid w \models \wc(\psi) \} = \entcon{\varphi}{1}$. We proceed by induction on the structure of $\varphi$ for some demonstrative cases. Observe that the inductive hypothesis implies that $w_1 \models \afl(\wc(\xi))\rw{\entcon{\varphi}{1}}$ for each $\xi \in \entcon{\varphi}{1}$. Hence it suffices to show that $w_1 \models \afl(\varphi)\rw{\entcon{\varphi}{1}}$.
    \begin{itemize}
        \item $\varphi = \ltlX \psi$: We have $w \models \ltlX \psi$ implies $(w, 1) \models \psi$. From Lemma~\ref{lemma:entailed_correct} we get $w_1 \models \psi\rw{\entcon{\varphi}{1}} = \afl(\ltlX \psi)\rw{\entcon{\varphi}{1}}$.
        \item $\varphi = \psi \ltlU \xi$: We have $\afl(\psi \ltlU \xi)\rw{\entcon{\varphi}{1}} = \afl(\xi)\rw{\entcon{\varphi}{1}} \lor \afl(\psi)\rw{\entcon{\varphi}{1}} \land (\psi \ltlU \xi)\rw{\entcon{\varphi}{1}}$. From the semantics of \textsf{pLTL}, that $w \models \psi \ltlU \xi$ implies that either $w \models \xi$ or that $w \models \psi$ and $(w, 1) \models \psi \ltlU \xi$. If $w \models \xi$ then $w_1 \models \afl(\xi)\rw{\entcon{\varphi}{1}}$ by the inductive hypothesis, and we are done. If $w \models \psi$ and $(w, 1) \models \psi \ltlU \xi$, then the inductive hypothesis gives us that $w \models \afl(\psi)\rw{\entcon{\varphi}{1}}$, while Lemma~\ref{lemma:entailed_correct} gives us that $w_1 \models (\psi \ltlU \xi)\rw{\entcon{\varphi}{1}}$, as required.
        \item $\varphi = \psi \ltlS \xi$: That $w \models \psi \ltlS \xi$ is equivalent to $w \models \xi$. Hence, by the inductive hypothesis, $w_1 \models \xi\rw{\entcon{\varphi}{1}} = \afl(\psi \ltlS \xi)\rw{\entcon{\varphi}{1}}$.
    \end{itemize}
    Now assume that $w_1 \models \af(\varphi, C) = \big(\afl(\varphi) \land \bigwedge_{\xi \in C} \afl(\wc(\xi)) \big) \rw{C}$ for some $C \subseteq \psf(\varphi)$. We again proceed by induction on the structure of $\varphi$, and again we have that $w \models \wc(\xi)$ for all $\xi \in C$, which implies that $C \subseteq \entcon{\varphi}{1}$.
    \begin{itemize}
        \item $\varphi = \ltlX \psi$: $w_1 \models \af(\ltlX \psi)\rw{C}$ implies $w_1 \models \psi\rw{C}$. From Proposition~\ref{proposition_rewrite_subset} we get $w_1 \models \psi\rw{\entcon{\varphi}{1}}$. Lemma~\ref{lemma:entailed_correct} gives us that $(w, 1) \models \psi$, so that $w \models \ltlX \psi$.
        \item $\varphi = \psi \ltlU \xi$: We have that $w_1 \models \afl(\xi)\rw{C} \lor \afl(\psi)\rw{C} \land (\psi \ltlU \xi)\rw{C}$. If $w_1 \models \afl(\xi)\rw{C}$ then the inductive hypothesis gives us that $w \models \xi$, so that $w \models \psi \ltlU \xi$, as required. Otherwise, the inductive hypothesis gives us that $w \models \psi$, while Proposition~\ref{proposition_rewrite_subset} together with Lemma~\ref{lemma:entailed_correct} again gives us that $(w, 1) \models \psi \ltlU \xi$. From the definition of the semantics of $\ltlU$, we thus have $w \models \psi \ltlU \xi$.
        \item $\varphi = \psi \ltlS \xi$: If $w_1 \models \afl(\psi \ltlS \xi)\rw{C}$, then $w_1 \models \afc(\xi)\rw{C}$, by definition. By the inductive hypothesis, $w \models \xi$, so that $w \models \psi \ltlS \xi$.
    \end{itemize}
\end{proof}
\afcorrect*
\begin{proof}
    This is an immediate corollary of Lemma~\ref{lemma:entailed_correctb}, because it gives the first implication in $w \models \varphi \Rightarrow w_1 \models \af(\varphi, \entcon{\varphi}{1}) \Rightarrow w_1 \models \af(\varphi)$. In the other direction, we have that $w_1 \models \af(\varphi) \Rightarrow w_1 \models \af(\varphi, C)$ for some $C \subseteq \psf(\varphi)$, so that $w \models \varphi$ by Lemma~\ref{lemma:entailed_correctb}.
\end{proof}
For the purpose of simplifying some proofs, we extend the after-function specialized to a given set $C$ to finite words and sequences of sets $\vec{C}$.
\begin{definition}[The extended specialized after-function]
Given a (possibly empty) finite word $w$ of length $n$ and sequence of sets of past formulae $\vec{C}$ of length $n + 1$, we define
\begin{equation*}
\begin{aligned}
    \af[\epsilon](\varphi, \vec{C}_{01}) &\coloneqq \varphi &&(n = 0) \\
    \af[w_{0n}](\varphi, \vec{C}_{0(n+1)}) &\coloneqq \af[w_{(n-1)n}]\left(\varphi_{n - 1},\vec{C}_{n(n + 1)} \right) &&(n > 0),
\end{aligned}
\end{equation*}
where $\varphi_{n} = \af[w_{0n}] \bigl(\varphi, \vec{C}_{0(n + 1)} \bigr)$. 
\end{definition}
We define the extension with respect to sequences of sets of past formulae of length $n + 1$ to fit the definition of the sequence of Definition~\ref{definition_set_entailed}: the sequence $\entseq{\varphi}{t}$ is of length $t + 1$. The following establishes the correctness of this extension of the specialized after-function:
\begin{lemma}
\label{lemma_entailment_correct}
    Let $\varphi$ and $\zeta$ be two formulae such that $\varphi$ is a subformula of $\zeta$, $w$ be a word, and $t \in \mathbb{N}$. Then $w \models \varphi$ iff $w_t \models \af[w_{0t}](\varphi, \entseq{\zeta}{t})$.
\end{lemma}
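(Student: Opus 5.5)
The proof is an induction on $t$ that peels off one letter at a time with Lemma~\ref{lemma:entailed_correctb}. Write $\varphi_0 \coloneqq \varphi$ and $\varphi_{k+1} \coloneqq \af[w_{k(k+1)}](\varphi_k, \entcon{\zeta}{k+1})$. Unfolding the definition of the extended specialized after-function gives $\af[w_{0t}](\varphi, \entseq{\zeta}{t}) = \varphi_t$. Here the $k$-th step uses the entry $\entcon{\zeta}{k}$ of the sequence (with $k \geq 1$), and the entry $\entcon{\zeta}{0}$ has no effect. It therefore suffices to prove, for every $t$, that $w \models \varphi$ iff $w_t \models \varphi_t$. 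The base case $t = 0$ is trivial since $\varphi_0 = \varphi$ and $w_0 = w$. For the inductive step it suffices to show $w_t \models \varphi_t$ iff $w_{t+1} \models \varphi_{t+1}$, and then chain this with the inductive hypothesis.

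For the step, apply Lemma~\ref{lemma:entailed_correctb} to the formula $\varphi_t$ and the word $w_t$, which starts with $\sigma_t$. The backward direction is immediate, because the second item of that lemma allows an arbitrary set $C$. Strictly speaking we need $C \subseteq \psf(\varphi_t)$, so we first argue that the rewrite $\placeholder\rw{C}$ of a formula $\chi$ depends only on $C \cap \psf(\chi)$. This holds by a direct induction on Definition~\ref{definition_rewriting_by_set}. Consequently $\af[\sigma_t](\varphi_t, \entcon{\zeta}{t+1}) = \af[\sigma_t](\varphi_t, \entcon{\zeta}{t+1} \cap \psf(\varphi_t))$. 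One point needs care here: the conjuncts $\afl(\wc(\xi))$ for $\xi \in C$ only appear for $\xi$ in the intersection. To handle this, I will restrict to the intersection from the outset and note that the extra members of $\entcon{\zeta}{t+1}$ outside $\psf(\varphi_t)$ contribute nothing. If the paper intends those conjuncts to be present, they are satisfied by the definition of $\entcon{\zeta}{t+1}$ anyway, since $w_t \models \wc(\xi)$.

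The forward direction is where the choice of $\entcon{\zeta}{t+1}$ matters. The first item of Lemma~\ref{lemma:entailed_correctb} requires the exact set $\{\psi \in \psf(\varphi_t) \mid w_t \models \wc(\psi)\}$. By Definition~\ref{definition_set_entailed}, $\entcon{\zeta}{t+1} = \{\psi \in \psf(\zeta\rw{\entcon{\zeta}{0}}\cdots\rw{\entcon{\zeta}{t}}) \mid w_t \models \wc(\psi)\}$. These two sets agree on $\psf(\varphi_t)$ once we know the invariant
\[
\psf(\varphi_t) \subseteq \psf(\zeta\rw{\entcon{\zeta}{0}}\cdots\rw{\entcon{\zeta}{t}}),
\]
that is, that every temporal subformula of $\varphi_t$ is a subformula of $\zeta$ rewritten along the sequence. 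I will prove this invariant by a separate induction on $t$, strengthened to $\sff(\varphi_t) \subseteq \sff(\zeta\rw{\entcon{\zeta}{0}}\cdots\rw{\entcon{\zeta}{t}}) \cup \{\top,\bot\}$.
\begin{itemize}
\item \emph{Base case.} It holds because $\varphi$ is a subformula of $\zeta$ and $\entcon{\zeta}{0}$ acts as the identity.
\item \emph{Step.} $\afl$ only produces $\top$, $\bot$, and temporal subformulae already present in its argument. The rewrite $\placeholder\rw{C}$ commutes with taking subformulae, in the sense that $\sff(\chi\rw{C}) = \{\psi\rw{C} \mid \psi \in \sff(\chi)\}$. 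Hence the property is preserved.
\end{itemize}
The same invariant also shows that $\wc$ of a rewritten past subformula is built from rewritten subformulae, so the weakening conditions stay inside the tracked formula space.

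I expect the main obstacle to be this bookkeeping invariant rather than the semantic content. The delicate point is that the rewrite $\rw{C}$ identifies a past formula by its current, already weakened or strengthened, syntactic form. So membership in $\entcon{\zeta}{t+1}$ must be tested against the correctly rewritten versions of the subformulae of $\zeta$, and the subformulae of $\varphi_t$ must have exactly those forms. With the invariant in hand, both directions of the step follow directly from Lemma~\ref{lemma:entailed_correctb}, which completes the induction on $t$.
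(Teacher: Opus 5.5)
Your proposal is correct and follows the paper's route: induction on $t$, peeling off one letter at a time with Lemma~\ref{lemma:entailed_correctb}. Your invariant $\psf(\varphi_t) \subseteq \psf(\zeta\rw{\entcon{\zeta}{0}}\cdots\rw{\entcon{\zeta}{t}})$ is the bookkeeping the paper's ``simple inductive argument'' leaves implicit: it makes $\entcon{\zeta}{t+1} \cap \psf(\varphi_t)$ exactly the set that the first item of that lemma requires. One claim needs an extra step. For the extra conjuncts $\afl[\sigma_t](\wc(\xi))\rw{\entcon{\zeta}{t+1}}$ with $\xi \in \entcon{\zeta}{t+1} \setminus \psf(\varphi_t)$, $w_t \models \wc(\xi)$ alone does not give $w_{t+1}$ satisfying them; you must also apply the first item of Lemma~\ref{lemma:entailed_correctb} to $\wc(\xi)$ and $w_t$, together with the same invariant.
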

\begin{proof}
This follows from Lemma~\ref{lemma:entailed_correctb} using a simple inductive argument.
\end{proof}

\subsection{Section \ref{section_pltl_to_ldba}}
\subsubsection{Rewrite Conditions and the WC Automaton}
The following lemma establishes important properties of the rewrite condition function and the WC automaton.
\begin{lemma}[\cite{azzopardi_et_al:LIPIcs.MFCS.2024.13}]
\label{lemma_rc_correct}
    Let $w = \sigma_0, \sigma_1, \dots$ be a word and $i_0, i_1, \dots$ a sequence of indices such that $C_{i_t} = \compseq{\varphi}{t}$ for all $t \in \mathbb{N}$, where $i_0 = 1$. Let $\zeta = \langle \top,\bot,\ldots, \bot\rangle$ and $\delta_\mathcal{H}(\zeta, w_{0t}) = \langle \xi_1, \xi_2, \dots, \xi_{2^{\vert \psf(\varphi) \vert}} \rangle$. Then
    \begin{equation*}
        1) \ w_t \models \xi_{i_t},
    \end{equation*}
    for all $t \in \mathbb{N}$. Moreover, if $M \subseteq \mu(\varphi)$, then
    \begin{equation*}
        2) \ \mbox{If } M \subseteq \GF{\varphi}{w}{0} \mbox{ then } w_t \models \xi_i[M\rw{C_i}]_\nu \Rightarrow C_i \subseteq \compseq{\varphi}{t},
    \end{equation*}
    for all $t \in \mathbb{N}$ and $i \in [{2^{\vert \psf(\varphi) \vert}}]$. Finally, if $N \subseteq \nu(\varphi)$ and $w$ is stable with respect to $\varphi$ at time $s$,  then
    \begin{equation*}
        3) \ \mbox {If } N \subseteq \FG{\varphi}{w}{0} \mbox{ then } w_t \models \xi_i[N\rw{C_i}]_\mu \Rightarrow C_i \subseteq \compseq{\varphi}{t},
    \end{equation*}
    for all $t \geq s$ and $i \in [{2^{\vert \psf(\varphi) \vert}}]$.
\end{lemma}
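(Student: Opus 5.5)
We prove the three claims of Lemma~\ref{lemma_rc_correct} separately, each by induction on $t$, following the structure of $\delta_\mathcal{H}$.

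\emph{Claim 1.} For $t=0$ we have $i_0=1$ and $\xi_1=\top$, so $w_0\models\xi_{i_0}$. For the step, assume $w_t\models\xi_{i_t}$ and unfold $\xi'_{i_{t+1}}$ from the definition of $\delta_\mathcal{H}$. It suffices to exhibit one disjunct that holds on $w_{t+1}$, and we choose $j=i_t$.

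First we check that $C_{i_t}\preceq C_{i_{t+1}}$, so that $i_t\in J_{i_{t+1}}$. Suppose $\xi\rw{C_{i_t}}=\xi'\rw{C_{i_t}}$ for $\xi,\xi'\in\psf(\varphi)$. By Proposition~\ref{proposition_compseq}, the rewrite by $\compseq{\varphi}{t+1}$ is the iterated rewrite, so it factors through the configuration at time $t$. Hence $\xi\rw{C_{i_{t+1}}}=\xi'\rw{C_{i_{t+1}}}$.

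Next, the set $C_{i_{t+1}}\rw{C_{i_t}}$ is exactly the entailed set $\entcon{\varphi}{t+1}$, that is, the past subformulae of $\varphi\rw{\compseq{\varphi}{t}}$ whose weakening condition holds on $w_t$. We then apply the first bullet of Lemma~\ref{lemma:entailed_correctb} with this set to the suffix $w_t$:
\begin{itemize}
\item applied to $\xi_{i_t}$, using the induction hypothesis, it gives $w_{t+1}\models\af[\sigma_t](\xi_{i_t},C_{i_{t+1}}\rw{C_{i_t}})$;
\item applied to each $\wc(\xi\rw{C_{i_t}})$ with $\xi\in C_{i_{t+1}}$, which holds on $w_t$ by the definition of entailment, it gives the remaining conjuncts.
\end{itemize}
The main technical care is to argue that rewriting the set $C_{i_{t+1}}$ by $C_{i_t}$ yields precisely $\entcon{\varphi}{t+1}$. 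We plan to do this by comparing Definition~\ref{definition_set_entailed} with the definition of $\compseq{\varphi}{t+1}$.

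\emph{Claims 2 and 3.} Again we induct on $t$, now proving the converse direction: satisfaction of $\xi_i$, after the $[M\rw{C_i}]_\nu$ rewrite, forces $C_i\subseteq\compseq{\varphi}{t}$.

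For $t=0$, every $\xi_i$ with $i\neq1$ is $\bot$, and $C_1=\compseq{\varphi}{0}$, so the claim is immediate.

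For the step, suppose $w_{t+1}\models\xi'_i[M\rw{C_i}]_\nu$. Some disjunct $j\in J_i$ holds. The rewrite $[\cdot]_\nu$ commutes with $\af$ up to the natural correspondence of $\ltlU$/$\ltlW$, and $M\subseteq\GF{\varphi}{w}{0}$. From this we plan to extract two facts:
\begin{itemize}
\item by Lemma~\ref{lemma_MN_properties}(2) and the second bullet of Lemma~\ref{lemma:entailed_correctb}, we obtain $w_t\models\xi_j[M\rw{C_j}]_\nu$, so the induction hypothesis gives $C_j\subseteq\compseq{\varphi}{t}$;
\item each conjunct $\wc(\xi\rw{C_j})$ holds on $w_t$, again via Lemma~\ref{lemma_MN_properties}(2) to strip the $[M]_\nu$ rewrite.
\end{itemize}
Monotonicity (Proposition~\ref{proposition_rewrite_subset}) then lets us pass from $C_j$ to the true configuration $\compseq{\varphi}{t}$, and we conclude that every $\xi\in C_i$ is entailed at $t+1$, i.e.\ $C_i\subseteq\compseq{\varphi}{t+1}$. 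We expect this to be the main obstacle: the weakening conditions are evaluated under the guessed configuration $C_j$ rather than the true one, and saturation together with monotonicity is what makes that guess sound.

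Claim 3 is symmetric, using Lemma~\ref{lemma_MN_properties}(4) with $N\subseteq\FG{\varphi}{w}{0}$. Since $N\subseteq\FG{\varphi}{w}{0}$ alone does not guarantee $N\subseteq\G{\varphi}{w}{t}$, we need stability at $s$. This is why Claim 3 is stated only for $t\geq s$, and the base case of its induction is taken at $s$, arguing as in Claim 1 from the true configuration there.
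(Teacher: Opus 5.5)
Your argument for Claim~1 is sound. Choosing $j=i_t$, deriving $C_{i_t}\preceq C_{i_{t+1}}$ from the iterated-rewrite description of $\compseq{\varphi}{t+1}$, and applying the first bullet of Lemma~\ref{lemma:entailed_correctb} with the entailed set of $\varphi\rw{\compseq{\varphi}{t}}$ (a routine extension) all work. Claims~2 and~3 are where it breaks. The step ``we obtain $w_t\models\xi_j[M\rw{C_j}]_\nu$'' does not follow, for two reasons. First, $[\cdot]_\nu$ does not commute with the after-function in the direction you need: for $\theta=a\ltlU b\notin M$ and $b\in\sigma$ one has $\afl(\theta)[M]_\nu\sim\top$ while $\theta[M]_\nu=\bot$. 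Second, Lemma~\ref{lemma_MN_properties}.2 strips $[M\rw{C_i}]_\nu$ only if the formulae of $M\rw{C_i}$ hold infinitely often on $w_{t+1}$. You know this for the true configuration $\compseq{\varphi}{t+1}$, and monotonicity transfers it to $C_i$ only when $C_i\supseteq\compseq{\varphi}{t+1}$, the reverse of the inclusion you are proving. Concretely, take $\varphi=\ltlY\ltlY(a\ltlU b)$, $M=\emptyset$, $\sigma_0=\{a\}$, $\sigma_1=\{b\}$ and $C_i=\psf(\varphi)$. Then $\xi_i$ at time $2$ is $\sim\top$, and its only non-$\bot$ disjunct comes from $C_j=\{\ltlY(a\ltlU b)\}$. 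That component is $\sim a\ltlU b$ at time $1$, so $\xi_j[M\rw{C_j}]_\nu\sim\bot$. Your induction hypothesis is thus not inductive even where the conclusion holds. Without $[\cdot]_\nu$, the statement $w_t\models\xi_i\Rightarrow C_i\subseteq\compseq{\varphi}{t}$ does go through by your bullet-2-plus-monotonicity step. That does not help, though, since passing from $\xi_i[M\rw{C_i}]_\nu$ to $\xi_i$ needs the same unavailable hypothesis.

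Worse, appealing to saturation and monotonicity cannot close this. With the definitions as given, Claim~2 itself fails once $C_i$ strengthens a weak past subformula onto a distinct strong one. Take $\varphi=\ltlY(a\ltlU(b\ltlS c))\land(a\ltlU(b\ltlWS c))$, $w=\{a,b\}^\omega$, $M=\{a\ltlU(b\ltlWS c)\}\subseteq\GF{\varphi}{w}{0}$ and $C_i=\{\ltlY(a\ltlU(b\ltlS c))\}$; here $C_1=\{b\ltlWS c\}\preceq C_i$ holds vacuously. At $t=1$ one gets $\xi_i\sim a\ltlU(b\ltlS c)$. Since $b\ltlWS c\notin C_i$ is strengthened, $M\rw{C_i}=\{a\ltlU(b\ltlS c)\}$. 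Hence $\xi_i[M\rw{C_i}]_\nu=a\ltlW(b\ltlS c)$ holds on $w_1$. Yet $w\not\models a\ltlU(b\ltlS c)$, so $C_i\not\subseteq\compseq{\varphi}{1}$. The analogous formula with $\ltlW$ in place of $\ltlU$ refutes Claim~3 at $t=1$, using $N=\{a\ltlW(b\ltlWS c)\}\subseteq\FG{\varphi}{w}{0}$ and $w=\{a,b\}(\{b\}\{a,b\})^\omega$, which is stable at every index. So the statement needs an extra restriction or modified definitions before any proof can succeed. Presumably this means preventing $C_i$ from merging distinct past subformulae, or making $[\cdot]_\nu$ and $[\cdot]_\mu$ test membership up to the original subformula. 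Independently, your base case for Claim~3 is not an argument. Claim~1 only says that the true index's component holds at $s$. You instead need that every $i$ with $w_s\models\xi_i[N\rw{C_i}]_\mu$ satisfies $C_i\subseteq\compseq{\varphi}{s}$. Moreover, $\xi_i$ at time $s$ already encodes all guesses made before $s$, where the formulae of $N$ need not hold. So $t\geq s$ can only serve to discharge $[N\rw{\cdot}]_\mu$ at the single time $t$, not as the start of an induction.
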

Property 1) shows that the weakening conditions that must hold in order to apply a rewrite by the entailed set of past subformulae at a given point in time, as tracked by the WC automaton, are satisfied by the word's suffix at that time. Property 2) shows that, under certain assumptions, the satisfaction of the weakening conditions (rewritten by $M$ or $N$) that must hold in order to justify a rewrite by the set $C_i$ implies that $C_i$ doesn't weaken too much.

We now show how the initialization formulae relate to the formulae described in Lemma~\ref{lemma_MN_=>} and Lemma~\ref{lemma_MN_<=}.
\begin{lemma}
\label{lemma_mu_reset}
    Let $\psi$ be a subformula of $\varphi$ and $w$ a word that is stable with respect to $\varphi$ at index $s$. Then
    \begin{equation*}
    1) \text{ If } \FG{\varphi}{w}{0} \subseteq N \text{ and } w_t \models \ltlF(\psi\rw{\compseq{\varphi}{t}}[N\rw{\compseq{\varphi}{t}}]_\mu) \text{ then } w_t \models \iota(\ltlF \psi, \delta_\mathcal{H}(S_0, w_{0t}), N),
    \end{equation*}
    for all $t \in \mathbb{N}$. We also have that
    \begin{equation*}
    2) \text{ If } N \subseteq \FG{\varphi}{w}{0} \text{ and } w_t \models \iota(\ltlF \psi, \delta_\mathcal{H}(S_0, w_{0t}), N) \text{ then } w_t \models \ltlF(\psi\rw{\compseq{\varphi}{t}}[N\rw{\compseq{\varphi}{t}}]_\mu),
    \end{equation*}
    for all $t \geq s$.
\end{lemma}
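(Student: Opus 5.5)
We unfold the definition of the initialization formula and treat the two implications separately. Write $\delta_\mathcal{H}(S_0, w_{0t}) = \langle \xi_1, \dots, \xi_{2^{\vert \psf(\varphi) \vert}} \rangle$. Since the rewrites $\placeholder\rw{C}$ and $\placeholder[N]_\mu$ commute with $\ltlF$ (they descend through the unary future operator $\ltlF = \top \ltlU \placeholder$ unchanged, as $\top\ltlU\placeholder$ is a $\mu$-formula not in $N$), we have
\begin{equation*}
\iota(\ltlF \psi, \delta_\mathcal{H}(S_0, w_{0t}), N) = \bigvee_{i} \big( \ltlF(\psi\rw{C_i}[N\rw{C_i}]_\mu) \land \xi_i[N\rw{C_i}]_\mu \big).
\end{equation*}
Fix the index sequence $i_0, i_1, \dots$ with $C_{i_t} = \compseq{\varphi}{t}$ and $i_0 = 1$, as in Lemma~\ref{lemma_rc_correct}.

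For 1), we exhibit the disjunct $i = i_t$. Its first conjunct is exactly the hypothesis $w_t \models \ltlF(\psi\rw{\compseq{\varphi}{t}}[N\rw{\compseq{\varphi}{t}}]_\mu)$. For the second conjunct, Lemma~\ref{lemma_rc_correct}(1) gives $w_t \models \xi_{i_t}$. It remains to lift this to $w_t \models \xi_{i_t}[N\rw{C_{i_t}}]_\mu$. The idea is to apply Lemma~\ref{lemma_MN_properties}(3) to the suffix $w_t$ and the formula $\xi_{i_t} \in \mathbb{B}(\varphi\rw{C_{i_t}})$, with $N\rw{C_{i_t}}$ playing the role of $N$. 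This requires that every $\nu$-subformula that eventually holds persistently on $w_t$ lies in $N\rw{C_{i_t}}$. That inclusion follows from $\FG{\varphi}{w}{0} \subseteq N$, because $\ltlF\ltlG$ is a limit property and the past configuration only weakens or strengthens past subformulae, which by Lemma~\ref{lemma:entailed_correct} is semantically transparent at positions $\geq t$. This bookkeeping, relating $\FG{\xi_{i_t}}{w_t}{0}$ to $\FG{\varphi}{w}{0}\rw{C_{i_t}}$, is the step I expect to need the most care. It should mirror the corresponding argument used for Lemma~\ref{lemma_MN_=>}, and I would state it as a small auxiliary claim.

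For 2), assume $N \subseteq \FG{\varphi}{w}{0}$, $t \geq s$, and that $w_t$ satisfies some disjunct $i$. Then $w_t \models \xi_i[N\rw{C_i}]_\mu$, so Lemma~\ref{lemma_rc_correct}(3), which is applicable because $t \geq s$, yields $C_i \subseteq \compseq{\varphi}{t}$. From the first conjunct, $w_t \models \ltlF(\psi\rw{C_i}[N\rw{C_i}]_\mu)$. Here the formula rewritten by $N\rw{C_i}$ and the one rewritten by $N\rw{\compseq{\varphi}{t}}$ differ only in past operators: the latter are weak wherever the former are, and additionally weak at the past subformulae in $\compseq{\varphi}{t}\setminus C_i$. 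The operator $\placeholder[\cdot]_\mu$ acts only on future operators and matches them up to this rewrite, so it replaces the same $\ltlW/\ltlR$ positions by $\top$ in both formulae. Hence Proposition~\ref{proposition_rewrite_subset} applies through the monotone context $\ltlF$ and the $[\cdot]_\mu$ rewrite, giving $w_t \models \ltlF(\psi\rw{\compseq{\varphi}{t}}[N\rw{\compseq{\varphi}{t}}]_\mu)$. Applying it directly to the rewritten formula requires a short structural induction showing that rewriting by a larger $C$ commutes with $[\cdot]_\mu$ and preserves monotonicity; this is routine given that $\placeholder\rw{C}$ is defined by recursive descent. Together with the saturation convention ($C_i \preceq$ relations), this guarantees that $N\rw{C_i}$ and $N\rw{\compseq{\varphi}{t}}$ pick out corresponding subformulae, which completes 2).
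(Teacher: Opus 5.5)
Your route is the paper's, step for step. In 1) you pick the disjunct with $C_i = \compseq{\varphi}{t}$ and combine Lemma~\ref{lemma_rc_correct}(1) with Lemma~\ref{lemma_MN_properties}(3). In 2) you get $C_i \subseteq \compseq{\varphi}{t}$ from Lemma~\ref{lemma_rc_correct}(3) and conclude by monotonicity of $\rw{\cdot}$ (Proposition~\ref{proposition_rewrite_subset}). Part 1) is fine. The inclusion you flag holds because, for $C = \compseq{\varphi}{t}$, $(w_t, r) \models \chi\rw{C}$ iff $(w, t+r) \models \chi$, and every $\nu$-subformula of $\varphi\rw{C}$ has the form $\chi\rw{C}$ with $\chi \in \nu(\varphi)$. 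Extra elements of $N\rw{C}$ only help in Lemma~\ref{lemma_MN_properties}(3).

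The gap is the step in 2) that you yourself singled out, and your proposed justification does not work. Saturation is not available: $C_i$ ranges over \emph{all} subsets in the disjunction defining $\iota$, and $C_i \subseteq \compseq{\varphi}{t}$ gives no relation $C_i \preceq \compseq{\varphi}{t}$. Since $[N\rw{C}]_\mu$ tests syntactic membership of the \emph{rewritten} subformula, two distinct $\nu$-subformulae can collide under $\rw{C_i}$ and separate under the larger set. A $\top$ on the left then becomes a genuine $\ltlU$-formula on the right.

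Concretely, let $\theta = (p \ltlS q) \ltlW r$, $\chi = (p \ltlWS q) \ltlW r$, $\varphi = \theta \land \chi$, $\psi = \theta$, $N = \{\chi\}$, $w = \{p\}^\omega$, $t = 1$.
\begin{itemize}
\item Since $\chi$ holds everywhere and $\theta$ nowhere, $w$ is stable at $s = 0$ and $N \subseteq \FG{\varphi}{w}{0}$.
\item As $\rw{C_1}$ is injective on $\psf(\varphi)$, we have $C_1 \preceq \varnothing$, so the WC component for $C_i = \varnothing$ is $\top$ after one letter.
\item Moreover $\theta\rw{\varnothing} = \chi\rw{\varnothing} = (p \ltlS q) \ltlW r \in N\rw{\varnothing}$, so that disjunct of $\iota(\ltlF \theta, \delta_\mathcal{H}(S_0, w_{01}), N)$ is $\ltlF \top \land \top$, and the hypothesis of 2) holds.
\item But $\compseq{\varphi}{1} = \{p \ltlWS q\}$ separates the two formulae, giving $\theta\rw{\compseq{\varphi}{1}}[N\rw{\compseq{\varphi}{1}}]_\mu = (p \ltlS q) \ltlU r$.
\item Finally $w_1 \not\models \ltlF((p \ltlS q) \ltlU r)$, since $r$ never holds.
\end{itemize}
The same happens for $\varphi = \ltlF\theta \land \chi$ and $\psi = \ltlF\theta \in \mu(\varphi)$.

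So 2) fails as stated. The paper's proof, which simply cites Proposition~\ref{proposition_rewrite_subset} at this point, has exactly the same gap. Any repair must rule out such collisions, and cannot come from monotonicity of $\rw{\cdot}$ alone. One option is to restrict the indices admitted in $\iota$. Another is to let $[\cdot]_\mu$ decide membership in $N$ via the original subformula rather than its rewritten form.
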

\begin{proof}
    Let $\cwc(S_0, w_{0t}) = \langle \xi_1, \xi _2, \dots, \xi_{2^{\vert \psf(\varphi) \vert}} \rangle$.

    1) By assumption $w_t \models \ltlF(\psi\rw{\compseq{\varphi}{t}}[N\rw{\compseq{\varphi}{t}}]_\mu)$. By Lemma~\ref{lemma_rc_correct}.1 we have that $w_t \models \xi_i$, where $i$ is such that $C_i = \compseq{\varphi}{t}$.
    That $\FG{\varphi}{w}{0} \subseteq N$ implies that $\FG{\varphi\rw{\compseq{\varphi}{t}}}{w_t}{0} \subseteq N\rw{\compseq{\varphi}{t}}$, which by Lemma~\ref{lemma_MN_properties}.3 implies that $w_t \models \ltlF (\psi\rw{\compseq{\varphi}{t}}[N\rw{\compseq{\varphi}{t}}]_\mu) \land \xi_i[N\rw{\compseq{\varphi}{t}}]_\mu$.

    2) By assumption there exists an $i$ such that $w_t \models \ltlF(\psi\rw{C_i}[N\rw{C_i}]_\mu) \land \xi_i[N\rw{C_i}]_\mu$. From the satisfaction of $\xi_i[N\rw{C_i}]_\mu$ and the stability of $w$ at $t$ we get from Lemma~\ref{lemma_rc_correct}.3 that $C_i \subseteq \compseq{\varphi}{t}$. By Proposition~\ref{proposition_rewrite_subset} this implies that $w_t \models \ltlF(\psi\rw{\compseq{\varphi}{t}})[N\rw{\compseq{\varphi}{t}}]_\mu)$.
\end{proof}
A similar property holds for $\iota_{\ltlG \nu, M}$. Its proof is almost identical to the proof of Lemma~\ref{lemma_mu_reset} and therefore omitted:
\begin{lemma}
\label{lemma_nu_reset}
    Let $\psi$ be a subformula of $\varphi$, $w$ a word, and $t \in \mathbb{N}$. Then
    \begin{enumerate}
    \item If $\GF{\varphi}{w}{0} \subseteq M$ and $w_t \models \ltlG(\psi\rw{\compseq{\varphi}{t}})[M\rw{\compseq{\varphi}{t}}]_\nu)$ then $w_t \models \iota(\ltlG \psi, \delta_\mathcal{H}(S_0, w_{0t}), M)$.
    \item If $M \subseteq \GF{\varphi}{w}{0}$ and $w_t \models \iota(\ltlG \psi, \delta_\mathcal{H}(S_0, w_{0t}), M)$ then $w_t \models \ltlG(\psi\rw{\compseq{\varphi}{t}})[M\rw{\compseq{\varphi}{t}}]_\nu)$.
    \end{enumerate}
\end{lemma}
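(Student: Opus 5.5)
The plan mirrors the proof of Lemma~\ref{lemma_mu_reset}, with the roles of $\mu$ and $\nu$ exchanged, $\ltlF$ replaced by $\ltlG$, and $N$ replaced by $M$. Fix $t$ and write $\cwc(S_0, w_{0t}) = \langle \xi_1, \dots, \xi_{2^{\vert \psf(\varphi) \vert}} \rangle$. Let $k$ be the index with $C_k = \compseq{\varphi}{t}$; it exists because the $C_i$ enumerate $\mathcal{P}(\psf(\varphi))$. Since $\ltlG$ is a future operator not in $\mu(\varphi)$, we have $(\ltlG \psi)\rw{C_i}[M\rw{C_i}]_\nu = \ltlG(\psi\rw{C_i}[M\rw{C_i}]_\nu)$. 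Hence the disjunct of $\iota(\ltlG \psi, \cwc(S_0,w_{0t}), M)$ at index $i$ is $\ltlG(\psi\rw{C_i}[M\rw{C_i}]_\nu) \land \xi_i[M\rw{C_i}]_\nu$.

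For item 1, I would show that the disjunct with index $k$ holds at $w_t$. Its first conjunct is exactly the hypothesis. For the second conjunct, Lemma~\ref{lemma_rc_correct}.1 gives $w_t \models \xi_k$. The assumption $\GF{\varphi}{w}{0} \subseteq M$ transfers to the suffix, in the form $\GF{\varphi\rw{C_k}}{w_t}{0} \subseteq M\rw{C_k}$. This holds because $\ltlG\ltlF$ is suffix-invariant, and Lemma~\ref{lemma:entailed_correct} (iterated via Proposition~\ref{proposition_compseq}) identifies $(w,t') \models \chi$ with $w_t$ satisfying $\chi\rw{C_k}$ at position $t'-t$. Since $\xi_k$ is a Boolean combination of subformulae of $\varphi\rw{C_k}$, Lemma~\ref{lemma_MN_properties}.1 applied to $w_t$ yields $w_t \models \xi_k[M\rw{C_k}]_\nu$. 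Together these give the $k$-th disjunct. Unlike Lemma~\ref{lemma_mu_reset}, no stability assumption is needed here, because Lemma~\ref{lemma_MN_properties}.1 and Lemma~\ref{lemma_rc_correct}.2 carry no restriction on $t$.

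For item 2, I would pick some $i$ whose disjunct holds at $w_t$. From $w_t \models \xi_i[M\rw{C_i}]_\nu$ and $M \subseteq \GF{\varphi}{w}{0}$, Lemma~\ref{lemma_rc_correct}.2 gives $C_i \subseteq \compseq{\varphi}{t}$. We also need to remove the $[\cdot]_\nu$ rewrite. Apply Lemma~\ref{lemma_MN_properties}.2 pointwise to $\psi\rw{C_i}$ on $w_t$ at each position, using $M\rw{C_i} \subseteq \GF{\varphi\rw{C_i}}{w_t}{0}$, which is transferred as in item 1. This gives $w_t \models \ltlG(\psi\rw{C_i})$. Proposition~\ref{proposition_rewrite_subset} then upgrades this to $w_t \models \ltlG(\psi\rw{C_k})$. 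Finally, apply Lemma~\ref{lemma_MN_properties}.1, now with the hypothesis $\F{}{}{} \subseteq M$, to return to $\ltlG(\psi\rw{C_k}[M\rw{C_k}]_\nu)$.

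This last step is the main obstacle. It requires $\F{\varphi}{w}{0} \subseteq M$, which is not available from the hypothesis $M \subseteq \GF{\varphi}{w}{0}$. Instead I would argue monotonically. The $[\cdot]_\nu$ rewrite commutes with the weakening, in the sense that $\psi\rw{C_i}[M\rw{C_i}]_\nu$ weakened by $C_k$ equals $\psi\rw{C_k}[M\rw{C_k}]_\nu$ up to the saturation relation $\preceq$. Then Proposition~\ref{proposition_rewrite_subset} applies directly to the $\nu$-rewritten formula, using $C_i \subseteq C_k$, and gives $w_t \models \ltlG(\psi\rw{C_k}[M\rw{C_k}]_\nu)$. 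Checking that weakening commutes with $[\cdot]_\nu$ is a routine induction, since the rewrite only touches $\ltlU$ and $\ltlM$ nodes and weakening only touches past nodes.
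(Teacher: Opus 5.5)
Your overall plan, mirroring the proof of Lemma~\ref{lemma_mu_reset}, is what the paper intends. Your item~2, once you drop the detour through Lemma~\ref{lemma_MN_properties}.2, arrives at the paper's argument: Lemma~\ref{lemma_rc_correct}.2 gives $C_i \subseteq \compseq{\varphi}{t}$, and Proposition~\ref{proposition_rewrite_subset} is then applied directly to the $\nu$-rewritten formula.

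The genuine gap is in item~1. There you invoke Lemma~\ref{lemma_MN_properties}.1 to pass from $w_t \models \xi_k$ to $w_t \models \xi_k[M\rw{C_k}]_\nu$. That lemma requires $\F{\varphi\rw{C_k}}{w_t}{0} \subseteq M\rw{C_k}$, which essentially means $\F{\varphi}{w}{t} \subseteq M$. You only have $\GF{\varphi}{w}{0} \subseteq M$. Since $\GF{\varphi}{w}{t} \subseteq \F{\varphi}{w}{t}$ and not conversely, the required hypothesis does not follow; you state this very requirement yourself in item~2.

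This is exactly where the mirroring breaks. In Lemma~\ref{lemma_mu_reset}.1 the hypothesis $\FG{\varphi}{w}{0} \subseteq N$ is precisely what Lemma~\ref{lemma_MN_properties}.3 needs, and $\FG$ is suffix-invariant; $\F$ is not. So your remark that no stability is needed is backwards. In the $\nu$ case it is item~1 that needs stability. The paper's proof of Lemma~\ref{lemma_B1} shows this: it uses stability at $t$ to obtain $\F{\varphi}{w}{t} \subseteq M$ from $\GF{\varphi}{w}{0} = \GF{\varphi}{w}{t} \subseteq M$.

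The gap cannot be closed, because item~1 as stated is false. Take the following instance:
\begin{itemize}
\item $\varphi = \ltlG \ltlWY(\ltlF b \lor c)$ and $\psi = \ltlWY(\ltlF b \lor c)$;
\item $w = \varnothing\,\{b,c\}\,\{c\}^\omega$ and $t = 1$;
\item $M = \varnothing = \GF{\varphi}{w}{0}$, since $b$ holds only once.
\end{itemize}
Then $\compseq{\varphi}{1} = \{\psi\}$, because $w \models \ltlF b$. The hypothesis $w_1 \models \ltlG \ltlWY(\bot \lor c)$ holds.

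Now order the past sets as $C_1 = \{\psi\}$ and $C_2 = \varnothing$. Then $\cwc(S_0, w_{01}) \sim \langle \ltlF b, \top \rangle$: the first component is $\afl[\varnothing](\wc(\psi)) \sim \ltlF b$, and the second is $\top$. Consider the two disjuncts of $\iota(\ltlG\psi, \cwc(S_0,w_{01}), \varnothing)$:
\begin{itemize}
\item the disjunct for $C_1$ contains $(\ltlF b)[\varnothing]_\nu = \bot$;
\item the disjunct for $C_2$ contains $\ltlG \ltlY(\bot \lor c)$, which fails at position $0$ of $w_1$.
\end{itemize}
Hence $w_1 \not\models \iota(\ltlG\psi, \cwc(S_0,w_{01}), \varnothing)$. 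The genuine weakening condition $\ltlF b$ does hold at $w_1$, but it is erased by $[\varnothing]_\nu$ because $b$ holds only finitely often.

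The correct statement of item~1 adds the hypothesis that $w$ is stable with respect to $\varphi$ at $t$, or directly $\F{\varphi}{w}{t} \subseteq M$. With that hypothesis your argument goes through unchanged. Adding it costs nothing downstream: Lemma~\ref{lemma_B3}.1, which should carry the same extra hypothesis, is only applied at the jump point of the LDBA correctness proof, where $w$ is stable. The paper's remark that this proof is ``almost identical'' to that of Lemma~\ref{lemma_mu_reset} glosses over the same point.
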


\subsubsection{Verifying the Main Formula}
The following lemma establishes the correctness of the runner automaton defined in Section~\ref{section_main_automaton}, in that $\mathcal{H}_\varphi \ltimes \mathcal{B}^1_{M, N}$ correctly verifies that $w \models \varphi$ under certain assumptions.
\lemmabone*
\begin{proof}
    Let $\cwc(S_0, w_{0t}) = \langle \xi_1, \xi_2, \dots, \xi_{2^{\vert \psf(\varphi) \vert}} \rangle$.

    1) From the assumption that $w \models \varphi$ and Lemma~\ref{lemma_entailment_correct} we get $w_t \models \af[w_{0t}](\varphi, \entseq{\varphi}{t})$.
    By Lemma~\ref{lemma_rc_correct}.1 we have that $w_t \models \xi_i$, where $i$ is such that $C_i = \entseq{\varphi}{t}$. From the assumption that $M \supseteq \GF{\varphi}{w}{0} = \GF{\varphi}{w}{t}$ and the stability at $t$ we get $\F{\varphi}{w}{t} \subseteq M$. Hence $\F{\varphi\rw{\compseq{\varphi}{t}}}{w_t}{0} \subseteq M\rw{\compseq{\varphi}{t}}$, and so by Lemma~\ref{lemma_MN_properties}.1 we have $w_t \models \af[w_{0t}](\varphi)[M\rw{\compseq{\varphi}{t}}]_\nu \land \xi_i[M\rw{\compseq{\varphi}{t}}]_\nu$. This implies that $w_t \models \iota(\af[w_{0t}](\varphi), \cwc(S_0, w_{0t}), M)$. Because this is a $\nu$-\textsf{pLTL}-formula, Lemma~\ref{lemma_mu_nu_limits}.2 shows that it will never be rewritten by $\af[]$ into a formula propositionally equivalent to $\bot$. The result follows.

    2) Because the only non-accepting state of $\mathcal{B}^1_{M, N}$ is a sink state, the assumption that $\eta$ is an accepting run means that $\iota(\af[w_{0t}](\varphi), \cwc(S_0, w_{0t}), M)$ is never rewritten by $\af$ into a formula propositionally equivalent to $\bot$. This means that $w_t \models \iota(\af[w_{0t}](\varphi), \cwc(S_0, w_{0t}), M)$, according to Lemma~\ref{lemma_mu_nu_limits}.2. Let $i$ be the index such that $w_t \models \af[w_{0t}](\varphi)[M\rw{C_i}]_\nu \land \xi_i[M\rw{C_i}]_\nu$. From Lemma~\ref{lemma_rc_correct}.2 we get $C_i \subseteq \compseq{\varphi}{t}$, and hence by Proposition~\ref{proposition_rewrite_subset} that $w_t \models \af[w_{0t}](\varphi)[M\rw{\compseq{\varphi}{t}}]_\nu$. From the assumption that $M \subseteq \GF{\varphi}{w}{0}$ we get that $M\rw{\compseq{\varphi}{t}} \subseteq \GF{\varphi\rw{\compseq{\varphi}{t}}}{w_t}{0}$, and subsequently by Lemma~\ref{lemma_MN_properties}.2 that $w_t \models \af[w_{0t}](\varphi)$. Finally, repeated applications of Lemma~\ref{lemma:af_correct} gives us $w \models \varphi$.
\end{proof}

\subsubsection{Verifying the Choice of \texorpdfstring{$M$}{M}}
The following lemma establishes the correctness of the runner automaton defined in Section~\ref{section_M_automaton}, in that $\mathcal{H}_\varphi \ltimes \mathcal{B}^2_{M, N}$ correctly verifies that the formulae in $M$ are satisfied infinitely often under certain assumptions.
\lemmabtwo*
\begin{proof}
    Let $\cwc(S_0, w_{0t'}) = \langle \xi_1, \xi_2, \dots, \xi_{2^{\vert \psf(\varphi) \vert}} \rangle$.

    1) The assumptions together with Lemma~\ref{lemma_mu_reset}.1 give us that $w_{t'} \models \iota(\ltlF \psi_i, \cwc(S_0, w_{0t'}), N)$. This is a $\mu$-\textsf{pLTL}-formula; by Lemma~\ref{lemma_mu_nu_limits}.1 it will at some point be rewritten by $\af[]$ into a formula propositionally equivalent to $\top$. Since $i$ was chosen arbitrarily, it follows by induction that there, for each $i$, exist infinitely many indices $r$ such that $\eta_{r(r+1)} = \langle \beta_r, \gamma_r \rangle$, where $\gamma_r$ is $\langle \top, i \rangle$. In particular, this is true for $\langle \top, 1 \rangle$, showing that $\eta$ is an accepting run.

    2) Let $i \in [\vert M \vert]$ and $r \in \mathbb{N}$. Since $\eta$ is an accepting run, it jumps from component to component infinitely many times, implying that there exists a point $t \geq \max(t', r)$ at which $w$ is stable with respect to $\varphi$ such that $\eta_{t(t+1)} = \langle \beta_t, \gamma_t \rangle$, where $\gamma_t = \langle \iota(\ltlF \psi_i, \cwc(S_0, w_{0t}), N), i \rangle$. The first component of $\gamma_t$ is eventually rewritten by $\af$ to $\top$. This, by Lemma~\ref{lemma_mu_nu_limits}.1, means that $w_t \models \iota(\ltlF \psi_i, \cwc(S_0, w_{0t}), N)$. Lemma~\ref{lemma_mu_reset}.2 completes the proof.
\end{proof}

\subsubsection{Verifying the Choice of \texorpdfstring{$N$}{N}}
The following lemma establishes the correctness of the runner automaton defined in Section~\ref{section_N_automaton}, in that $\mathcal{H}_\varphi \ltimes \mathcal{B}^2_{M, N}$ correctly verifies that the formulae in $M$ are satisfied persistently under certain assumptions.
\lemmabthree*
\begin{proof}
    1) By the assumptions together with Lemma~\ref{lemma_nu_reset}.1 we have \\ $w_t \models \land_{\psi \in N} \, \iota(\ltlG \psi, \cwc(S_0, w_{0t}), M)$, which implies that $w_t \models \iota(\ltlG \land_{\psi \in N} \, \psi, \cwc(S_0, w_{0t}), M)$. By Lemma~\ref{lemma_mu_nu_limits}.2 this is never rewritten by $\af[]$ into a formula propositionally equivalent to $\bot$, as it is a $\nu$-\textsf{pLTL}-formula. It follows that $\eta$ is accepting.

    2) That $\eta$ is accepting means that the $\nu$-\textsf{pLTL}-formula $\iota(\ltlG \land_{\psi \in N} \, \psi, \cwc(S_0, w_{0t}), M)$ is never rewritten by $\af[]$ into a formula propositionally equivalent to $\bot$. We thus get from Lemma~\ref{lemma_mu_nu_limits}.2 that $w_t \models \iota(\ltlG \land_{\psi \in N} \, \psi, \cwc(S_0, w_{0t}), M)$. By Lemma~\ref{lemma_nu_reset}.2 we have $w_t \models \land_{\psi \in N} \ltlG (\psi\rw{\compseq{\varphi}{t}}[M\rw{\compseq{\varphi}{t}}]_\nu)$, or equivalently that $\forall \psi \in N \, . \, w_t \models \ltlG (\psi\rw{\compseq{\varphi}{t}}[M\rw{\compseq{\varphi}{t}}]_\nu)$.
\end{proof}

\subsubsection{Constructing the LDBA}
We now show that the limit-deterministic automaton constructed in Section~\ref{section_constructing_ldba}, together with the WC automaton, recognizes precisely the words satisfying the given formula.
\begin{lemma}
\label{theorem_ldba_correct}
    Let $\varphi$ be a formula and $w$ a word over $\mathcal{P}(\Var(\varphi))$. Then $\mathcal{L}(\mathcal{H}_\varphi \ltimes \mathcal{A}_{LDBA}(\varphi)) = \mathcal{L}(\varphi)$.
\end{lemma}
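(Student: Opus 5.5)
We prove the two inclusions separately, assembling Lemmas~\ref{lemma_B1}, \ref{lemma_B2} and \ref{lemma_B3} with Lemmas~\ref{lemma_MN_=>} and \ref{lemma_MN_<=}. Throughout, the bed component of any run is the unique run of the deterministic WC automaton, so after reading $w_{0t}$ it is $\beta_t = \cwc(S_0, w_{0t})$. The non-deterministic part only applies $\af$, so on the deterministic branch the runner component after $t$ steps is $\af[w_{0t}](\varphi)$. A run that jumps at time $t$ (reading $\sigma_t$) therefore enters $Q_{M,N}$ in the state $Q_{0,M,N}(\af[w_{0t}](\varphi), \beta_t, \sigma_t)$. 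Its three components are exactly the starting states considered in the three lemmas, taken at position $t+1$; the first component uses $\af[w_{0(t+1)}](\varphi)$.

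\emph{Completeness} ($\mathcal{L}(\varphi) \subseteq \mathcal{L}(\mathcal{H}_\varphi \ltimes \mathcal{A}_{LDBA}(\varphi))$). Let $w \models \varphi$. Put $M = \GF{\varphi}{w}{0}$ and $N = \FG{\varphi}{w}{0}$, and choose $s$ such that $w$ is stable with respect to $\varphi$ at $s$. Stability persists at all later indices, so we may take $s \geq 1$. Consider the run that stays in the non-deterministic part for $s-1$ steps and then jumps into $Q_{M,N}$, landing at position $s$.
\begin{itemize}
\item Lemma~\ref{lemma_B1}.1, with $M \supseteq \GF{\varphi}{w}{0}$ and stability at $s$, shows that the first component visits $\alpha^1$ forever.
\item Lemma~\ref{lemma_MN_=>} provides the hypotheses of Lemma~\ref{lemma_B2}.1 (for all $t \geq 0$) and of Lemma~\ref{lemma_B3}.1 (at $t = s$, since $s$ is a stability index). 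Hence the second component visits $\langle \top, 1\rangle$ infinitely often and the third never reaches $\bot$.
\end{itemize}
Components $1$ and $3$ are accepting at every step and component $2$ is accepting infinitely often, so the product is in $\alpha_{M,N}$ infinitely often. The degenerate cases $M = \varnothing$ (where $\alpha^2$ needs an interpretation such as the trivially accepting component) and $N = \varnothing$ must be checked separately. One also needs a small index adjustment: Lemma~\ref{lemma_B2} is stated for $i \in [|M|-1]$, whereas we start at index $1$.

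\emph{Soundness} ($\supseteq$). Suppose some run is accepting. Since $\alpha$ lies inside the sets $Q_{M,N}$, which are closed under $\delta$ and deterministic, the run jumps exactly once, at some time $t$, into a fixed $Q_{M,N}$ and then stays there. Components $1$ and $3$ avoid $\bot$ forever and component $2$ is in $\langle\top,1\rangle$ infinitely often. The key difficulty is circularity: Lemma~\ref{lemma_B1}.2 and Lemma~\ref{lemma_B3}.2 need $M \subseteq \GF{\varphi}{w}{0}$, while Lemma~\ref{lemma_B2}.2 needs $N \subseteq \FG{\varphi}{w}{0}$. I would break the circularity by a \emph{simultaneous fixpoint argument}, mirroring Esparza et al.
\begin{enumerate}
\item Define $M' = M \cap \GF{\varphi}{w}{0}$ and $N' = N \cap \FG{\varphi}{w}{0}$.
\item Observe that the transformations $[\cdot]_\nu$ and $[\cdot]_\mu$ are monotone in the guessed sets: enlarging $N$ only replaces subformulae by $\top$, and enlarging $M$ turns $\bot$ into $\ltlW$/$\ltlR$ formulae.
\item Re-run the proofs of Lemma~\ref{lemma_B2}.2 and Lemma~\ref{lemma_B3}.2 with $M'$ and $N'$ in place of $M$ and $N$.
\item Conclude that the premises of Lemma~\ref{lemma_MN_<=} hold for $M$ and $N$ themselves, which gives $M \subseteq \GF{\varphi}{w}{0}$ and $N \subseteq \FG{\varphi}{w}{0}$.
\end{enumerate}
I expect this unfolding to be the main obstacle. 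It requires the greatest/least-fixpoint reasoning from the LTL case, now carried through the $\rw{\compseq{\varphi}{t}}$ rewrites, and it must use Lemma~\ref{lemma_rc_correct}.2/3 to control the weakening-condition disjuncts in $\iota$.

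Once $M \subseteq \GF{\varphi}{w}{0}$ is established, Lemma~\ref{lemma_B1}.2 applied to the first component yields $w \models \varphi$. Finally, the doubly exponential size bound follows from Lemma~\ref{lemma_size_bound}:
\begin{itemize}
\item $S$ is a product of $2^{|\psf(\varphi)|}$ copies of sets of the form $\mathbb{B}(\cdot)_{/\sim}$;
\item each $Q_{M,N}$ is a product of three $\mathbb{B}^\lor$-quotients, the second times an index in $[|M|]$;
\item there are $2^{O(|\varphi|)}$ pairs $(M,N)$.
\end{itemize}
Together these give at most $2^{2^{O(|\varphi|)}}$ states.
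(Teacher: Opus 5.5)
\paragraph{Completeness.} This direction is the paper's argument: the same choice $M=\GF{\varphi}{w}{0}$, $N=\FG{\varphi}{w}{0}$, a jump at a stability index, and Lemmas~\ref{lemma_B1}.1, \ref{lemma_MN_=>}, \ref{lemma_B2}.1 and~\ref{lemma_B3}.1. One correction: $\ltimes$ passes the runner the \emph{successor} bed state. The jump therefore lands in $Q_{0,M,N}(\af[w_{0t}](\varphi),\beta_{t+1},\sigma_t)$, and that is what matches the lemmas at position $t+1$. Your worry about $M=\varnothing$ is a real defect of the construction as written: $Q^2_{\varnothing,N}$ is empty, so there is no accepting component for $M=\varnothing$. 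It is not a defect of your proof.

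\paragraph{Soundness: the gap.} Your steps 1--4 do not break the circularity. The accepting run only certifies the formulae $\iota(\ltlF\psi_i,\beta_t,N)$ and $\iota(\ltlG\bigwedge_{\psi\in N}\psi,\beta_{t'},M)$, which are built from the guessed $M$ and $N$. Monotonicity runs the wrong way for you. For $N'\subseteq N$ and $M'\subseteq M$ you have $\chi[N']_\mu\Rightarrow\chi[N]_\mu$ and $\chi[M']_\nu\Rightarrow\chi[M]_\nu$. So the run tells you nothing about the versions with $M'$ and $N'$. There is also no run of an automaton for $M',N'$ on which to ``re-run'' Lemmas~\ref{lemma_B2}.2 and~\ref{lemma_B3}.2. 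Step 4 then needs those lemmas for $M$ and $N$ themselves, whose hypotheses are exactly $N\subseteq\FG{\varphi}{w}{0}$ and $M\subseteq\GF{\varphi}{w}{0}$, the statements you are trying to prove. Nor can Lemma~\ref{lemma_MN_<=} be used as a black box, since establishing its premises from the run is precisely where the circularity sits.

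\paragraph{The missing idea.} What you need is a well-founded induction along the subformula order, which is what the paper does, following Esparza et al.
\begin{enumerate}
\item Enumerate $M\cup N$ as $\psi_1,\dots,\psi_n$ so that proper subformulae come first. Let $M_i$ and $N_i$ collect the members of $M$ and of $N$, respectively, among $\psi_1,\dots,\psi_i$.
\item Prove $M_i\subseteq\GF{\varphi}{w}{0}$ and $N_i\subseteq\FG{\varphi}{w}{0}$ by induction on $i$.
\item The key fact is locality. For $\psi_i\in M$, the only members of $N$ touched by $\psi_i[N]_\mu$ are proper subformulae of $\psi_i$. By the induction hypothesis these lie in $N_{i-1}=N_i\subseteq\FG{\varphi}{w}{0}$.
\item Lemma~\ref{lemma_B2}.2 then gives $w_t\models\ltlF(\psi_i\rw{\compseq{\varphi}{t}}[N_i\rw{\compseq{\varphi}{t}}]_\mu)$ for infinitely many $t$.
\item Beyond the stability index, $N_i\subseteq\FG{\varphi}{w}{0}=\G{\varphi}{w}{t}$. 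Lemma~\ref{lemma_MN_properties}.4 then yields $\ltlF\psi_i$ infinitely often, so $\psi_i\in\GF{\varphi}{w}{0}$.
\item The case $\psi_i\in N$ is symmetric, using Lemma~\ref{lemma_B3}.2 and Lemma~\ref{lemma_MN_properties}.2 with $M_i$.
\end{enumerate}
Your $M',N'$ framing can be rescued in exactly this way: take a subformula-minimal element of $(M\setminus M')\cup(N\setminus N')$.

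\paragraph{Caveats.} Lemmas~\ref{lemma_B2}.2 and~\ref{lemma_B3}.2 are stated with the full $N$ and $M$ as hypotheses. This is needed to control the weakening-condition conjuncts of $\iota$. The paper applies them to $N_i$ and $M_i$ without comment, so the locality step deserves an explicit justification in your write-up. Finally, the size bound at the end of your proposal is not part of this lemma.
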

\begin{proof}
    $\mathcal{L}(\varphi) \subseteq \mathcal{L}(\mathcal{H}_\varphi \ltimes \mathcal{A}_{LDBA}(\varphi))$: Assume $w \models \varphi$ and let $s > 0$ be a point at which $w$ is stable with respect to $\varphi$. Let $M = \GF{\varphi}{w}{0}$ and $N = \FG{\varphi}{w}{0}$. We inductively define a run $\eta = \langle \beta_0, \gamma_0 \rangle, \langle \beta_1, \gamma_1 \rangle, \dots$ on $w$ where $\beta_i \coloneqq \cwc(S_0, w_{0i})$ as dictated by $\mathcal{H}_\varphi$ and $\gamma_i$ is defined by,
    \begin{equation*}
    \begin{aligned}        
        \gamma_0 &\coloneqq Q_0 \\
        \gamma_i &\coloneqq
        \begin{cases}
        \begin{aligned}
        &\af(\gamma_{i-1}, w_{i(i+1)}) &&(0 < i < s) \\
        &Q_{0, M, N}(\gamma_{i-1}, \beta_i, w_{s(s+1)}) &&(i = s) \\
        &\delta_{M, N}(\gamma_{i-1}, \gamma_i) &&(i > s).
        \end{aligned}
        \end{cases}
    \end{aligned}
    \end{equation*}
    Informally, $\eta$ stays in the non-deterministic part of the automaton until $w$ is stable with respect to $\varphi$. At that point it moves to the deterministic part corresponding to $M = \GF{\varphi}{w}{0}$ and $N = \FG{\varphi}{w}{0}$, and stays there. That $w$ models $\varphi$ and is stable with respect to $\varphi$ at index $s + 1$, and that $M = \GF{\varphi}{w}{0}$ and $N = \FG{\varphi}{w}{0}$, means that the premises of Lemma~\ref{lemma_B1}.1 are satisfied. By Lemma~\ref{lemma_MN_=>} the premises of Lemma~\ref{lemma_B2}.1 and Lemma~\ref{lemma_B3}.1 are also satisfied. It follows that $\gamma_t \in \alpha$ for infinitely many $t$, so that $\eta$ is an accepting run.

    $\mathcal{L}(\mathcal{H}_\varphi \ltimes \mathcal{A}_{LDBA}(\varphi)) \subseteq \mathcal{L}(\varphi)$: Assume that $\eta = \langle \beta_0, \gamma_0 \rangle, \langle \beta_1, \gamma_1 \rangle, \dots$ is an accepting run of $\mathcal{H}_\varphi \ltimes \mathcal{A}_{LDBA}(\varphi)$ on $w$. Let $M \subseteq \mu(\varphi)$, $N \subseteq \nu(\varphi)$, and $t' \in \mathbb{N}$ be the unique sets and smallest index such that $\gamma_t \in Q_{M, N}$, respectively; these exist since $\eta$ is accepting and each accepting state belongs to a strongly connected component of $\mathcal{H}_\varphi \ltimes \mathcal{A}_{LDBA}(\varphi)$. By proving that $M \subseteq \GF{\varphi}{w}{0}$ we may apply Lemma~\ref{lemma_B1}.2 to finish the proof. We prove the stronger statement that both $M \subseteq \GF{\varphi}{w}{0}$ and $N \subseteq \FG{\varphi}{w}{0}$ hold. This part of the proof is almost identical to the similar proof of the correctness of the Rabin automaton in Azzopardi et al.~\cite{azzopardi_et_al:LIPIcs.MFCS.2024.13}. We include it for the sake of completeness. 

    We fix an ordering $\psi_1, \psi_2, \dots, \psi_n$ of the subformulae of $M \cup N$ such that for all $i, j \in [n]$, if $\psi_i$ is a proper subformula of $\psi_j$, then $i < j$. We define
\begin{equation*}
\begin{aligned}
    M_0 &\coloneqq \varnothing &
    M_i &\coloneqq
    \begin{cases}
        M_{i-1} \cup \{ \psi_i \} &\mbox{ if } \psi_i \in \mu(\varphi) \\
        M_{i-1} &\mbox{ if } \psi_i \in \nu(\varphi)
    \end{cases}
    \\
    N_0 &\coloneqq \varnothing &
    N_i &\coloneqq
    \begin{cases}
        N_{i-1} &\mbox{ if } \psi_i \in \mu(\varphi) \\
        N_{i-1} \cup \{ \psi_i \} &\mbox{ if } \psi_i \in \nu(\varphi),
    \end{cases}
\end{aligned}
\end{equation*}
and proceed by induction on $n$. The base case when $n = 0$ is vacuously true. We proceed with the inductive step with $n = i > 0$. Assume $\psi_i \in M$. By the inductive hypothesis, $\FG{\varphi}{w}{0} \supseteq N_{i - 1} = N_i$. Thus the premises of Lemma~\ref{lemma_B2}.2 are satisfied, so that
\begin{equation*}
    \forall r \, . \, \exists t \geq r \, . \, w_t \models \ltlF (\psi_i\rw{\compseq{\varphi}{t}}[N_i\rw{\compseq{\varphi}{t}}]_\mu).
\end{equation*}
This implies that there are infinitely many points $t$ such that $w_t \models \ltlF (\psi_i\rw{\compseq{\varphi}{t}}[N_i\rw{\compseq{\varphi}{t}}]_\mu)$. In particular, for each such $t \geq r$, since $N_i\rw{\compseq{\varphi}{t}} \subseteq \FG{\varphi\rw{\compseq{\varphi}{t'}}}{w_{t}}{0} = \G{\varphi\rw{\compseq{\varphi}{t'}}}{w_{t}}{0}$, we get from Lemma~\ref{lemma_MN_properties}.4 that $w_{t} \models \ltlF (\psi_i\rw{\compseq{\varphi}{t'}})$. It follows that $\psi_i \in \GF{\varphi}{w}{0}$, which together with the inductive hypothesis shows that $M_i \subseteq \GF{\varphi}{w}{0}$. In the same manner, assuming that $\psi_i \in N$, we get from Lemma~\ref{lemma_B3}.2 that
\begin{equation*}
     w_{t'} \models \ltlG (\psi_i\rw{\compseq{\varphi}{t'}}[M_i\rw{\compseq{\varphi}{t'}}]_\nu).
\end{equation*}
We get from Lemma~\ref{lemma_MN_properties}.2 and the same reasoning as above that $w_{t'} \models \ltlG(\psi_i\rw{\compseq{\varphi}{t'}})$. Then $\psi_i \in \FG{\varphi}{w}{0}$, which together with the inductive hypothesis establishes that $N_i \subseteq \FG{\varphi}{w}{0}$.
\end{proof}
Lemma~\ref{lemma_size_bound} and Lemma~\ref{lemma_wc_size} establish bounds on the state spaces of the components that make up the LDBA.

\begin{lemma}[\cite{azzopardi_et_al:LIPIcs.MFCS.2024.13}]
\label{lemma_wc_size}
The size of the state space of $\mathcal{H}_\varphi = (S, S_0, \delta_\mathcal{H})$, up to propositional equivalence, is bounded by $\vert S_{/\sim} \vert \leq 2^{2^{\vert \varphi \vert + \vert \psf(\varphi) \vert}}$.
\end{lemma}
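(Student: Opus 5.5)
The plan is a direct counting argument: bound the number of propositional equivalence classes in each component of the product defining $S$, then multiply. Propositional equivalence on tuples $\langle \psi_1, \dots, \psi_{2^{\vert \psf(\varphi) \vert}} \rangle \in S$ is taken componentwise, so $S_{/\sim}$ is in bijection with $\prod_i \mathbb{B}(\varphi\rw{C_i})_{/\sim}$. Hence
\begin{equation*}
\vert S_{/\sim} \vert = \prod_{i=1}^{2^{\vert \psf(\varphi) \vert}} \big\vert \mathbb{B}(\varphi\rw{C_i})_{/\sim} \big\vert .
\end{equation*}

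First I bound a single factor. I show that $\vert \sff(\varphi\rw{C}) \vert \leq \vert \sff(\varphi) \vert = \vert \varphi \vert$ for every $C \subseteq \psf(\varphi)$. By Definition~\ref{definition_rewriting_by_set}, the rewrite $\placeholder\rw{C}$ preserves the syntax tree of $\varphi$ and only swaps strong and weak versions of past operators. Consequently every atomic or temporal subformula of $\varphi\rw{C}$ has the form $\xi\rw{C}$ for some $\xi \in \sff(\varphi)$, and this gives the bound by a straightforward induction on $\varphi$.

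A formula in $\mathbb{B}(\varphi\rw{C})$ is, under $\models_p$, a positive Boolean combination of the atoms $\sff(\varphi\rw{C}) \cup \{\top, \bot\}$. Its equivalence class is therefore determined by the set of subsets $\mathcal{I} \subseteq \sff(\varphi\rw{C})$ with $\mathcal{I} \models_p \psi$; the constants $\top$ and $\bot$ have fixed truth values. It follows that
\begin{equation*}
\vert \mathbb{B}(\varphi\rw{C})_{/\sim} \vert \leq 2^{2^{\vert \varphi \vert}}.
\end{equation*}

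Multiplying over the $2^{\vert \psf(\varphi) \vert}$ indices gives
\begin{equation*}
\vert S_{/\sim} \vert \leq \big(2^{2^{\vert \varphi \vert}}\big)^{2^{\vert \psf(\varphi) \vert}} = 2^{2^{\vert \varphi \vert} \cdot 2^{\vert \psf(\varphi) \vert}} = 2^{2^{\vert \varphi \vert + \vert \psf(\varphi) \vert}},
\end{equation*}
which is the claimed bound.

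The only step requiring care is the claim that rewriting does not increase the number of distinct subformulae. This needs the remark that distinct $\xi, \xi'$ may collapse to the same rewritten formula, which only decreases the count, but can never split. I would also briefly check that $\delta_\mathcal{H}$ indeed maps into $S$. This holds because $\af(\cdot, C_i\rw{C_j})$ applied to elements of $\mathbb{B}(\varphi\rw{C_j})$ yields Boolean combinations of subformulae of $\varphi\rw{C_j}\rw{C_i\rw{C_j}}$, which the saturation condition $C_j \preceq C_i$ is designed to identify with subformulae of $\varphi\rw{C_i}$. That check concerns well-definedness of $\mathcal{H}_\varphi$, however, and not the counting itself.
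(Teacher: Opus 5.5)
Your proposal is correct. It reads $\sim$ on $S$ componentwise, observes that $\placeholder\rw{C}$ preserves the shape of the syntax tree so that $|\sff(\varphi\rw{C})| \le |\varphi|$, bounds each factor by $2^{2^{|\varphi|}}$ classes, and takes the product over the $2^{|\psf(\varphi)|}$ factors, which gives exactly the stated bound. The paper gives no proof of its own here (the lemma is imported from Azzopardi et al.), and your product count is the argument that the form $2^{2^{|\varphi|}\cdot 2^{|\psf(\varphi)|}}$ of the bound reflects.
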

Putting the above results together, we obtain a doubly exponential bound on the number of states of the LDBA.

\begin{proposition}
\label{prop_ldba_size}
    The number of states of $\mathcal{H}_\varphi \ltimes \mathcal{A}_{LDBA}(\varphi)$ is at most doubly exponential in the size of $\varphi$.
\end{proposition}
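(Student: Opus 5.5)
The plan is to bound the state set of the composition $\mathcal{H}_\varphi \ltimes \mathcal{A}_{LDBA}(\varphi)$ by the product of the bed's states and the runner's states. Recall that the composition has state set $Q \times S$. Lemma~\ref{lemma_wc_size} already gives $\vert S_{/\sim} \vert \leq 2^{2^{\vert \varphi \vert + \vert \psf(\varphi) \vert}}$, and $\vert \psf(\varphi) \vert \leq \vert \varphi \vert$, so this factor is at most $2^{2^{2\vert\varphi\vert}}$. It then remains to show that $\vert Q_{/\sim} \vert$ is doubly exponential. A product of two doubly exponential quantities is still doubly exponential, since $2^{2^{a}} \cdot 2^{2^{b}} \leq 2^{2^{\max(a,b)+1}}$.

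For $Q$, I would split it into the non-deterministic part $\mathbb{B}^\lor(\varphi)$ and the union of the components $Q_{M,N}$.

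The first part is bounded directly by Lemma~\ref{lemma_size_bound}.

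For a fixed pair $(M,N)$, $Q_{M,N}$ is a product $Q^1_{M,N} \times Q^2_{M,N} \times Q^3_{M,N}$, which I would handle factor by factor.

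\begin{itemize}
\item $Q^1_{M,N} = \mathbb{B}^\lor(\varphi[M]_\nu)$. Here I would check that $\vert \varphi[M]_\nu \vert \leq \vert \varphi \vert$ and $\vert \psf(\varphi[M]_\nu) \vert \leq \vert \psf(\varphi) \vert$. This holds because the rewrite $\placeholder[M]_\nu$ replaces subformulae by $\bot$ or changes only the root operator, so it does not increase the number of distinct atomic or temporal subformulae. Lemma~\ref{lemma_size_bound} then gives a bound of $2^{2^{2\vert\varphi\vert}}$.
\item $Q^2_{M,N}$ is a union over $i \in [\vert M \vert]$ of $\mathbb{B}^\lor(\varphi[N]_\mu \land \ltlF(\psi_i[N]_\mu))$. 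The size of the argument formula is at most $2\vert\varphi\vert+1$, so this factor is at most $\vert\varphi\vert \cdot 2^{2^{O(\vert\varphi\vert)}}$.
\item $Q^3_{M,N}$ is handled the same way. Its argument $\varphi[M]_\nu \land \bigwedge_{\psi\in N}\ltlG(\psi[M]_\nu)$ has size $O(\vert\varphi\vert)$, since each $\ltlG(\psi[M]_\nu)$ adds one new temporal subformula and the subformulae of $\psi[M]_\nu$ are shared with $\varphi[M]_\nu$.
\end{itemize}

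The product of these three factors is $2^{2^{O(\vert\varphi\vert)}}$, because the exponents add. The number of pairs $(M,N)$ is at most $2^{\vert\varphi\vert}$. Summing gives $\vert Q_{/\sim}\vert \leq 2^{2^{O(\vert\varphi\vert)}}$.

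I expect the main obstacle to be the size claims for the rewritten formulae. Lemma~\ref{lemma_size_bound} is stated for $\mathbb{B}^\lor$ of a formula in terms of that formula's own size and past subformulae, so I must make sure that $\psf$ and $\sff$ do not blow up under $\placeholder[M]_\nu$, $\placeholder[N]_\mu$ and the extra conjuncts. The rewrite is a homomorphic relabelling of the syntax DAG, which sends distinct subformulae to at most as many distinct subformulae; I would use this to get the linear bound. I would also note that the initial states produced by $\iota$ lie in these $\mathbb{B}^\lor$ sets. This is because they are disjunctions over $C_i$ of formulae from $\mathbb{B}((\cdot)\rw{C_i})$, using $\xi_i \in \mathbb{B}(\varphi\rw{C_i})$ and the fact that rewriting commutes with $\placeholder\rw{C_i}$ on the shared subformulae. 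Hence no states outside the counted sets arise.
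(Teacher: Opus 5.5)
Your proposal matches the paper's proof essentially step for step. Like the paper, you bound the WC automaton's states, bound $\mathbb{B}^\lor(\varphi)$ and each of $Q^1_{M,N}$, $Q^2_{M,N}$ (with the extra factor $\vert M \vert$) and $Q^3_{M,N}$ via Lemma~\ref{lemma_size_bound}, multiply within each component, sum over the at most $2^{\vert\varphi\vert}$ pairs $(M,N)$, and multiply by the bed's states. Your extra check that the $\iota$-states land in these sets is not in the paper, which only bounds the sets as defined. Your commutation argument for that check is also too quick: $\placeholder[M]_\nu$ can merge distinct past subformulae that $C_i$ weakens differently, so $\varphi\rw{C_i}[M\rw{C_i}]_\nu$ need not be of the form $\varphi[M]_\nu\rw{C}$.
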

\begin{proof}
    Let $n$ be the number of atomic- and future subformulae of $\varphi$ and $m$ the number of past subformulae. Let $\mathcal{H}_\varphi = (S, S_0, \delta_\mathcal{H})$. We recall that the state space of $\mathcal{A}_{LDBA}(\varphi)$ is given by
    \begin{equation*}
        \mathbb{B}^\lor(\varphi) \ \cup \ \bigcup \big\{ Q^1_{M, N} \times Q^2_{M, N} \times Q^3_{M, N} \mid M \subseteq \mu(\varphi), N \subseteq \nu(\varphi) \big\},
    \end{equation*}
    where
    \begin{equation*}
    \begin{aligned}
        Q^1_{M, N} &\coloneqq \mathbb{B}^\lor(\varphi[M]_\nu) \\
        Q^2_{M, N} &\coloneqq \bigcup_{i \in [\vert M \vert]} \big\{\langle \xi, i \rangle \mid \xi \in \mathbb{B}^\lor(\varphi[N]_\mu \land \ltlF(\psi_i[N]_\mu)) \big\} \\
        Q^3_{M, N} &\coloneqq \mathbb{B}^\lor\bigg(\varphi[M]_\nu \land \bigwedge_{\psi \in N} \ltlG(\psi[M]_\nu)\bigg).
    \end{aligned}
    \end{equation*}
    By Lemma~\ref{lemma_wc_size} we have that $\vert S_{/\sim} \vert \leq 2^{2^{n + 2m}}$. Lemma~\ref{lemma_size_bound} gives us that
    \begin{equation*}
        \vert \mathbb{B}^\lor(\varphi) \vert \leq 2^{2^{n + 2m}} \quad
        \vert {Q^1_{M, N}}_{/\sim} \vert \leq 2^{2^{n + 2m}} \quad
        \vert {Q^2_{M, N}}_{/\sim} \vert \leq n 2^{2^{n + 2m + 1}} \quad
        \vert {Q^3_{M, N}}_{/\sim} \vert \leq 2^{2^{2n + 2m}}.
    \end{equation*}
    For each pair of sets $M$ and $N$ we then have the following bound on the size of ${Q^1_{M, N}}_{/\sim} \times {Q^2_{M, N}}_{/\sim} \times {Q^3_{M, N}}_{/\sim}$:
    \begin{equation*}
    \begin{aligned}
        2^{2^{n + 2m}} \cdot n(2^{2^{n + 2m + 1}}) \cdot 2^{2^{2n + 2m}} &= 2^{2^{n + 2m}} \cdot 2^{\log_2(n) + 2^{n + 2m + 1}} \cdot 2^{2^{2n +2m}} \\
        &\leq \left( 2^{2^{2n + 2m + 2}} \right)^3 \\
        &\leq 2^{2^{2n + 2m + 4}}.
    \end{aligned}
    \end{equation*}
    With at most $2^n$ combinations of sets $M$ and $N$, we get the following bound on the number of states of $\mathcal{A}_{LDBA}(\varphi)$:
    \begin{equation*}
        2^{2^{n + 2m}} + 2^n 2^{2^{2n + 2m + 4}} \leq
        2^{2^{n + 2m}} + 2^{2^{3n + 2m + 4}} \leq
        2^{2^{3n + 2m + 5}}.
    \end{equation*}
    Finally, adding the WC automaton's contribution of at most $2^{2^{n + 2m}}$ states we get this bound for $\mathcal{H}_\varphi \ltimes \mathcal{A}_{LDBA}(\varphi)$:
    \begin{equation*}
        2^{2^{n + 2m}} \cdot 2^{2^{3n + 2m + 5}} \leq
        2^{2^{3n + 2m + 6}}.
    \end{equation*}
    It follows that the number of states of the automaton is in $2^{2^{\mathcal{O}(\vert \varphi \vert)}}$.
\end{proof}
\ldbacorrect*
\begin{proof}
    This is an immediate consequence of Lemma~\ref{theorem_ldba_correct} and Proposition~\ref{prop_ldba_size}.
\end{proof}
\subsection{Section~\ref{section:dpa}}

\initdeterministic*
\begin{proof}
    Consider a state $q$ in the non-deterministic part.
    State $q$ is a member of the set $\mathbb{B}^\lor(\varphi)$. 
    Given a letter $\sigma$, the successor state in $\mathbb{B}^\lor(\varphi)$ is uniquely defined (up to propositional equivalence) by the after-function.
\end{proof}

\expbase*
\begin{proof}
    Let $\varphi$ be a formula and $\mathcal{H}_\varphi \ltimes \mathcal{A}_{LDBA} = (Q, Q_0, \delta, \alpha)$ the LDBA defined in Section~\ref{section_constructing_ldba}. For the purpose of this proof we consider only the states of $\mathcal{A}_{LDBA}$ that are reachable from the initial state, as only these need to be constructed.

    By Theorem~\ref{theorem_ldba_correct}, we have $\mathcal{L}(Q_0) = \mathcal{L}(\varphi)$. By the correctness of the after-function established by Corollary~\ref{lemma:af_correct}, the languages of the states in the non-deterministic part $Q \setminus Q_d$ of the automaton are syntactically described by formulae that are the result of rewrites by the after-function, each of which is an element of the set $\mathbb{B}^\lor(\varphi)$. 
    Hence, repeating our statement from Section~\ref{section:dpa}:
    Each formula $\psi$ in $\mathbb{B}^{\lor}(\varphi)$ has the same language as some formula $\psi^d$ in $\mathbb{B}^{\lor}(\varphi)$ that is in disjunctive normal form. 
    By construction, each disjunct of $\psi^d$ is a conjunctive clause belonging to $\mathbb{B}(\varphi\rw{C})$, for some $C \subseteq \psf(\varphi)$. 
    There are at most $2^{|\psf(\varphi)|}$ choices for $C\subseteq \psf(\varphi)$.
    Each conjunctive clause, corresponds to a subset of $\sff(\varphi)$.
    Hence, there are at most $2^{|\sff(\varphi)|+|\psf(\varphi)|}$ different disjuncts in $\psi^d$.
    There are at most $2^{|\varphi|}$ conjunctive clauses in $\mathbb{B}(\varphi)$. 
    It follows that the language of each state in the non-deterministic part is a union of at most $2^{\vert \varphi \vert + \vert \psf(\varphi) \vert}$ sub-languages.

    Now consider a component $Q_{M, N} = Q^1_{M, N} \times Q^2_{M, N} \times Q^3_{M, N}$ of the automaton, for some choice of sets $M \subseteq \mu(\varphi)$ and $N \subseteq \nu(\varphi)$.
    
    Let $q$ be a state in $Q^1_{M, N}$ (we omit the WC automaton's component, since this does not affect the language of the state). The corresponding automaton $\mathcal{H}_\varphi \ltimes \mathcal{B}^1_{M, N}$ accepts precisely those words such that $q$ is never rewritten by the after-function to $[\bot]_{/\sim}$. Since it is a member of the set $\mathbb{B}^\lor(\varphi[M]_\mu)$, the formula $q$ is necessarily a $\nu$-\textsf{pLTL} formula, meaning that $q$ has the same language viewed as a state as it does viewed as a formula, according to Lemma~\ref{lemma_mu_nu_limits}.2. Because $\vert \sff(\varphi[M]_\mu) \leq \vert \sff(\varphi)\vert$, we have by the same argument as before that the language of each state in $Q^1_{M, N}$ is a union of at most $2^{\vert \varphi \vert + \vert \psf(\varphi) \vert}$ sub-languages.

    A similar argument establishes that the language of each state in $Q^3_{M, N}$ of the automaton $\mathcal{H}_\varphi \ltimes \mathcal{B}^3_{M, N}$ is a union of at most $2^{\vert \varphi \vert + 1+ \vert \psf(\varphi) \vert}$ sub-languages, where the addition in the exponent comes from the presence of the formula $\ltlG \land_{\psi \in N} \psi$.
    
    %Finally, a state $q$ (omitting the WC automaton's component) in $Q^2_{M, N}$ is an element of $\mathbb{B}^\lor(\varphi[N]_\mu \land_{i \in [\vert M \vert]} \ltlF(\psi_i[N]_\mu))$. In order to be accepted by the automaton, this formula must eventually be rewritten to $[\top]_{/\sim}$, so that $q$ is part of $\mathcal{L}(q)$ by Lemma~\ref{lemma_mu_nu_limits}.1. However, infinitely many $\mu$-\textsf{pLTL} formulae given by $\iota(\ltlF \psi_i, \xi^\times, N)$ must also be rewritten in this manner if the automaton is to accept. Nonetheless, these also belong to $\mathbb{B}^\lor(\varphi[N]_\mu \land_{i \in [\vert M \vert]} \ltlF(\psi_i[N]_\mu)) \big\}$, so that the language of each state in $Q^2_{M, N}$ is a union of at most $2^{\vert \varphi \vert + \vert M \vert + \vert \psf(\varphi) \vert}$ sub-languages.

  Finally, a state $q$ in $Q^2_{M,N}$ is an element of 
  $\mathbb{B}^{\lor}(\varphi[N]_\mu \land_{i\in [\vert M \vert]} \ltlF(\psi_i[N]_\mu))$ (again omitting the WC automaton's component).
  In order to be accepted by the automaton, this formula must eventually be rewritten to $[\top]_{/\sim}$, so that $q$ is part of $\mathcal{L}(q)$ by Lemma~\ref{lemma_mu_nu_limits}.1.
  Then, infinitely many times $\mu$-\textsf{pLTL} formulae given by $\iota(\ltlF \psi_i, \xi^\times, N)$ must also be rewritten in this manner if the automaton is to accept. 
  From correctness of the construction, the language of $q$, is contained in the language
  of $\left (\bigwedge_{i \in [\vert M \vert]} \ltlG\ltlF(\psi_i[N]_\mu)\right )\langle C \rangle$ for some set $C\subseteq \psf(\varphi)$.
  Furthermore, as above, $q$ further restricts this language to one of $2^{\vert \psi_i[N]\vert}$ sub-languages, for some $i$, according to the sub-langugages of $\ltlF\psi_i[N]$ (rewritten by the same $C$).
  It follows that overall, there are $2^{\vert \varphi \vert + \vert M \vert + \vert \psf(\varphi)\vert}$ sub-languages.

    From the above, it follows that a base exists for the non-deterministic part of $\mathcal{H}_\varphi \ltimes \mathcal{A}_{LDBA}$ of size $2^{\mathcal{O}(\vert \varphi \vert)}$. Because all the automata that make up an accepting sub-part for a given $M$ and $N$ must all accept a given word, each of the $2^{\mathcal{O}(\vert \varphi \vert)}$ accepting sub-parts has a base of size $(2^{\mathcal{O}(\vert \varphi \vert)})^3 = 2^{\mathcal{O}(\vert \varphi \vert)}$. This gives base for the full LDBA with a total size of $2^{\mathcal{O}(\vert \varphi \vert)} + 2^{\mathcal{O}(\vert \varphi \vert)} \cdot 2^{\mathcal{O}(\vert \varphi \vert)} = 2^{\mathcal{O}(\vert \varphi \vert)}$.
\end{proof}
\section{Saturated Sets of Past Formulae}
\label{appendix_saturated}
Consider a formula $\varphi$ to which the after-function is iteratively applied, such as with the extended after-function. On first application the after-function considers every subset of past subformulae of $\varphi$. This results in a formula in which the past subformulae originally appearing in $\varphi$ now appear in multiple different configurations. Hence, on a subsequent application the subsets of past subformulae considered are no longer subsets of $\psf(\varphi)$. In fact, a subformula such as $\ltlY^n p$, with $n$ nested yesterday operators, can appear in $2^n$ different configurations. This implies the need to consider doubly exponentially many sets relative to the number of past subformulae of $\varphi$. However, we have the following fact:
\begin{lemma}[\cite{azzopardi_et_al:LIPIcs.MFCS.2024.13}]
\label{lemma_contexts}
    Let $\varphi$ be a formula, $w$ a word and $t \in \mathbb{N}$. Then
    \begin{equation*}
    \begin{aligned}
    &1) \ \text{There exists a subset } C \subseteq \psf(\varphi) \text{ such that } C = \compseq{\varphi}{t}. \\
    &2) \ \compseq{\varphi}{t + 1}\rw{\compseq{\varphi}{t}} = \entcon{\varphi}{t + 1}.
    \end{aligned}
    \end{equation*}
\end{lemma}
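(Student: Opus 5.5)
The plan is to prove both parts by a simultaneous induction on $t$, unfolding the definitions of $\entcon{\varphi}{t}$ and $\compseq{\varphi}{t}$ and using Proposition~\ref{proposition_compseq}. The key observation is that rewriting by a set $\rw{C}$ only ever replaces a past operator by its weak or strong counterpart. So for every $\psi \in \psf(\varphi)$, the iterated rewrite $\psi\rw{\entcon{\varphi}{0}}\dots\rw{\entcon{\varphi}{t}}$ has the same syntax tree shape as $\psi$. It is determined by choosing, for each past subformula node, weak or strong. Consequently there is a bijection between $\psf(\varphi)$ and $\psf(\varphi\rw{\compseq{\varphi}{t}})$: $\psi \mapsto \psi\rw{\compseq{\varphi}{t}}$. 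I would first state and verify this bijection as a small auxiliary observation, by structural induction on $\psi$ using Definition~\ref{definition_rewriting_by_set}.

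For part 1), the set $\compseq{\varphi}{t}$ is by its definition a subset of $\psf(\varphi)$, namely those $\psi$ whose iterated rewrite is in weakened form. Part 1) then amounts to checking that this set also reproduces the iterated rewrite when used as a single rewrite. That is, $\psi\rw{\compseq{\varphi}{t}}$ must equal the iterated rewrite for every subformula, which is exactly Proposition~\ref{proposition_compseq}. I would argue it by structural induction on $\psi$: at each past node, the outermost operator of the iterated rewrite is weak iff the node (as an element of $\psf(\varphi)$) lies in $\compseq{\varphi}{t}$. The children agree by the inductive hypothesis.

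For part 2), I would compute both sides on an arbitrary $\xi \in \psf(\varphi)$. Membership of $\xi\rw{\compseq{\varphi}{t}}$ in $\compseq{\varphi}{t+1}\rw{\compseq{\varphi}{t}}$ holds iff $\xi \in \compseq{\varphi}{t+1}$. By definition, and by Proposition~\ref{proposition_compseq}, this holds iff $(\xi\rw{\compseq{\varphi}{t}})\rw{\entcon{\varphi}{t+1}}$ is weak. Since the top operator of $\zeta\rw{C}$ is weak exactly when $\zeta \in C$, this in turn holds iff $\xi\rw{\compseq{\varphi}{t}} \in \entcon{\varphi}{t+1}$. Here $\entcon{\varphi}{t+1}$ is a subset of $\psf(\varphi\rw{\compseq{\varphi}{t}})$, using part 1) together with the fact that $\varphi\rw{C_{\varphi,0}}\dots\rw{C_{\varphi,t}} = \varphi\rw{\compseq{\varphi}{t}}$. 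The bijection then shows that the two sets coincide elementwise.

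The main obstacle is bookkeeping with identity of subformulae. Distinct $\xi,\xi' \in \psf(\varphi)$ might collapse to the same rewritten formula, or, conversely, one syntactic subformula might occur in several positions with different configurations. Either situation would break the bijection and the "top operator is weak iff in $C$" step. I would handle this by treating subformulae as syntactic formulae, not occurrences. Since $\rw{C}$ is a function on formulae, equal subformulae are always rewritten equally. Distinct subformulae remain distinct because rewriting preserves shape modulo the weak/strong label and never identifies a strong operator with a weak one. If genuine collisions do arise, I would fall back on the saturation relation $\preceq$ to ensure that $\rw{\compseq{\varphi}{t}}$ is well defined on the image.
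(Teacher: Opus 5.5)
There is a genuine error in your proposal, though it is local and easy to repair. The auxiliary bijection you start from is false. Take $\varphi = \ltlY p \land \ltlWY p$ and a word with $p \notin \sigma_0$. Both past subformulae have weakening condition $p$, so $\entcon{\varphi}{1} = \varnothing$. The iterated rewrite is then $\ltlY p \land \ltlY p$, so $\compseq{\varphi}{1} = \varnothing$. Hence $\psf(\varphi\rw{\compseq{\varphi}{1}}) = \{\ltlY p\}$ has one element, while $\psf(\varphi)$ has two. This is the same collision the paper uses to motivate saturation.

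Your argument that distinct subformulae stay distinct does not follow. The rewrite preserves shape only \emph{modulo} the weak/strong labels, and it then overwrites those labels. So two subformulae that differ only in labels can be sent to the same formula. What does hold is surjectivity. Since $\rw{C}$ acts node by node, every element of $\psf(\varphi\rw{\compseq{\varphi}{t}})$ is $\xi\rw{\compseq{\varphi}{t}}$ for some $\xi \in \psf(\varphi)$.

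The failure of injectivity affects the right-to-left direction of your first step for part 2, ``$\xi\rw{\compseq{\varphi}{t}} \in \compseq{\varphi}{t+1}\rw{\compseq{\varphi}{t}}$ iff $\xi \in \compseq{\varphi}{t+1}$''. A priori, $\xi\rw{\compseq{\varphi}{t}}$ could coincide with $\xi'\rw{\compseq{\varphi}{t}}$ for some $\xi' \in \compseq{\varphi}{t+1}$ while $\xi \notin \compseq{\varphi}{t+1}$.

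The repair is already in your own chain of equivalences. By the proposition identifying $\psi\rw{\compseq{\varphi}{t}}$ with the iterated rewrite, $\xi \in \compseq{\varphi}{t+1}$ holds iff $(\xi\rw{\compseq{\varphi}{t}})\rw{\entcon{\varphi}{t+1}}$ is weak. That in turn holds iff $\xi\rw{\compseq{\varphi}{t}} \in \entcon{\varphi}{t+1}$. So membership in $\compseq{\varphi}{t+1}$ depends only on the image of $\xi$. Together with surjectivity this gives
\[
\compseq{\varphi}{t+1}\rw{\compseq{\varphi}{t}} = \{ \zeta \in \psf(\varphi\rw{\compseq{\varphi}{t}}) \mid \zeta \in \entcon{\varphi}{t+1} \} = \entcon{\varphi}{t+1},
\]
with no injectivity needed.

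Your saturation fallback addresses the wrong issue. $\rw{C}$ is always a well-defined function on formulae, so well-definedness on the image is never in doubt. What you need is that membership is constant on the fibres of $\xi \mapsto \xi\rw{\compseq{\varphi}{t}}$. That is exactly the equivalence above, or equivalently the immediate consequence $\compseq{\varphi}{t} \preceq \compseq{\varphi}{t+1}$ of the same proposition.

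The rest of the proposal is fine. Part 1 as literally stated is trivial, and the structural induction you give proves the substantive fact, namely the identification of the single rewrite with the iterated one. The paper itself gives no proof of this lemma; it cites Azzopardi et al.
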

The above lemma shows that every entailed set $\entcon{\varphi}{t}$ can be expressed as a combination of two subsets $C, C' \subseteq \psf(\varphi)$. Consider a formula $\varphi$ rewritten by a set of past formulae $C$, and assume that we want to further rewrite this formula by another set $C'$. To do this, we must first ``update'' $C'$ to match how past subformulae appear in $\varphi\rw{C}$; we do this by applying the same rewrite to the formulae in $C'$, obtaining the set $C'\rw{C}$. We would like to know that this has the same effect as directly rewriting $\varphi$ by $C'$, i.e. that $\varphi\rw{C}\rw{C'\rw{C}} = \varphi\rw{C'}$, as this allows us to control how past subformulae of $\varphi$ appear in rewrites, which is crucial in proving the required bounds on the automata's state spaces.

To illustrate how this fails, consider the formula $\varphi = \ltlY p \land \ltlWY p$ and sets $C = \varnothing$ and $C' = \{ \ltlY p \}$. Then $\varphi\rw{C}\rw{C'\rw{C}} = (\ltlY p \land \ltlY p)\rw{C'} = \ltlWY p \land \ltlWY p$, while $\varphi\rw{C'} = \ltlY p \land \ltlWY p$. The definition of saturation, repeated below, provides a sufficient condition for this property to hold. Observe that saturation is defined with respect to a given formula; the formula is clear from the context.
\begin{definition}[Saturated sets of past formulae]
Let $\varphi$ be a formula and let $C$ and  $C'$ be sets of past formulae. Then $C'$ is saturated with respect to $C$, denoted $C \preceq C'$, if for all $\xi, \xi' \in \psf(\varphi)$ such that $\xi\rw{C} = \xi'\rw{C}$, it holds that $\xi\rw{C'} = \xi'\rw{C'}$.
\end{definition}
\begin{example}
Consider the formula $\varphi = \ltlY p \land \ltlWY p$ and the sets $C = \varnothing$ and $C' = \{ \ltlY p \}$. We have that $(\ltlY p)\rw{C} = (\ltlWY p)\rw{C} = \ltlY p$. We conclude that $C'$ is not saturated with respect to $C$ since $\ltlY p \in C'$ but $\ltlWY p \notin C'$. However, the set $\{ \ltlY p, \ltlWY p \}$ is.
\end{example}
We can now establish that the desired property discussed above holds for chains of saturated sets:
\begin{lemma}[\cite{azzopardi_et_al:LIPIcs.MFCS.2024.13}]
\label{lemma_saturation}
Let $\varphi$ be a formula and $C_1, C_2, \dots, C_n$ a finite sequence of sets of past formulae such that $C_i \preceq C_{i+1}$ for all $i < n$. Then $\varphi\rw{C_1}\rw{C_2\rw{C_1}}\dots\rw{C_n\rw{C_{n - 1}}} = \varphi\rw{C_n}$.
\end{lemma}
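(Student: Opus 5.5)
We prove the statement by induction on the length $n$ of the chain. For $n = 1$ the claim is $\varphi\rw{C_1} = \varphi\rw{C_1}$, which is trivial. For the inductive step, the natural plan is to reduce everything to a two-set identity:
\begin{equation*}
\varphi\rw{C}\rw{C'\rw{C}} = \varphi\rw{C'} \qquad \text{whenever } C \preceq C'.
\end{equation*}
Suppose this identity holds. By the inductive hypothesis applied to $C_1, \dots, C_{n-1}$, the left-hand side of the lemma equals $\varphi\rw{C_{n-1}}\rw{C_n\rw{C_{n-1}}}$. Since $C_{n-1} \preceq C_n$, the two-set identity rewrites this to $\varphi\rw{C_n}$.

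The two-set identity is itself proved by structural induction on a subformula $\zeta$ of $\varphi$, showing that $\zeta\rw{C}\rw{C'\rw{C}} = \zeta\rw{C'}$.

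For atomic $\zeta$ both sides equal $\zeta$. The Boolean connectives and the future operators commute with rewriting, so those cases follow directly from the inductive hypothesis.

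The interesting case is a past formula $\zeta = \op\,\psi$ or $\zeta = \psi \op \xi$. Here $\zeta\rw{C}$ is either $\zeta'_\mathcal{W}$ or $\zeta'_\mathcal{S}$, where $\zeta'$ is $\zeta$ with its arguments rewritten by $C$. The outer rewrite by $C'\rw{C}$ then:
\begin{itemize}
\item recurses into the arguments, which by the inductive hypothesis become $\psi\rw{C'}$ (and $\xi\rw{C'}$);
\item and weakens the root if and only if $\zeta\rw{C} \in C'\rw{C}$, and otherwise strengthens it.
\end{itemize}
The earlier polarity set by $C$ is irrelevant, because the final $(\cdot)_\mathcal{W}$ or $(\cdot)_\mathcal{S}$ overrides it: both operations are idempotent and each overwrites the other on weak/strong pairs. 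Hence it suffices to show
\begin{equation*}
\zeta\rw{C} \in C'\rw{C} \iff \zeta \in C'.
\end{equation*}

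The direction ($\Leftarrow$) is immediate from the definition of $C'\rw{C}$.

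The direction ($\Rightarrow$) is where saturation is needed, and it is the main obstacle. From $\zeta\rw{C} \in C'\rw{C}$ we only get some $\zeta' \in C'$ with $\zeta'\rw{C} = \zeta\rw{C}$. When $\zeta, \zeta' \in \psf(\varphi)$, saturation $C \preceq C'$ gives $\zeta\rw{C'} = \zeta'\rw{C'}$. Since $\zeta' \in C'$, the root of $\zeta'\rw{C'}$ is weakened, so the root of $\zeta\rw{C'}$ is weakened as well. As $\zeta\rw{C'}$ is weak exactly when $\zeta \in C'$ (both are built from the same arguments), we conclude $\zeta \in C'$.

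Two technical points remain to check:
\begin{itemize}
\item The sets in the chain may contain past formulae outside $\psf(\varphi)$. I would restrict attention to $C \cap \psf(\varphi)$ and note that rewriting $\varphi$ only consults membership of subformulae of $\varphi$.
\item When $\zeta \neq \zeta'$ but they agree after rewriting, they differ only in weak/strong variants of past operators. The inference ``$\zeta\rw{C'}$ is weak iff $\zeta \in C'$'' must be justified from the definition, using that each pair $(\cdot)_\mathcal{W}/(\cdot)_\mathcal{S}$ maps the root operator of a formula to the same weak or strong version regardless of its original polarity.
\end{itemize}
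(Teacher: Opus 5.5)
\textbf{Core argument: correct, and the paper's route.} You reduce the lemma to the two-set identity $\varphi\rw{C}\rw{C'\rw{C}}=\varphi\rw{C'}$ and prove that identity by structural induction. Saturation handles the case where some $\zeta'\in C'$ collides with $\zeta$ after rewriting by $C$. This is correct, and it is the route the paper indicates: saturation is introduced as a sufficient condition for exactly this identity, which is then chained. Your second technical bullet is not actually an issue. That $\zeta\rw{C'}$ has a weak root iff $\zeta\in C'$ follows directly from the definition. Regardless of the input's polarity, $(\cdot)_\mathcal{W}$ always yields a root among $\ltlWY,\ltlWS,\ltlNS$, and $(\cdot)_\mathcal{S}$ always yields one among $\ltlY,\ltlS,\ltlNWS$. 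No comparison of how $\zeta$ and $\zeta'$ differ is needed.

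\textbf{First technical bullet: this step fails.} Restricting to $C\cap\psf(\varphi)$ is harmless for the inner rewrite. The outer rewrite, however, tests whether $\zeta\rw{C}\in C'\rw{C}$. An element of $C'$ outside $\psf(\varphi)$ can become equal to $\zeta\rw{C}$ after rewriting by $C$, and saturation, which only quantifies over $\psf(\varphi)$, says nothing about it. In fact the lemma is false in that generality. Take $\varphi=\ltlY p$, $C_1=\varnothing$, $C_2=\{\ltlWY p\}$. Saturation $C_1\preceq C_2$ holds vacuously, since $\psf(\varphi)=\{\ltlY p\}$. But $C_2\rw{C_1}=\{\ltlY p\}$, so $\varphi\rw{C_1}\rw{C_2\rw{C_1}}=\ltlWY p$, whereas $\varphi\rw{C_2}=\ltlY p$. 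The statement therefore has to be read with $C_i\subseteq\psf(\varphi)$. That is how the paper uses it: all sets are drawn from $\mathcal{P}(\psf(\varphi))$. Under that hypothesis $\zeta'\in\psf(\varphi)$ holds automatically, and your $(\Rightarrow)$ argument is complete as written. You should drop the restriction remedy and state this hypothesis explicitly.
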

Most importantly, the set of entailed past subformulae at a given point is saturated with respect to the composition of previous such sets, so that it is sufficient to only consider saturated sets:
\begin{proposition}[\cite{azzopardi_et_al:LIPIcs.MFCS.2024.13}]
\label{lemma_entailed_saturation}
    Let $\varphi$ be a formula, $w$ a word and $t \in \mathbb{N}$. Then $\compseq{\varphi}{t} \preceq \entcon{\varphi}{t + 1}$.
\end{proposition}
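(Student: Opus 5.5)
This is a cited result from Azzopardi et al., so I will reconstruct the argument rather than derive it from new machinery. The approach is to unfold the definitions. $\compseq{\varphi}{t}$ is the set of $\psi\in\psf(\varphi)$ whose iterated rewrite $\psi\rw{\entcon{\varphi}{0}}\cdots\rw{\entcon{\varphi}{t}}$ is weak. By Proposition~\ref{proposition_compseq} this iterated rewrite equals $\psi\rw{\compseq{\varphi}{t}}$. So I must show the following: whenever $\xi,\xi'\in\psf(\varphi)$ satisfy $\xi\rw{\compseq{\varphi}{t}}=\xi'\rw{\compseq{\varphi}{t}}$, we also have $\xi\rw{\entcon{\varphi}{t+1}}=\xi'\rw{\entcon{\varphi}{t+1}}$.

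I will argue by induction on the structure of $\xi$ (equivalently on the sum of the sizes of $\xi$ and $\xi'$). First observe that the rewrite $\placeholder\rw{C}$ only toggles the root operator between strong and weak forms and recurses into the children. Hence $\xi\rw{C}=\xi'\rw{C}$ forces $\xi$ and $\xi'$ to have the same operator modulo weak/strong. It also forces their corresponding children to have equal rewrites under $C$. Applying the induction hypothesis to the children (which are past subformulae when past, and otherwise identical, since non-past nodes with no past below are untouched), the children agree under $\entcon{\varphi}{t+1}$.

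It remains to show that the roots receive the same treatment, i.e.\ $\xi\in\entcon{\varphi}{t+1}$ iff $\xi'\in\entcon{\varphi}{t+1}$. This is the key step. By Definition~\ref{definition_set_entailed}, $\entcon{\varphi}{t+1}$ consists of the past subformulae of $\varphi\rw{\compseq{\varphi}{t}}$ whose weakening condition holds on $w_t$. Membership therefore depends only on the rewritten formula $\xi\rw{\compseq{\varphi}{t}}$, and not on $\xi$ itself. I would formalise this by treating $\entcon{\varphi}{t+1}$ as a subset of $\psf(\varphi\rw{\compseq{\varphi}{t}})$ and, following the convention in $\entseq{\varphi}{t}$ and Lemma~\ref{lemma_contexts}.2, reading ``$\xi\in\entcon{\varphi}{t+1}$'' as ``$\xi\rw{\compseq{\varphi}{t}}\in\entcon{\varphi}{t+1}$''. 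Under this reading, equal rewrites give equal membership. The weakening condition $\wc$ is computed from the operator and the children, which coincide after rewriting. Combining root treatment with children then yields $\xi\rw{\entcon{\varphi}{t+1}}=\xi'\rw{\entcon{\varphi}{t+1}}$.

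I expect the main obstacle to be exactly this mismatch of ``coordinates''. The set $\entcon{\varphi}{t+1}$ lives among subformulae of $\varphi\rw{\compseq{\varphi}{t}}$, whereas saturation is phrased via $\xi\in\psf(\varphi)$. I would resolve it using Lemma~\ref{lemma_contexts}: part 1 gives a representative subset of $\psf(\varphi)$, and part 2 gives $\compseq{\varphi}{t+1}\rw{\compseq{\varphi}{t}}=\entcon{\varphi}{t+1}$. With these, membership of $\xi$ is decided by its image $\xi\rw{\compseq{\varphi}{t}}$, which is the same for $\xi$ and $\xi'$.
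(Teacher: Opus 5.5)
The step you call the key step is a change of the statement, not a proof step. Under the paper's definition of $\preceq$, the rewrite $\xi\rw{\entcon{\varphi}{t+1}}$ weakens a past subformula $\zeta$ of $\xi$ exactly when $\zeta$ itself lies in $\entcon{\varphi}{t+1}$. No convention in the paper lets you replace this with $\zeta\rw{\compseq{\varphi}{t}}\in\entcon{\varphi}{t+1}$.

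Read literally, the proposition is false. Take $\varphi=\ltlY p\land\ltlWY p$ and a word with $p\notin\sigma_0$ and $p\in\sigma_1$. Then $\entcon{\varphi}{0}=\{\ltlWY p\}$ and $\entcon{\varphi}{1}=\varnothing$. Hence $\compseq{\varphi}{1}=\varnothing$ and $\varphi\rw{\compseq{\varphi}{1}}=\ltlY p\land\ltlY p$, so $\entcon{\varphi}{2}=\{\ltlY p\}$ (under either reading of its definition). This is exactly the pair $\varnothing$, $\{\ltlY p\}$ that the paper's own saturation example shows is \emph{not} saturated: $(\ltlY p)\rw{\{\ltlY p\}}=\ltlWY p$, while $(\ltlWY p)\rw{\{\ltlY p\}}=\ltlY p$.

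What your reading actually proves is $\compseq{\varphi}{t}\preceq\compseq{\varphi}{t+1}$. By the definition of $\compseq{\varphi}{t+1}$ and the proposition identifying $\psi\rw{\compseq{\varphi}{t}}$ with the iterated rewrite, $\zeta\in\compseq{\varphi}{t+1}$ iff the root of $\zeta\rw{\compseq{\varphi}{t}}\rw{\entcon{\varphi}{t+1}}$ is weak, iff $\zeta\rw{\compseq{\varphi}{t}}\in\entcon{\varphi}{t+1}$. This corrected statement is the one used downstream. The chain lemma for saturated sets and the index sets $J_i$ of the WC automaton both compare two subsets of $\psf(\varphi)$, and $\entcon{\varphi}{t+1}$ is recovered as $\compseq{\varphi}{t+1}\rw{\compseq{\varphi}{t}}$. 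So state it explicitly as a correction and prove the equivalence above. Citing part 2 of the contexts lemma by itself only gives one direction; the converse needs exactly the invariance you are trying to show.

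Once the statement is corrected, your structural induction is unnecessary. Applying the same proposition at $t+1$ gives $\xi\rw{\compseq{\varphi}{t+1}}=\bigl(\xi\rw{\compseq{\varphi}{t}}\bigr)\rw{\entcon{\varphi}{t+1}}$ for every $\xi\in\sff(\varphi)$. Two formulae with equal images under $\compseq{\varphi}{t}$ therefore have equal images under $\compseq{\varphi}{t+1}$.

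The induction as written also has a hole. You run it only over $\xi,\xi'\in\psf(\varphi)$ and assert that non-past children coincide, but a non-past child can contain past operators. For example, $\ltlY(q\land\ltlY p)$ and $\ltlY(q\land\ltlWY p)$ have the same image under $\varnothing$, yet their children $q\land\ltlY p$ and $q\land\ltlWY p$ are distinct. If you keep the induction, its hypothesis must be stated for all subformulae of $\varphi$, not only past ones.
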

\fi
\end{document}